\newcommand{\bE}{{\mathbb{E}}}
\newcommand{\bI}{{\mathbb{I}}}
\newcommand{\bN}{{\mathbb{N}}}
\newcommand{\bR}{{\mathbb{R}}}
\newcommand{\cA}{{\mathcal{A}}}
\newcommand{\cB}{{\mathcal{B}}}
\newcommand{\cC}{{\mathcal{C}}}
\newcommand{\cD}{{\mathcal{D}}}
\newcommand{\cE}{{\mathcal{E}}}
\newcommand{\cF}{{\mathcal{F}}}
\newcommand{\cH}{{\mathcal{H}}}
\newcommand{\cM}{{\mathcal{M}}}
\newcommand{\cN}{{\mathcal{N}}}
\newcommand{\cR}{{\mathcal{R}}}
\newcommand{\cV}{{\mathcal{V}}}
\newcommand{\cCM}{{\mathcal{C}'}}
\newcommand{\bsb}{{\boldsymbol{b}}}
\newcommand{\bsg}{{\boldsymbol{g}}}
\newcommand{\bsm}{{\boldsymbol{m}}}
\newcommand{\bsu}{{\boldsymbol{u}}}
\newcommand{\bsv}{{\boldsymbol{v}}}
\newcommand{\bsw}{{\boldsymbol{w}}}
\newcommand{\bsx}{{\boldsymbol{x}}}
\newcommand{\bsy}{{\boldsymbol{y}}}
\newcommand{\bsz}{{\boldsymbol{z}}}
\newcommand{\bsepsilon}{{\boldsymbol{\epsilon}}}
\newcommand{\bszero}{{\boldsymbol{0}}}
\newcommand{\bsbeta}{{\boldsymbol{\beta}}}
\newcommand{\bspsi}{{\boldsymbol{\psi}}}
\newcommand{\bsphi}{{\boldsymbol{\phi}}}
\newcommand{\bsvarphi}{{\boldsymbol{\varphi}}}
\newcommand{\bseta}{{\boldsymbol{\eta}}}
\newcommand{\bstheta}{{\boldsymbol{\theta}}}
\newtheorem{theorem}{Theorem}
\newtheorem{assumption}{Assumption}
\newtheorem{remark}{Remark}
\def\ps@pprintTitle{%
 \let\@oddhead\@empty
 \let\@evenhead\@empty
 \def\@oddfoot{}%
 \let\@evenfoot\@oddfoot}
\begin{document}

\biboptions{longnamesfirst}

\title{Accurate, scalable, and efficient Bayesian optimal experimental design with derivative-informed neural operators
}

\author[1]{Jinwoo Go}
\ead{jgo31@gatech.edu}

\author[1]{Peng Chen}
\ead{pchen402@gatech.edu}


\affiliation[1]{organization={School of Computational Science and Engineering},
addressline={Georgia Institute of Technology},
postcode={30332 GA},
city={Atlanta},
country={USA}}

\begin{abstract}
We consider optimal experimental design (OED) problems in selecting the most informative observation sensors to estimate model parameters in a Bayesian framework. Such problems are computationally prohibitive when the parameter-to-observable (PtO) map is expensive to evaluate, the parameters are high-dimensional, and the optimization for sensor selection is combinatorial and high-dimensional. To address these challenges, we develop an accurate, scalable, and efficient computational framework based on derivative-informed neural operators (DINO). We propose to use derivative-informed dimension reduction to reduce the parameter dimensions, based on which we train DINO with derivative information as an accurate and efficient surrogate
for the PtO map and its derivative. Moreover, we derive DINO-enabled efficient formulations in computing the maximum a posteriori (MAP) point, the eigenvalues of approximate posterior covariance, and three commonly used optimality criteria for the OED problems. Furthermore, we provide detailed error analysis for the approximations of the MAP point, the eigenvalues, and the optimality criteria. We also propose a modified swapping greedy algorithm for the sensor selection optimization and demonstrate that the proposed computational framework is scalable to preserve the accuracy for increasing parameter dimensions and achieves high computational efficiency, with an over 1000$\times$ speedup accounting for both offline construction and online evaluation costs, compared to high-fidelity Bayesian OED solutions for a three-dimensional nonlinear convection-diffusion-reaction example with tens of thousands of parameters. 
\end{abstract}
\begin{keyword}
Optimal experimental design \sep Bayesian inverse problem \sep Derivative-informed neural operators \sep Dimension reduction \sep Uncertainty quantification \sep Greedy algorithm
\end{keyword}

\maketitle

\section{Introduction}

Mathematical modeling and computational simulation become increasingly crucial for understanding and optimizing complex systems in many scientific and engineering fields. In practical applications, various uncertainties arise in the modeling and simulation either because of the lack of knowledge of the systems (epistemic uncertainty) or the inherent randomness in the systems (aleatoric uncertainty) that cannot be reduced. To reduce the epistemic uncertainty commonly represented as uncertain model parameters, it is essential to acquire knowledge or data of the systems by proper experimental design and calibrate the model parameters by either deterministic or statistical/Bayesian inference from the experimental or observational data. However, these experiments may be costly or take an enormous time or space. This is where optimal experimental design (OED) comes into play. OED designs experiments to acquire the most informative data under budget/time/space constraints \cite{Atkinson2011, pukelsheim2006optimal, AlexanderianPetraStadlerEtAl14, HuanMarzouk13, WuChenGhattas23a}.

In this work, we consider OED for Bayesian inverse problems constrained by large-scale models described by partial differential equations (PDEs) under infinite/high-dimensional uncertain parameters, see \cite{Alexanderian21, AlexanderianPetraStadlerEtAl14, AlexanderianGloorGhattas16, AlexanderianSaibaba18, BeckDiaEspathEtAl18, WuChenGhattas23a, WuChenGhattas23, WuOLearyRoseberryChenEtAl23} and references therein. For the design optimality criteria, we consider both variance-based alphabetic optimality \cite{AlexanderianPetraStadlerEtAl14, AlexanderianGloorGhattas16, AlexanderianSaibaba18}, and information theory-based optimality criterion known as mutual information or
expected information gain (EIG) \cite{BeckDiaEspathEtAl18,WuChenGhattas23a,WuOLearyRoseberryChenEtAl23}. The variance-based optimality criteria seek to minimize the uncertainty represented by summary statistics of the posterior covariance, such as A-optimality (the trace of posterior covariance) and D-optimality (the determinant of posterior covariance), while the information theory-based EIG seeks to maximize the average information gained from all possible realizations of the experimental data. Mathematically, the EIG is defined as the expectation of Kullback--Leibler divergence between the posterior and the prior distributions over the data or marginal likelihood distribution, which involves the integration with respect to both the posterior and data distributions.

Several challenges are faced in solving OED with these optimality criteria for PDE-constrained large-scale Bayesian inverse problems, 
including: 
(1) each evaluation of the optimality criteria requires the computation of a parameter-to-observable (PtO) map---the map from the uncertain parameters to the observational quantity of interest of the system---at many data and parameter samples; (2) each computation of the PtO map involves a solution of the model, which can be very expensive for large-scale PDE models; (3) the dimensions of the uncertainty parameter space and the experiment design space can be very high or infinite as considered in our work, which brings the curse-of-dimensionality, i.e., the computational/sampling complexity increase rapidly (exponentially) with respect to the dimensions, for many conventional methods; and (4) the combinatorial optimization with respect to the experimental design may be highly non-convex, as in the example of optimal placement of sensors from a large number of candidate sensor locations.

To address these challenges, various computational
methods have been developed over the last decade, including (1) sparse polynomial chaos approximation of PtO maps
\cite{HuanMarzouk13,HuanMarzouk14},
(2) Laplace approximation of non-Gaussian posterior distributions
\cite{AlexanderianPetraStadlerEtAl16,
BeckDiaEspathEtAl18, LongScavinoTemponeEtAl13a, LongMotamedTempone15,
BeckMansourDiaEspathEtAl20}, (3) low-rank approximation of prior-preconditioned Hessian of the data misfit term \cite{AlexanderianPetraStadlerEtAl14,AlexanderianGloorGhattas16,AlexanderianPetraStadlerEtAl16,SaibabaAlexanderianIpsen17,CrestelAlexanderianStadlerEtAl17,AttiaAlexanderianSaibaba18}, 
(4) reduced order models
\cite{Aretz-NellesenChenGreplEtAl20, AretzChenVeroy21,AretzChenDegenEtAl24} 
and deep neural networks \cite{WuOLearyRoseberryChenEtAl23} that serve as surrogate models of the PDEs or PtO maps, (5) variational inference and neural estimation in a fast approximation of the EIG or mutual information \cite{FosterJankowiakBinghamEtAl19,
KleinegesseGutmann20,GoIsaac22,ShenHuan2023}, and (6) efficient optimization methods based on gradient information \cite{AlexanderianPetraStadlerEtAl14,AlexanderianPetraStadlerEtAl16,HuanMarzouk14}, greedy algorithms \cite{JagalurMohanMarzouk21,helin2022edge,AretzChenVeroy21, AretzChenDegenEtAl24}, and swapping greedy algorithms \cite{WuChenGhattas23,WuChenGhattas23a,WuOLearyRoseberryChenEtAl23}. 

In this work, we consider the Laplace approximation of the posterior distribution \cite{AlexanderianPetraStadlerEtAl16,
BeckDiaEspathEtAl18, LongScavinoTemponeEtAl13a, LongMotamedTempone15,
BeckMansourDiaEspathEtAl20} and the low-rank approximation of the prior-preconditioned Hessian of the data misfit term \cite{AlexanderianPetraStadlerEtAl14,
AlexanderianGloorGhattas16,AlexanderianPetraStadlerEtAl16,
SaibabaAlexanderianIpsen17,CrestelAlexanderianStadlerEtAl17,AttiaAlexanderianSaibaba18}. Though achieving significant computational reduction, such approximations are still infeasible for large-scale PDE models with high-dimensional parameters. This is because (1) a Maximum-A-Posteriori (MAP) point needs to be computed in the Laplace approximation for each observed data and each given experimental design, which requires the solution of a large-scale PDE-constrained high-dimensional optimization problem; (2) the evaluation of the optimality criteria involves a low-rank factorization of the large-scale prior-preconditioned Hessian of the misfit at each MAP point; (3) both the MAP points and the low-rank factorizations need to be computed numerous times, e.g., millions of times for the combinatorial optimization of the high-dimensional experimental design by a greedy algorithm.

\textbf{Contributions.} We propose addressing the computational challenges in Bayesian OED by developing an accurate, scalable, and efficient computational framework based on the derivative-informed neural operator (DINO) \cite{OLeary-RoseberryChenVillEtAl24}. 
We propose to use derivative-informed dimension reduction to reduce the parameter dimensions and principal component analysis (PCA) to reduce the PtO dimensions with comparison to alternative methods. Based on the dimension reduction, we train DINO with derivative information as a neural operator surrogate of the PtO map and its derivative. This approach uses a relatively small amount of training data (PtO map and its Jacobian) to achieve high approximation accuracy of the neural operator. 
Moreover, we derive DINO-enabled efficient formulations in computing the maximum a posteriori (MAP) point, the eigenvalues of approximate posterior covariance, and three commonly used optimality criteria for the OED problems. 
Specifically, we compute the MAP point by solving a low-dimensional optimization problem, leveraging the DINO evaluation of the PtO map and its accurate Jacobian. We approximate the posterior covariance matrix and optimality criteria at the MAP point through an eigenvalue problem of the misfit Hessian in the reduced input and output subspace. 
We provide detailed error analysis for the approximations of the MAP point, the eigenvalues, and the optimality criteria by DINO under appropriate assumptions on the accuracy of the neural surrogate and dimension reduction.
 These computations facilitate rapid evaluations of optimality criteria for each data point and corresponding experimental design, accommodating any number of sensors. To this end, we propose modifying a swapping greedy algorithm developed in \cite{WuChenGhattas23a} to solve the Bayesian OED optimization problem, enhancing its optimality through a different initialization of the sensors using a greedy algorithm. We demonstrate our approach's high accuracy, scalability, and efficiency with two examples of two-dimensional (2D) and three-dimensional (3D) PDE models, involving hundreds of observables and tens of thousands of parameters. Specifically, we achieve an 80$\times$ speedup compared to the high-fidelity Bayesian OED solution for the 2D model and a 1148$\times$ speedup for the 3D model, including both the online evaluation and offline construction costs of the surrogates. 

\textbf{Related work.}
Our work is related to the recent work \cite{WuChenGhattas23a} in using Laplace and low-rank approximations and \cite{WuOLearyRoseberryChenEtAl23} in using surrogate models than other works listed above. We point out the significant difference that in \cite{WuChenGhattas23a}, the MAP point in the Laplace approximation is replaced by the corresponding prior sample point to save the prohibitive cost of computing the MAP point, which is a limitation since the optimality criteria may not be well correlated for the two points. Moreover, the authors assume that observation noise is uncorrelated to enable an offline-online decomposition of the low-rank approximation. In \cite{WuOLearyRoseberryChenEtAl23}, a neural network surrogate is constructed to compute only the PtO map and the EIG by a double loop Monte Carlo (DLMC) sampling. The surrogate is not built using the additional Jacobian information; thus, it is inaccurate and not used to compute the Jacobian and the related A-/D-optimality criteria, as well as the Laplace approximation-based EIG. It is limited to relatively large observation noise because of the pollution of the neural network approximation errors in the accurate computation of the likelihood function, which is crucial for the DLMC sampling. In contrast, our work is not subject to these limitations. However, we remark that our method is limited to the low-rankness or fast spectral decay of the PtO map and the Hessian of the data misfit term as in \cite{WuChenGhattas23a,WuOLearyRoseberryChenEtAl23}. The fast spectral decay is the intrinsic property of many high-dimensional problems. It has been exploited in various other contexts besides Bayesian OED, e.g., Bayesian inference \cite{Bui-ThanhBursteddeGhattasEtAl12, Bui-ThanhGhattasMartinEtAl13, ChenVillaGhattas17, ChenWuChenEtAl19, ChenWuGhattas21, ChenGhattas20,WangChenLi2022}, model reduction for sampling and deep learning \cite{BashirWillcoxGhattasEtAl08, ChenGhattas19a, AlgerChenGhattas20, OLeary-RoseberryVillaChenEtAl22}, and optimization under uncertainty \cite{AlexanderianPetraStadlerEtAl17,ChenVillaGhattas19, ChenHabermanGhattas21,LuoOLearyRoseberryChenEtAl23}.

The rest of the paper is organized as follows:
Section \ref{sec:BOED} presents Bayesian OED problems with three optimality criteria constrained by PDEs to infer infinite-dimensional parameters from noisy observations. Section \ref{sec:discretization} then discusses solving these Bayesian OED problems using high-fidelity approximation methods, including finite element discretization, Laplace approximation, and low-rank approximations of the posterior. In Section \ref{sec:neural}, we introduce DINO surrogates incorporating proper dimension reduction techniques, efficient computation of optimality criteria, and a modified swapping greedy algorithm to accelerate the solution of Bayesian OED problems. Section \ref{sec:ErrorAnalysis} provides error analyses for the approximations of the MAP point, the eigenvalues of the approximate posterior covariance, and the optimality criteria under appropriate assumptions. We present numerical results in Section \ref{sec:numerical}, validating these assumptions and demonstrating the accuracy, scalability, and efficiency of our proposed method in solving Bayesian OED problems constrained by large-scale PDEs. Finally, we conclude the paper in Section \ref{sec:conclusion}.

\section{Problem statement}
\label{sec:BOED}
We present Bayesian inverse problems constrained by PDEs with infinite-dimensional parameters, Bayesian OED problems in the context of optimal sensor placement, and the challenges for their solutions.

\subsection{Bayesian inverse problem}

We consider Bayesian inverse problems to infer uncertain parameters in mathematical models described by PDEs written in an abstract form as 
\begin{equation}\label{eq:PDE}
\cR(u,m) = 0 \text{  in } \cV', 
\end{equation}
where $u \in \cV$ is the state variable in a separable Banach space $\cV$ defined in physical domain $D\in\bR^{d}$, e.g., in dimension $d = 1,2,3$; $m$ represents the uncertain parameter. Specifically, we consider $m$ as a random field parameter in a Hilbert space $\cM$ defined in $D$, which is infinite-dimensional. The operator $\cR: \cV \times \cM \to \cV'$ represents the strong form of the PDE, where $\cV'$ is the dual of $\cV$. 

Let  $\cB: \cV \to \bR^{d}$ denote an observation operator that maps the state variable $u$ to a vector of $d_s$-dimensional observation functionals, with $d_s \in \bN$. We define a corresponding PtO map $\cF: \cM \to \bR^{d_s}$ as $\cF(m) := \cB(u(m))$. 
In particular, we consider the observation made in $d_s$ sensor locations $(\bsx_1, ..., \bsx_{d_s})$ with $\bsx_i \in D$ indicating the $i$-th sensor location in domain $D$. In the OED problem, we can only choose $r_s \in \bN$ out of the $d_s$ candidate sensors because of the budget constraint, where typically $d_s \gg r_s$. We use a design matrix $\xi \in \bR^{d_s \times r_s}$ to represent the selection of $r_s$ sensors out of the $d_s$ candidates, with $\xi_{ij}$ = 1 if the $j$-th sensor is selected at the $i$-th location $\bsx_i$, and $\xi_{ij} = 0$ if not. Moreover, we consider that one location can not be placed with more than one sensor. Mathematically, we define the admissible design matrix set $\Xi$ as 
\begin{equation}\label{eq:Xi}
    \Xi = \left\{\xi \in \bR^{d_s \times r_s}: \xi_{ij} \in \{0,1\}, \ \sum_{i=1}^{d_s}\xi_{ij} = 1, \ \sum_{j=1}^{r_s}\xi_{ij} \in \{0,1\} \right\}.
\end{equation}
For a given design matrix $\xi \in \Xi$, we denote the corresponding PtO map as $\cF_\xi = \xi^T\cF: \cM \to \bR^{r_s}$. We consider that the
noisy observation data $\bsy$ is corrupted with an additive observation noise $\bsepsilon$, given as
\begin{equation}\label{eq:data}
    \bsy = \cF_\xi(m) + \boldsymbol{\epsilon},
\end{equation}
where we assume the observation noise obeys Gaussian distribution $\boldsymbol{\epsilon} \sim \cN(\bszero,\Gamma_\text{noise})$ with zero mean and covariance $\Gamma_\text{noise} \in \bR^{r_s \times r_s}$. Under this assumption, the likelihood of the data $\bsy$ for a parameter $m$ and design matrix $\xi$ is given by  
\begin{equation}\label{eq:clikelihood}
  \pi_\text{like}(\bsy|m,\xi) = \frac{1}{\sqrt{(2\pi)^r |\Gamma_\text{noise}|}} \exp\left(-\Phi(\bsy, m, \xi)\right),
\end{equation}
where $|\Gamma_\text{noise}|$ denotes the determinant of $\Gamma_\text{noise}$, and the potential function $\Phi(m, \bsy, \xi)$ represents a data-model misfit term given as 
\begin{equation}
  \Phi(\bsy, m, \xi) = \frac{1}{2}||\bsy - \cF_\xi(m)||^2_{\Gamma_\text{noise}^{-1}} = \frac{1}{2} (\bsy - \cF_\xi (m))^T \Gamma_\text{noise}^{-1} (\bsy - \cF_\xi (m)).
\end{equation}

Given data $\bsy$ in \eqref{eq:data}, the inverse problem is to infer the parameter $m$ constrained by the model \eqref{eq:PDE}. In a Bayesian framework, the inference is to update the probability distribution of the parameter $m$ from its prior distribution $\mu_\text{prior}$ to the posterior distribution $\mu_\text{post}$ conditioned on the observation data $\bsy$ and design matrix $\xi$.
We consider a Gaussian prior $\mu_\text{prior} = \cN(m_\text{prior},\cC_\text{prior})$ with mean $m_\text{prior}$ and covariance operator $\cC_\text{prior}:\cM \rightarrow \cM'$. 
In particular, we use a common choice of Mat\'ern covariance operator $\cC_\text{prior} = \cA^{-\alpha}$ \cite{VillaPetraGhattas21}, where $\cA=-\gamma\Delta + \kappa I$ with homogeneous Neumann boundary condition, $\Delta$ is the Laplacian, $I$ denotes the identity, and the constants $\alpha, \gamma, \kappa > 0$ collectively determine the smoothness, correlation, and variance of the parameter. We require $\alpha > d/2$ for the covariance to be of trace class \cite{stuart2010inverse}, where $d$ is the dimension of the domain $D$. Given this prior, Bayes' rule states that the posterior of the parameter written in the form of Radon--Nikodym derivative satisfies
\begin{equation}
    \frac{d\mu_\text{post}^{\bsy,\xi}(m)}{d\mu_\text{prior}(m)} = \frac{1}{\pi(\bsy|\xi)} \pi_\text{like}(\bsy|m,\xi),
\end{equation}
where $\pi(\bsy|\xi)$ is known as model evidence or marginal likelihood distribution given by 
\begin{equation}\label{eq:evidence}
  \pi(\bsy|\xi) = \int_\cM \pi_\text{like}(\bsy|m,\xi) d\mu_\text{prior}(m),
\end{equation}
which is typically computationally intractable in infinite or high dimensions.

\subsection{Bayesian optimal experimental design}
The Bayesian OED problem in our context seeks the optimal placement of $r_s$ sensors out of $d_s$ candidate locations that provide maximum information to infer the parameter. For this task, we consider several widely used optimality criteria, including
the variance-based A-optimality criterion and D-optimality criterion, and the information theory-based EIG or mutual information. 

The variance-based A-/D-optimality criteria are defined for a Laplace approximation of the posterior distribution 
given by 
$\hat{\mu}_\text{post}^{\bsy, \xi} =  \cN(m_\text{MAP}^{\bsy, \xi},\cC_\text{post}^{\bsy, \xi})$. Here, $m_\text{MAP}^{\bsy, \xi}$ denotes the MAP point defined as 
\begin{equation}\label{eq:c-map}
    m_\text{MAP}^{\bsy, \xi} := \arg \min_{m\in \cCM}\frac{1}{2}||\bsy - \cF_\xi(m)||^2_{\Gamma_\text{noise}^{-1}}+\frac{1}{2}||m-m_\text{prior}||^2_{\cC_\text{prior}^{-1}},
\end{equation}
where $\cCM$ = range$(\cC_\text{prior}^{1/2})$ is the Cameron–Martin space \cite{da2006introduction} of the Gaussian measure $\mu_\text{prior}$, which is a subspace of $\cM$.
The covariance operator $\cC_\text{post}$ is given by
\begin{equation}\label{eq:c-post-cov}
  \cC_\text{post}^{\bsy, \xi} = (\cH_\text{misfit}^{\bsy, \xi}(m_\text{MAP}^{\bsy, \xi}) + \cC_\text{prior}^{-1})^{-1},    
\end{equation}
where $\cH_\text{misfit}^{\bsy, \xi}(m_\text{MAP}^{\bsy, \xi})$ is the Hessian of data-model misfit term $\Phi(m,\bsy,\xi)$ with respect to $m$ evaluated at $m_\text{MAP}^{\bsy, \xi}$. A common choice for $\cH_\text{misfit}^{\bsy, \xi}(m_\text{MAP}^{\bsy, \xi})$ in infinite-/high-dimensional Bayesian OED is to use its Gauss--Newton approximation 
\begin{equation}\label{eq:c-hmisfit}
  \cH_{\text{misfit}}^{\bsy, \xi}(m_{\text{MAP}}^{\bsy, \xi}) \approx \cH_{\text{misfit}}^{\text{GN}}(m_{\text{MAP}}^{\bsy, \xi}) := \nabla_m \cF_\xi(m_{\text{MAP}}^{\bsy, \xi})^T \Gamma_{\text{noise}}^{-1} \nabla_m \cF_\xi(m_{\text{MAP}}^{\bsy, \xi}),
\end{equation}
with the Jacobian $\nabla_m \cF_\xi(m_{\text{MAP}}^{\bsy, \xi}) : \cM \to \bR^{r_s}$ evaluated at the MAP point $m_{\text{MAP}}^{\bsy, \xi}$. This corresponds to a Fisher information matrix-based optimality criterion \cite{schraudolph2002fast}. 
The Gauss--Newton approximation of the Hessian is exact if the PtO map $\cF_\xi(m)$ is linear with respect to $m$, or it can be considered as a linear approximation of $\cF_\xi(m)$ at the MAP point, i.e.,
\begin{equation}
    \cF_\xi(m) \approx \cF_\xi(m_{\text{MAP}}^{\bsy, \xi}) + \nabla_m \cF_\xi(m_{\text{MAP}}^{\bsy, \xi})(m - m_{\text{MAP}}^{\bsy, \xi}).
\end{equation} 
We note that for the A-/D-optimality criteria, we need to obtain the MAP point $m_{\text{MAP}}^{\bsy, \xi}$ by solving the optimization problem \eqref{eq:c-map} and compute the Jacobian $\nabla_m \cF_\xi(m_{\text{MAP}}^{\bsy, \xi})$ at the MAP point, both of which are constrained by the PDE \eqref{eq:PDE} and depend on the realization of the data $\bsy$ and the design matrix $\xi$.

To this end, we consider the Bayesian OED problem with the A-optimality criterion defined as \cite{AlexanderianGloorGhattas16}
\begin{equation}\label{eq:c-a-opt} 
  \xi^*_A = \arg\min_{\xi \in \Xi}\bE_{\pi(\bsy|\xi)}[\text{trace}(\cC_\text{post}^{\bsy, \xi}(m^{\bsy, \xi}_{\text{MAP}}))],
\end{equation}
where trace$(\cdot)$ denotes the trace of the covariance operator $\cC_\text{post}^{\bsy, \xi}(m^{\bsy, \xi}_{\text{MAP}})$ defined in \eqref{eq:c-post-cov}, with the superscript representing the dependence on the data $\bsy$ observed at the sensors determined by the design matrix $\xi$. The expectation $\bE_{\pi(\bsy|\xi)}[\cdot]$ is taken with respect to the distribution of the data $\bsy$ conditioned on $\xi$. 
This expectation can be evaluated by the double integral $\bE_{m\sim \mu_\text{prior}}\bE_{\bsy \sim \pi_\text{like}(\cdot|m, \xi)}[\cdot]$.

For the D-optimality criterion, we use the definition from \cite{AlexanderianGloorGhattas16} as

\begin{equation}\label{eq:c-d-opt}
  \xi^*_D = \arg\max_{\xi \in \Xi} \bE_{\pi(\bsy|\xi)}[\log \text{det}(\mathcal{I} + \tilde{\cH}^{\bsy, \xi}_{\text{misfit}})],
\end{equation}
with the prior-preconditioned Hessian of the misfit term $\tilde{\cH}^{\bsy, \xi}_{\text{misfit}} = \cC_{\text{prior}}^{1/2} \cH^{\bsy, \xi}_{\text{misfit}} (m^{\bsy, \xi}_{\text{MAP}})\cC_{\text{prior}}^{1/2}$ and the identity operator $\mathcal{I}$,
where det$(\cdot)$ denotes the determinant. Note that 
\begin{equation}
    \mathcal{I} + \tilde{\cH}^{\bsy, \xi}_{\text{misfit}} = \cC_\text{prior}^{1/2}\left(\cH_\text{misfit}^{\bsy, \xi}(m_\text{MAP}^{\bsy, \xi}) + \cC_\text{prior}^{-1}\right)\cC_\text{prior}^{1/2} = \cC_\text{prior}^{1/2}\left(\cC_\text{post}^{\bsy, \xi}(m^{\bsy, \xi}_{\text{MAP}})\right)^{-1}\cC_\text{prior}^{1/2},
\end{equation}
the prior-preconditioned inverse of the posterior covariance operator. 

    
Alternatively, the Bayesian OED with the information theory-based EIG seeks to maximize the expected information gained from the sensors, i.e.,
\begin{equation}\label{eq:c-eig}
  \xi^*_\text{EIG} = \arg\max_{\xi\in\Xi}\bE_{\pi(\bsy|\xi)}\left[D_\text{KL}(\mu_\text{post}^{\bsy,\xi}||\mu_\text{prior})\right],
\end{equation}
where the information gain is measured by the Kullback--Leibler (KL) divergence between the posterior distribution $\mu_\text{post}^{\bsy,\xi}$ and the prior distribution $\mu_\text{prior}$, defined as 
\begin{equation}
  D_\text{KL}(\mu_\text{post}^{\bsy,\xi}||\mu_\text{prior}) = \bE_{m \sim \mu_\text{post}^{\bsy, \xi}} \left[\log \left(\frac{d\mu_\text{post}^{\bsy, \xi}(m)}{d\mu_\text{prior}(m)} \right)\right].
\end{equation} 

Under the assumption of Gaussian prior $\mu_\text{prior} = \cN(m_\text{prior}, \cC_\text{prior})$ and using the Laplace approximation of the posterior $\hat{\mu}^{\bsy,\xi}_\text{prior} = \cN(m^{\bsy,\xi}_\text{MAP}, \cC^{\bsy,\xi}_\text{post})$, the analytical form of the KL divergence is given by \cite{AlexanderianGloorGhattas16}
\begin{equation}\label{eq:c-kl}
  \begin{split}
  D_\text{KL}(\hat{\mu}^{\bsy,\xi}_\text{post}||\mu_\text{prior}) = \frac{1}{2}\log \text{det}(I + \tilde{\cH}^{\bsy, \xi}_{\text{misfit}}) - \frac{1}{2}\text{trace}(\cH^{\bsy,\xi}_\text{misfit} \cC^{\bsy,\xi}_\text{post}) + \frac{1}{2}||m^{\bsy,\xi}_\text{MAP} - m_\text{prior}||^2_{\cC_\text{prior}^{-1}}.
  \end{split}
\end{equation}

Several computational challenges arise in solving the Bayesian OED problems with these optimality criteria, including: (1) the dimension of the uncertain parameter space and the design space can be infinite or very high, which brings the curse of dimensionality, i.e., the computational/sampling complexity increases exponentially with respect to the dimensions; (2) each evaluation of the optimality criteria (trace \eqref{eq:c-a-opt}, determinant \eqref{eq:c-d-opt}, EIG \eqref{eq:c-eig}) requires numerous expensive PDE solves for the computation of the PtO map $\cF_\xi$ and its Jacobian $\nabla_m \cF_\xi$; (3) the optimization with respect to experimental design $\xi$ is combinatorial and high-dimensional, which may require a large number of evaluations of the optimality criteria.


\section{High-fidelity approximations}\label{sec:discretization}
In this section, we follow \cite{AlexanderianPetraStadlerEtAl16, WuChenGhattas23a, VillaPetraGhattas21, IsaacPetraStadlerEtAl15} and present high-fidelity approximations using a finite element method to discretize the infinite-dimensional parameter field and solve the Bayesian OED problem based on Laplace and low-rank approximations. We highlight the specific computational challenges of high-fidelity approximations as the motivation and basics to develop our computational framework in Section \ref{sec:neural}. We also use the high-fidelity approximations of all the related quantities to demonstrate our proposed method's accuracy, scalability, and efficiency in Section \ref{sec:numerical}.

\subsection{High-fidelity discretization}\label{sec:high-fidelity-discretization}
The random field parameter $m$ lives in the infinite-dimensional Hilbert space $\cM$. We discretize $m$ by a finite element method in a finite-dimensional space $\cM_{d_\bsm} \subset \cM$ of dimension $d_\bsm$ with piecewise continuous Lagrange polynomial basis $\{\phi_j\}_{j=1}^{d_\bsm}$ such that 
$\phi_j(x_i) = \delta_{ij}$. Let $h$ denote the mesh size of the discretization. We approximate the parameter by $m_h \in \cM_{d_\bsm}$ defined as
\begin{equation}\label{eq:mFEM}
    m_h(x) = \sum_{j=1}^{d_\bsm}m_j\phi_j(x),
\end{equation}
where $\bsm = (m_1, m_2, ..., m_{d_\bsm})^T \in \bR^{d_\bsm}$ is a coefficient vector of $m_h$. The prior distribution for this discretized parameter $\bsm$ is Gaussian $\mu(\bsm) = \mathcal{N}(\bsm_\text{prior},\Gamma_\text{prior})$ with mean vector $\bsm_{\text{prior}}$ and covariance matrix $\Gamma_\text{prior}$ as discretized from the mean $m_{\text{prior}}$ and the covariance operator $\cC_\text{prior} = \mathcal{A}^{-\alpha}$ with $\mathcal{A} = -\gamma \Delta + \kappa I$. Let $M$ and $A$ denote the finite element mass matrix and stiffness matrix given by 
\begin{equation}\label{eq:mass}
  M_{ij} = \int_D \phi_i(x) \phi_j(x) dx, \ i,j = 1,...,d_\bsm,
\end{equation}
and
\begin{displaymath}
    A_{ij} = \int_D (\gamma \nabla \phi_i(x) \cdot \nabla \phi_j(x) + \kappa \phi_i(x) \phi_j(x)) dx, \ i,j = 1,...,d_\bsm,
\end{displaymath}
then the inverse of the covariance matrix $\Gamma_\text{prior} $ for $\alpha = 2$ is given by \cite{VillaPetraGhattas21}
\begin{equation}\label{eq:GammaInv}
    \Gamma^{-1}_\text{prior} = AM^{-1}A.
\end{equation}

\subsection{Laplace and low-rank approximations}
By the high-dimensional discretization, we can write the Laplace approximation of the posterior distribution of $\bsm$ as $\mathcal{N}(\bsm_\text{MAP}^{\bsy, \xi},\Gamma^{\bsy, \xi}_\text{post}(\bsm_\text{MAP}^{\bsy, \xi}))$, where the MAP point $\bsm_\text{MAP}^{\bsy, \xi}$ is obtained from solving the following finite/high-dimensional optimization problem corresponding to \eqref{eq:c-map} in the function space  
\begin{equation}\label{eq:MAP}
     \bsm_\text{MAP}^{\bsy, \xi} = \arg \min_{\bsm\in \bR^{d_\bsm}}\frac{1}{2}||\bsy- F_\xi(\bsm)||^2_{\Gamma_\text{noise}^{-1}}+\frac{1}{2}||\bsm-\bsm_\text{prior}||^2_{\Gamma_\text{prior}^{-1}},
\end{equation}
with $F_\xi$ denoting a map from $\bsm$ to the observables corresponding to a discrete version of the PtO map $\mathcal{F}_\xi$ for a given sensor selection $\xi$. To solve this optimization problem, we can employ, e.g., an inexact Newton--CG algorithm, which requires the computation of the derivative of $F_\xi(\bsm)$ with respect to $\bsm$.

The posterior covariance matrix $\Gamma_\text{post}^{\bsy, \xi}(\bsm_\text{MAP}^{\bsy, \xi})$ evaluated at $\bsm_\text{MAP}^{\bsy, \xi}$ is given by
\begin{equation}
    \Gamma^{\bsy, \xi}_\text{post}(\bsm_\text{MAP}^{\bsy, \xi}) = ({H}^{\bsy, \xi}_\text{misfit}(\bsm_\text{MAP}^{\bsy, \xi}) + \Gamma_\text{prior}^{-1})^{-1},\label{eq:disc-post-cov}
\end{equation}
where $H^{\bsy, \xi}_\text{misfit}(\bsm_\text{MAP}^{\bsy, \xi})$ is a discrete approximation of $\cH^{\bsy, \xi}_{\text{misfit}}(m_{\text{MAP}})$ in \eqref{eq:c-hmisfit} given by 
\begin{equation}\label{eq:GN-hessian}
H^{\bsy, \xi}_\text{misfit}(\bsm_\text{MAP}^{\bsy, \xi}) \approx H^{\text{GN}}_\text{misfit}(\bsm_\text{MAP}^{\bsy, \xi}) = (\nabla_\bsm F_\xi(\bsm_{\text{MAP}}^{\bsy, \xi}))^T \Gamma_{\text{noise}}^{-1} \nabla_\bsm F_\xi(\bsm_{\text{MAP}}^{\bsy, \xi}).
\end{equation}

The posterior covariance matrix $\Gamma_\text{post}$ is high-dimensional and involves the Jacobian $\nabla_\bsm F_\xi$ and its transpose which requires solving the PDE and its adjoint. To efficiently compute the posterior covariance, we employ a low-rank decomposition by first solving a generalized eigenvalue problem 
\begin{equation}\label{eq:disc-gen-eig}
  {H}^{\bsy, \xi}_\text{misfit}(\bsm_\text{MAP}^{\bsy, \xi})\bsw_i = \lambda_i\Gamma_\text{prior}^{-1}\bsw_i, \ \ \   i = 1,...,r, \\
\end{equation}
with eigenpairs $(\lambda_i, \bsw_i)$, $i = 1, \dots, r$, corresponding to the largest $r$ eigenvalues $\lambda_1 \geq \cdots \geq \lambda_r$, and $\bsw_i^T \Gamma_\text{prior}^{-1} \bsw_j = \delta_{ij}$ with $i,j = 1, \dots, r$. It requires $O(r)$ linearized PDE solves using a randomized SVD algorithm \cite{VillaPetraGhattas21}, see \cite{WuChenGhattas23a} for more details. Let $\Lambda_r = \text{diag}(\lambda_1, \dots, \lambda_r)$, $W_r = (\bsw_1, \dots, \bsw_r)$, and $V_r = \Gamma_\text{prior}^{-1/2} W_r$, we have the low-rank approximation from \eqref{eq:disc-gen-eig} 
\begin{equation}\label{eq:p-pcon-hess-r}
  \tilde{H}_\text{misfit}^{\bsy, \xi} (\bsm_\text{MAP}^{\bsy, \xi}) = \Gamma_\text{prior}^{1/2}{H}^{\bsy, \xi}_\text{misfit}(\bsm_\text{MAP}^{\bsy, \xi}) \Gamma_\text{prior}^{1/2} \approx V_r\Lambda_r V_r^T, 
\end{equation}
which is accurate for a relatively small $r$ as long as the eigenvalues decay fast with $\lambda_{r+1} \ll 1$. 
To this end, the posterior covariance can be approximated by 
\begin{equation}
\label{eq:cov-approx}
  \begin{split}
   \Gamma^{\bsy, \xi}_\text{post}(\bsm_\text{MAP}^{\bsy, \xi}) &= \Gamma_\text{prior}^{1/2}(I + \tilde{H}_\text{misfit}^{\bsy, \xi} (\bsm_\text{MAP}^{\bsy, \xi}))^{-1}\Gamma_\text{prior}^{1/2}\\
   & \approx \Gamma_\text{prior}^{1/2}(I + V_r\Lambda_r V_r^T)^{-1} \Gamma_\text{prior}^{1/2} \\
   & = \Gamma_\text{prior}^{1/2}(I - V_rD_rV_r^T)\Gamma_\text{prior}^{1/2}\\ 
   & = \Gamma_\text{prior} - W_{r}D_{r}W_{r}^T,
  \end{split} 
\end{equation}
where we factorized $\Gamma_{\text{prior}}^{1/2}$ in the first equality, used low-rank approximation \eqref{eq:p-pcon-hess-r} in the inequality, employed Sherman--Morrison--Woodbury formula for the second equality with $D_r = \text{diag}(\lambda_1/(1+\lambda_1), \dots, \lambda_r/(1+\lambda_r))$, and used the definition of $V_r$ with $\Gamma_\text{prior}^{1/2}V_r = W_r$ in the last equality. 

\subsection{High-fidelity Bayesian OED}
\label{sec:Optimality}
By applying the Laplace and low-rank approximations of the posterior, we can compute the approximate A-optimality optimal design corresponding to \eqref{eq:c-a-opt} as 
\begin{equation}\label{eq:a-final-approx-W}
    \begin{split}
  \xi_A^* 
  = \arg\min_\xi \bE_{\pi(\bsy|\xi)} \left[\text{trace}(\Gamma_\text{prior} - W_rD_{r}W_r^T)\right] = \arg\max_\xi \bE_{\pi(\bsy|\xi)} \left[\text{trace}(W_rD_{r}W_r^T)\right],    
    \end{split}
\end{equation}
where the equality is due to that $\Gamma_{\text{prior}}$ does not depend on $\bsy$ and $\xi$. For simplicity, we consider a simplified A-optimality criterion as $\text{trace}(V_r D_r V_r^T) = \text{trace} (D_r V_r^T V_r) = \text{trace} (D_r)$ as $V_r^T V_r = I_r$, so that 
\begin{equation}\label{eq:a-final-approx}
    \xi_A^* = \arg\max_{\xi} \bE_{\pi(\bsy|\xi)} \left[\sum_{i = 1}^r \frac{\lambda_i}{1+ \lambda_i}\right].
\end{equation}
Note that this definition of the A-optimality is consistent with the definition of the D-optimality \eqref{eq:c-d-opt} as to minimize the trace and log determinant of the same quantity $\Gamma_{\text{prior}}^{-1/2} \Gamma^{\bsy, \xi}_\text{post}(\bsm_\text{MAP}^{\bsy, \xi}) \Gamma_{\text{prior}}^{-1/2}$, respectively, which is equivalent to maximize the trace of $V_r D_r V_r^T$ and log determinant of $I + V_r \Lambda_r V_r^T$ by \eqref{eq:cov-approx}. This leads to the optimal design of the approximate D-optimality as 

\begin{equation}\label{eq:d-final-approx}
\begin{split}
  \xi_D^* 
  = \arg\max_\xi \bE_{\pi(\bsy|\xi)} \left[ \log \text{det}(I + V_r \Lambda_r V_r^T) \right] = \arg\max_{\xi} \bE_{\pi(\bsy|\xi)} \left[\sum_{i=1}^r \log (1+\lambda_i)\right].
\end{split}
\end{equation}




For the EIG criteria defined in \eqref{eq:c-eig} with the Laplace approximation and \eqref{eq:c-kl}, we can compute optimal design of the approximate EIG with the following form \cite{WuChenGhattas23a} 
\begin{equation}\label{eq:eig-final-approx}
  \xi_{\text{EIG}}^{*} = \arg\max_\xi \bE_{\pi(\bsy|\xi)} \left[\sum_{i=1}^r\log (1 + \lambda_i) - \sum_{i=1}^r \frac{\lambda_i}{1+\lambda_i}  + ||\bsm^{\bsy, \xi}_\text{MAP}-\bsm_\text{prior}||^2_{\Gamma_\text{prior}^{-1}} \right].
\end{equation} 
Note that the first two terms correspond to the D-optimality and A-optimality criteria. 

\subsection{Computational challenges}
We remark that both the eigenpairs $(\lambda_i, \bsw_i)$, $i = 1, \dots, r$, and the MAP point $\bsm^{\bsy, \xi}_{\text{MAP}}$ depend on the data sample $\bsy$ and the design $\xi$. The conditional expectation $\bE_{\pi(\bsy|\xi)}$ in the above three optimal design problems can be computed using sample average approximation (SAA) by drawing samples $\bsy^{(n)} = F_\xi(\bsm^{(n)}) + \bsepsilon^{(n)}$, $n = 1, \dots, N_s$, with $\bsm^{(n)}$ drawn from the prior distribution of $\bsm$ and $\varepsilon^n$ drawn from the observation noise distribution. Then we need to solve the optimization problem \eqref{eq:MAP} for the MAP point and the generalized eigenvalue problem \eqref{eq:disc-gen-eig} for the eigenpairs at each of the $N_s$ samples, which becomes computationally prohibitive for large $N_s$ and large-scale PDEs to solve. We point out that the authors of \cite{WuChenGhattas23a} replace the MAP point $\bsm^{\bsy^{(n)}, \xi}_\text{MAP}$ by the prior sample $\bsm^{(n)}$, which avoids solving the optimization problem \eqref{eq:MAP}. Moreover, they assume uncorrelated observation noise $\bsepsilon^{(n)}$ in each observation dimension, by which they only need to solve a generalized eigenvalue problem in high dimensions once and then solve $N_s$ eigenvalue problems of size $r \times r$ for each design $\xi$ in the design optimization. However, the MAP point may differ from the prior sample, especially for a small number of observables, which can make their method less effective. The computational method in solving the generalized eigenvalue problem also becomes not applicable for correlated observation noise. In this work, we propose to efficiently compute both the MAP point and the eigenpairs by using derivative-enhanced surrogate models, which completely avoids solving the PDE models during the optimization for the optimal design once the surrogate models are constructed. Moreover, we do not need the observation noise (as required in \cite{WuChenGhattas23a}) to be uncorrelated to use the surrogate models.

\section{Bayesian OED with DINO} 
\label{sec:neural}
In this section, we present our approach for scalable and efficient computation of the optimality criteria based on DINO \cite{OLeary-RoseberryChenVillEtAl24} and introduce a swapping greedy algorithm modified from \cite{WuChenGhattas23a} to optimize the experimental design. To achieve the scalability of the DINO approximation with accurate derivative evaluation, we employ derivative-informed dimension reduction methods \cite{OLeary-RoseberryVillaChenEtAl22, zahm2020gradient} to reduce the dimensions of both the input parameters and output observables. For the efficient computation of the optimality criteria, we formulate both the optimization of the MAP point problem \eqref{eq:MAP} and the generalized eigenvalue problem \eqref{eq:disc-gen-eig} in the derivative-informed subspaces. Additionally, we present a complexity analysis of the high-fidelity and DINO approximations regarding the number of PDE solves for the full Bayesian OED optimization.

\subsection{Derivative-informed dimension reduction} 
\label{sec:DIDR}

The random field parameter $m$ defined in the function space $\mathcal{M}$ is infinite-dimensional. By the high-fidelity discretization of $m$ in Section \ref{sec:high-fidelity-discretization}, the dimension $d_\bsm$ of discrete parameters $\bsm$ is often remarkably high. Meanwhile, the dimension $d_F$ of candidate sensors or observables $F$ is also set high to select the most informative sensors. The high dimensionality of input parameters and output observables makes building an accurate neural network surrogate of the PtO map challenging without a sufficiently large amount of training data and a significant training cost. To address this challenge, we present derivative-informed dimension reduction and a commonly used {\color{black}principal component analysis} (PCA) for comparison.



In both the computation of the MAP point by a Newton-CG optimizer to solve \eqref{eq:MAP} and the evaluation of the Gauss--Newton Hessian in \eqref{eq:GN-hessian}, we need to compute the derivative $\nabla_\bsm F(\bsm)$. This motivates a derivative-informed dimension reduction for the input parameter $\bsm$. Specifically, let 
\begin{equation}\label{eq:meta}
    \bsm(\bseta) = \bsm_{\text{prior}} + \Gamma_{\text{prior}}^{1/2} \bseta
\end{equation}
with the whitened noise $\bseta \sim \mathcal{N}(0, I)$. We can formally write the eigenvalue decomposition for input dimension reduction as  
\begin{equation}\label{eq:eigenias}
  \bE_\bseta \left[
    \nabla_\bseta F^T(\bsm(\bseta)) \,  \nabla_\bseta F(\bsm(\bseta)) 
  \right] \bsvarphi^{(i)} = \lambda_i \bsvarphi^{(i)}, \quad i = 1, \dots, r,
\end{equation}
with $\lambda_1 \geq \cdots \geq \lambda_r$ and $(\bsvarphi^{(i)})^T \bsvarphi^{(j)} = \delta_{ij}$, which are equivalent to compute \cite{OLeary-RoseberryVillaChenEtAl22, zahm2020gradient}
\begin{equation}\label{eq:gIAS}
  \bE_\bsm \left[
    \nabla_\bsm F^T(\bsm) \,  \nabla_\bsm F(\bsm) 
  \right] \bspsi^{(i)} = \lambda_i \Gamma_{\text{prior}}^{-1}\bspsi^{(i)}, \quad i = 1, \dots, r,
\end{equation} 
with $\bspsi^{(i)} = \Gamma_{\text{prior}}^{1/2} \bsvarphi^{(i)}$ and $(\bspsi^{(i)})^T \Gamma_{\text{prior}}^{-1} \bspsi^{(j)} = \delta_{ij}$. The expectation can be computed by SAA. To this end, the input parameter dimension reduction can be computed by the linear projection 
\begin{equation}\label{eq:ias}
  \bsm \approx \bsm_r : = \bsm_{\text{prior}} + \sum_{i = 1}^r \beta_i \bspsi^{(i)} \quad \text{where} \quad \beta_i = (\bsm -  \bsm_{\text{prior}})^T \Gamma_{\text{prior}}^{-1} \bspsi^{(i)}. 
\end{equation}
We also consider a derivative-informed output dimension reduction by writing 
\begin{equation}
  \bE_\bseta \left[
    \nabla_\bseta F(\bsm(\bseta)) \,  \nabla_\bseta F^T(\bsm(\bseta)) 
  \right] \bspsi^{(i)} = \lambda_i \bspsi^{(i)}, \quad i = 1, \dots, r,\label{eq:DIS}
\end{equation}
with $\lambda_1 \geq \cdots \geq \lambda_r$ and $(\bspsi^{(i)})^T \bspsi^{(j)} = \delta_{ij}$, which are equivalent to compute 
\begin{equation}\label{eq:gOAS}
  \bE_\bsm \left[
    \nabla_\bsm F(\bsm) \, \Gamma_{\text{prior}} \, \nabla_\bsm F^T(\bsm) 
  \right] \bspsi^{(i)} = \lambda_i \bspsi^{(i)}, \quad i = 1, \dots, r,
\end{equation}
where the expectation is computed by SAA. Then, the projected output observables can be computed by 
\begin{equation}\label{eq:oas}
  F \approx F_r := \bar{F} + \sum_{i = 1}^r \beta_i \bspsi^{(i)} \quad \text{where} \quad \beta_i = (F - \bar{F})^T \bspsi^{(i)},
\end{equation}
where $\bar{F}$ is a the sample mean of $F$. To distinguish the eigenvectors for the input and output dimension reduction, we use $\bspsi_\bsm^{(i)}$ for $\bsm$ and $\bspsi_F^{(i)}$ for $F$. We also name $\Psi_\bsm = (\bspsi_\bsm^{(1)}, \dots, \bspsi_\bsm^{(r)})$ and $\Psi_F = (\bspsi_F^{(1)}, \dots, \bspsi_F^{(r)})$ as the derivative-informed input subspace (DIS) basis and derivative-informed output subspace (DOS) basis.

We present a commonly used PCA dimension reduction to compare the derivative-informed dimension reduction. Given $N$ samples of the input-output pairs $(\bsm^{(n)}, F^{(n)})$, $n = 1, \dots, N$, where $F^{(n)}$ is computed by solving the PDE model at $\bsm^{(n)}$, we can perform dimension reduction of them by PCA or equivalently a truncated singular value decomposition (tSVD). Let matrix $B = (F^{(1)} - \bar{F}, \dots, F^{(N)}- \bar{F})$ with sample mean $\bar{F}$. The tSVD of $B$, which can be computed by a randomized SVD algorithm, is given by 
\begin{equation}\label{eq:svd}
  B \approx B_r := \Psi_r \Sigma_r \Phi_r^T,
\end{equation}
where $\Psi_r = (\bspsi^{(1)}, \dots, \bspsi^{(r)})$ and $\Phi_r = (\bsphi^{(1)}, \dots, \bsphi^{(r)})$ are the matrices with $r$ columns of the left and right singular vectors respectively corresponding to the $r$ largest singular values $\sigma_1 \geq \dots \geq \sigma_r$ with $\Sigma_r = \text{diag}(\sigma_1, \dots, \sigma_r)$. For any new $F$, we can perform the dimension reduction by the linear projection 
\begin{equation}\label{eq:pca}
  F \approx F_r := \bar{F} + \sum_{i = 1}^r \beta_i \bspsi^{(i)} \quad \text{where } \quad \beta_i = (F - \bar{F})^T \bspsi^{(i)}.
\end{equation}
Note that instead of using the sample-based PCA dimension reduction for the input parameter, we can also compute a discrete version of a truncated Karhunen--Lo\`eve expansion (KLE) for $\bsm \sim \mathcal{N}(\bsm_{\text{prior}}, \Gamma_{\text{prior}})$ as 
\begin{equation}\label{eq:kle}
  \bsm \approx \bsm_r := \bsm_{\text{prior}} + \sum_{i = 1}^r \beta_i \bspsi^{(i)} \quad \text{with} \quad \beta_i = \sqrt{\lambda_i} \eta_i,
\end{equation}
where $(\lambda_i, \bspsi^{(i)})$, $i = 1, \dots, r$, are the eigenpairs of the covariance matrix $\Gamma_\text{prior}$, and $\eta_i \sim \mathcal{N}(0, 1)$. 
The sample-based PCA projection converges to the KLE projection with increasing samples. 

\subsection{Derivative-informed neural operators}
\label{sec:DINN}

Based on the dimension reduction for the input and output, we construct a neural network surrogate $F_{\text{NN}}(\bsm)$ of the PtO map $F(\bsm)$, or the discrete observation operator (thus neural operator), as
\begin{equation}\label{eq:Fnn}
    F(\bsm) \approx F_{\text{NN}}(\bsm) :=  \cD_{\Psi_F} \circ \Phi_{\bstheta} \circ \cE_{\Psi_\bsm} (\bsm),
\end{equation}
where $\cE_{\Psi_\bsm} : \bR^{d_\bsm} \to \bR^{r_\bsm}$ represents an encoder defined by the linear projection with basis $\Psi_\bsm \in \bR^{d_\bsm \times r_\bsm}$ as in \eqref{eq:ias} by DIS or in \eqref{eq:kle} by KLE. The output of the encoder is the $r_\bsm$-dimensional coefficient vector $\cE_{\Psi_\bsm}(\bsm) = \bsbeta_\bsm = (\beta_1, \dots, \beta_{r_\bsm})^T$ defined in the linear projections. $\cD_{\Psi_F}: \bR^{r_F} \to \bR^{d_F}$ represents a decoder by the linear projection with basis $\Psi_F \in \bR^{d_F \times r_F}$ as in \eqref{eq:oas} by DOS or in \eqref{eq:pca} by PCA. Note that the full dimension of the observable vector $F$ is $d_F$, the same as the number of candidate sensors $d_s$. 
The map $\Phi_\bstheta: \bR^{r_\bsm} \to \bR^{r_F}$ is given by a neural network parametrized by parameters $\bstheta$, which takes the $r_\bsm$-dimensional input $\bsbeta_\bsm$ and maps it to a $r_F$-dimensional output to approximate the coefficient vector $\bsbeta_F$ of the linear projection of $F$ as in \eqref{eq:oas} by DOS or in \eqref{eq:pca} by PCA.
One example is given by 
\begin{equation}\label{eq:PhiTheta}
  \Phi_\bstheta (\bsz) := Z_L \bsz_L + \bsb_L, 
\end{equation}
where  
\begin{equation}
  \bsz_{l+1} = \bsz_{l} + W_l h_l(Z_l \bsz_{l} + \bsb_l), \quad l = 0, \dots, L-1,
\end{equation}
is a sequence of ResNets \cite{he2016deep} with input $\bsz_0 = \bsz$, $h_l$ represents elementwise nonlinear activation functions, $\bstheta = (W_0, Z_0, \bsb_0, \dots, W_{L-1}, Z_{L-1}, \bsb_{L-1}, Z_L, \bsb_L)$ collects all the neural network parameters. Note that the neural network size depends only on $r_F$ and $r_\bsm$, which is small, so a relatively small number of data would be sufficient for the training of the neural network.

To generate the training data, we draw $N_t$ random samples $\bsm^{(n)}$ and solve the PDE to obtain $F^{(n)}$, $n = 1, \dots, N_t$. Then we project them to their corresponding reduced basis and obtain $\bsbeta_\bsm^{(n)}$ and $\bsbeta_F^{(n)}$. Note that the samples in computing the projection basis in the last section do not need the same samples drawn here. In practice, we take a subset of samples drawn here to compute the projection basis first. To construct the neural network surrogate that is accurate not only for the PtO map $F(\bsm)$ but also for its Jacobian $J(\bsm) = \nabla_\bsm F(\bsm)$, we also compute a reduced Jacobian $J_r^{(n)} = \Psi_F^T J^{(n)} \Psi_\bsm \in \bR^{r_F \times r_\bsm}$ with the full Jacobian $J^{(n)} = \nabla_\bsm F^{(n)}$ at each of the $N$ samples. This computation requires solving $\min(r_F, r_\bsm)$ linearized PDEs corresponding to the PDE model \eqref{eq:PDE}. These linearized PDEs have the same linear operator $\partial_u \mathcal{R}$ or its adjoint at each sample $\bsm^{(n)}$. The computational cost can be amortized by, e.g., one LU factorization of the linear operator and repeated use of this factorization as a direct solver.

To this end, we define the loss function for the training of the neural network with two terms as
\begin{equation}\label{eq:loss}
  \ell(\bstheta) = \frac{1}{N_t}\sum_{n = 1}^{N_t} ||\bsbeta_F^{(n)} - \Phi_\bstheta(\bsbeta_\bsm^{(n)})||_2^2 + \lambda ||J_r^{(n)} - \nabla_\bsbeta \Phi_\bstheta(\bsbeta_\bsm^{(n)})||_2^2,   
\end{equation}
where the first term measures the difference between the projected PtO map and its neural network approximation in Euclidean norm, and the second term is informed by the derivative (thus DINO \cite{OLeary-RoseberryChenVillEtAl24}) and measures the difference of their Jacobians in Frobenius norm. Note that $\nabla_\bsbeta \Phi_\bstheta \in \bR^{r_F \times r_\bsm}$ represents the Jacobian of the neural network output $\Phi_\bstheta$ with respect to its input, which can be computed by automatic differentiation. The evaluation of the loss function and its gradient with respect to the parameters $\bstheta$ is very efficient since all quantities depend only on the small reduced dimensions $r_F$ and $r_\bsm$. We remark that the scaling parameter $\lambda$ in the loss function \eqref{eq:loss} can be properly tuned to balance the two terms. In practice, $\lambda = 1$ is sufficient to improve the PtO map's accuracy and its reduced Jacobian.

\subsection{Efficient computation of the MAP point and the eigenpairs with DINO}
\label{sec:nn-oed}
Once the neural network $\Phi_\bstheta$ defined in \eqref{eq:PhiTheta} is well trained with the derivative-informed loss function \eqref{eq:loss}, we obtain the optimized DINO surrogate $F_{\text{NN}}$ of the PtO map $F$ as in \eqref{eq:Fnn}. We can use it to solve problem \eqref{eq:MAP} for the MAP point and compute the eigenpairs in \eqref{eq:disc-gen-eig} that involve the Jacobian as in \eqref{eq:GN-hessian} using its reliable derivative information. We further discuss the error bound for this result in Section \ref{sec:ErrorAnalysis}. By employing this proposed method, we can significantly reduce the computational cost of these operations.

Specifically, we solve the following optimization problem to compute the MAP point, 
\begin{equation}\label{eq:rMAP}
  \bsbeta^{\bsy, \xi}_{\text{MAP}} = \arg\min_{\bsbeta}  \frac{1}{2} ||\bsy - \xi \,\mathcal{D}_{\Psi_F} \circ \Phi_\bstheta( \bsbeta)||^2_{\Gamma^{-1}_{\text{noise}}} + \frac{1}{2} ||\bsbeta||^2_{\Gamma^{-1}_\bsm},
\end{equation}
where $\Gamma^{-1}_\bsm = \Psi_\bsm^T\Gamma^{-1}_{\text{prior}}\Psi_\bsm \in \bR^{r_\bsm \times r_\bsm}$ can be computed once and used repeatedly in solving the optimization problem. When $\Psi_\bsm$ is taken as the DIS basis computed from \eqref{eq:gIAS}, we have $\Gamma^{-1}_\bsm = \bI$. Note that this optimization problem has a reduced dimension $r_\bsm \ll d_\bsm$, compared to the high $d_\bsm$-dimensional optimization problem \eqref{eq:MAP}. We propose to solve it using a gradient-based optimization algorithm, e.g., the quasi--Newton LBFGS algorithm, where the derivative of the neural network $\Phi_\bstheta(\bsbeta)$ with respect to the input $\bsbeta$ can be efficiently computed by automatic differentiation. Once $\bsbeta^{\bsy, \xi}_{\text{MAP}}$ is obtained, we can compute the approximate high-dimensional MAP point $\bsm_{\text{MAP}}^{\bsy, \xi}$ by the linear projection, e.g., DIS in \eqref{eq:ias}. Further error analysis for MAP point estimation is discussed in Section \ref{sec:ErrorAnalysis-map}







To compute the eigenpairs of \eqref{eq:disc-gen-eig} at the MAP point using the DINO surrogate $F_{\text{NN}}$ in \eqref{eq:Fnn}, we first observe that its Jacobian is given by 
\begin{equation}\label{eq:red-jac}
  \nabla_\bsm F_{\text{NN}}(\bsm_{\text{MAP}}^{\bsy, \xi}) = \Psi_F \, \nabla_\bsbeta \Phi_\bstheta (\bsbeta^{\bsy, \xi}_{\text{MAP}}) \, \Psi_\bsm^T \Gamma^{-1}_{\text{prior}} ,
\end{equation}
where we use the DIS basis $\Psi_\bsm$ for the input encoder, and $\Psi_F$, either PCA basis or DOS basis for the output decoder.
Second, by observing the similarity of the generalized eigenvalue problem \eqref{eq:disc-gen-eig} and the generalized eigenvalue problem \eqref{eq:gIAS} in computing the DIS basis, we propose to approximate the eigenvector $\bsw_i$ of \eqref{eq:disc-gen-eig} by the linear projection with DIS basis as $\hat{\bsw}_i = \Psi_\bsm \bsu_i \in \bR^{d_\bsm}$ with the coefficient vector $\bsu_i \in \bR^{r_\bsm}$, then the generalized eigenvalue problem \eqref{eq:disc-gen-eig} can be simplified (by multiplying $\Psi_\bsm$ on both sides) as 
\begin{equation}\label{eq:rEig}
  \hat{H}^{\bsy, \xi}_\text{misfit}(\bsbeta_\text{MAP}^{\bsy, \xi}) \bsu_i = \hat{\lambda}_i \bsu_i, \quad i = 1, \dots, r_\bsm, 
\end{equation}
where the reduced matrix $\hat{H}^{\bsy, \xi}_\text{misfit}(\bsbeta_\text{MAP}^{\bsy, \xi}) \in \bR^{r_\bsm \times r_\bsm}$ is given by 
\begin{equation}\label{eq:rEigen}
  \hat{H}^{\bsy, \xi}_\text{misfit}(\bsbeta_\text{MAP}^{\bsy, \xi}) =  (\nabla_\bsbeta \Phi_\bstheta (\bsbeta^{\bsy, \xi}_{\text{MAP}}))^T \, \Psi_F^T \, \xi^T \, \Gamma_{\text{noise}}^{-1} \xi \, \Psi_F \, \nabla_\bsbeta \Phi_\bstheta (\bsbeta^{\bsy, \xi}_{\text{MAP}}),
\end{equation}
which can be efficiently evaluated by automatic differentiation. Moreover, the eigenvalue problem \eqref{eq:rEigen} can also be efficiently solved in the reduced dimension $r_\bsm$. Once the MAP point and the eigenpairs are computed, we can evaluate all the optimality criteria as in Section \ref{sec:Optimality}. We also analyze the error bound for the eigenvalues in Section \ref{sec:ErrorAnalysis-eig} and for the optimality criteria in Section \ref{sec:error-Opt}.

\subsection{Swapping greedy algorithm}
This section presents a swapping greedy algorithm as a modification developed in \cite{WuChenGhattas23a} to optimize the experimental design for the optimal sensor placement problem, inspired by \cite{lourencco2003iterated, zhang2002feature}. Recall that we need to choose $r_s$ sensors out of $d_s$ candidate sensor locations. We first apply a standard greedy algorithm to place the sensor one by one from $1$ to $r_s$ sensors according to one of the optimality criteria, and then employ a swapping greedy algorithm to swap the selected $r_s$ sensors one by one with the rest of the sensors to improve the optimality criterion until a stopping condition is satisfied, see Algorithm \ref{alg:buildtree}. The optimality can be the trace/determinant of the posterior covariance matrix and EIG. More specifically, we initialize an empty sensor set in line \ref{line:S0} and then select and add the sensors $s_t^*$, $t = 1, \dots, r_s$, one by one by a greedy algorithm in line \ref{line:greedy-start} to \ref{line:greedy-end} to maximize one of the optimality criteria, where the optimality criteria are computed either by high-fidelity approximation as in Section \ref{sec:Optimality}, or by neural network acceleration as in Section \ref{sec:nn-oed}. Then, we start from the selected sensor set by the greedy algorithm and swap the sensors from it one by one with the rest of the candidate sensors in line \ref{line:swap-start} to \ref{line:swap-end}. We stop the swapping loop if the improved design optimality is smaller than tolerance or if the number of the swapping loops is larger than a maximum number $k_{\text{max}}$. 

We remark that a leverage score in \cite{WuChenGhattas23a} selects an initial sensor set with $r_s$ sensors before starting the swapping loop since the optimality criteria can only be evaluated in an online optimization stage for a fixed number ($r_s$) of sensors in its algorithm. In contrast, we can efficiently assess the optimality criteria for an arbitrary number of sensors, which allows us to initialize the sensor set by a greedy algorithm before swapping. We observe fewer swapping loops to converge in our example (1 to 2 loops are sufficient).

\begin{algorithm}[!htb]
\caption{Swapping greedy algorithm}
\label{alg:buildtree}
\begin{algorithmic}[1]
\STATE{\textbf{Input}: A set $S = \{s_1, \dots, s_{d_s}\}$ of $d_s$ candidate sensors, a budget of $r_s$ sensors, a maximum number of swapping loops $k_{\text{max}}$, and an optimality tolerance $\varepsilon_{\text{min}}$.}
\STATE{\textbf{Output}: A set $S^*$ of $r_s$ optimal sensors.}
\vskip 0.2cm
\STATE{Set an empty sensor set $S^0 = \emptyset$. \label{line:S0}} 
\FOR{$t = 1, \dots, r_s$ \label{line:greedy-start}} 
\STATE{Choose the optimal sensor $s_t^*$ at step $t$ as}
\STATE{$$
s_t^* = \arg\max_{s \in S \setminus S^{t-1}} \text{optimality}(\xi(\{s\} \cup S^{t-1})),
$$}
\STATE{where $\xi(\{s\} \cup S^{t-1})$ is the design matrix for the selected sensors $\{s\} \cup S^{t-1}$.}
\STATE{Update the set of selected sensors $S^t$ as}
\STATE{$$
S^t = \{s_t^*\} \cup S^{t-1}.
$$}
\ENDFOR \label{line:greedy-end}
\STATE{Set an initial loop number $k = 0$ and design improvement $\varepsilon = 2 \varepsilon_{\text{min}}$.}
\WHILE{$k \leq k_{\text{max}}$ and $\varepsilon > \varepsilon_{\text{min}}$ \label{line:swap-start}}
\STATE{Denote a set of $r_s$ selected sensors as $S^{r_s}_k = S^{r_s} = \{s_1, \dots, s_{r_s}\}$, and compute optimality$(\xi(S^{r_s}_k))$ for the corresponding experimental design $\xi(S^{r_s}_k)$.}
\FOR{$t = 1, \dots, r_s$}
\STATE{Select the candidate sensor $s_t^*$ to swap with the $t$-th sensor $s_t$ in $S^{r_s}$ as}
\STATE{$$
s_t^* = \arg\max_{s \in S\setminus S^{r_s}} \text{optimality}(\xi(S^{r_s} \setminus \{s_t\} \cup \{s\}))
$$}
\STATE{Update $S^{r_s}$ by swapping its $t$-th sensor $s_t$ with $s_t^*$.}
\ENDFOR
\STATE{Compute the optimality$(\xi(S^{r_s}))$ for the updated sensor set $S^{r_s}$, compute the incremental improvement of the optimality criterion}
\STATE{
$$
\varepsilon = \text{optimality}(\xi(S^{r_s})) - \text{optimality}(\xi(S^{r_s}_k)),
$$
}
\STATE{and update $k \leftarrow k + 1$.}
\ENDWHILE \label{line:swap-end}
\RETURN{$S^* = S^{r_s}$.}
\end{algorithmic}
\end{algorithm}

\subsection{Computational complexity}
\label{sec:complexity}
The total computation for the swapping greedy algorithm requires $O(k_s r_s d_s)$ evaluations of the optimality criterion to select $r_s$ sensors from $d_s$ candidates with $k_s \leq k_{\text{max}}$ loops of the swapping greedy algorithm. Note that the greedy algorithm for initialization takes $O(r_s d_s)$ evaluations of the optimality criterion. Each of the optimality criteria takes $N_s$ times of solving the optimization problem \eqref{eq:rMAP} to compute the MAP point and $N_s$ times of solving the eigenvalue problem \eqref{eq:rEig} to compute the eigenpairs, where $N_s$ is the number of samples in the SAA of the expectation in the optimality criteria. 

If we only use the high-fidelity approximation without neural network acceleration, by the same swapping greedy algorithm, we would have to solve $O(N_s k_s r_s d_s)$ times the optimization problem \eqref{eq:MAP} to compute the high-fidelity MAP point and solve $O(N_s k_s r_s d_s)$ times the high-fidelity generalized eigenvalue problem \eqref{eq:disc-gen-eig} to compute the eigenpairs. By using an inexact Newton-CG algorithm to solve the optimization problem \eqref{eq:MAP}, we need to solve $N_{\text{nt}}$ state PDEs and $N_{\text{nt}} N_{\text{cg}}$ linearized PDEs (for Hessian-vector product), with $N_{\text{nt}}$ Newton iterations and $N_{\text{cg}}$ (in average) CG iterations per Newton iteration. For the solution of the generalized eigenvalue problem \eqref{eq:disc-gen-eig} by a randomized SVD algorithm, we need to solve one state PDE and $O(r)$ linearized PDEs to compute $r$ eigenpairs. We remark that if a direct solver is used, e.g., by first LU factorization and then its repeated use to solve the linearized PDEs, then the cost $C_2$ of each linearized PDE solve becomes much smaller than the cost $C_1$ to solve the state PDE. As a result, the additional computational cost for the Jacobian data generation is comparable or smaller (depending on the nonlinearity of the state PDEs) than that for the PtO map data generation, as reported in our examples. The computational complexity in terms of the number of PDE solves is summarized in Table \ref{tb:comp}. 

For the neural network construction, we first need to build the basis for the input and output dimension reduction, with $O(N_\bsm)$ state PDE solves (using $O(N_\bsm)$ samples for SAA in \eqref{eq:gIAS}) and $O(N_\bsm r_\bsm)$ linearized PDE solves to get $r_\bsm$ bases for input dimension reduction, and $O(N_F)$ state and $O(N_F r_F)$ linearized PDE solves to get $r_F$ bases for output dimension reduction. To generate $N_t$ samples of the PtO map and its Jacobian in the reduced dimension for training the neural network surrogate, we need to solve 
$N_t$ state PDEs (for PtO map), and $O(N_t r_t)$ linearized PDEs (for Jacobian) with $r_t = \min(r_\bsm, r_F)$. As the neural network is constructed in the reduced dimensions with a relatively small $O(r_\bsm r_F)$ degrees of freedom, its training cost is negligible compared to the data generation. Once trained in an offline stage, the evaluation of the neural network and its Jacobian is much faster than the large-scale PDE solves and can be efficiently used to solve the Bayesian OED problem without the PDE solves in the online stage. The computational complexity in terms of the number of PDE solves is also summarized in Table \ref{tb:comp}. 

\begin{table}[h]
  \footnotesize
  \begin{center}
    \begin{tabular}{|c|c|c|c|}
      \hline
      \# PDE solves & High-fidelity \\ \hline
      MAP point &  $O(N_s k_s r_s d_s N_{nt} )C_1 + O(N_s k_s r_s d_s N_{nt} N_{cg})C_2$ \\ \hline
      Eigenvalue problem & $O(N_s k_s r_s d_s)C_1 + O(N_s k_s r_s d_s r)C_2$ \\ \hline \hline
      \# PDE solves & DINO \\ \hline
      Dimension reduction & $ O(N_\bsm + N_F)C_1 + O(N_\bsm r_\bsm + N_F r_F)C_2$ \\
      \hline
      Training data & $O(N_t)C_1 + O(N_t r_t)C_2$ \\ \hline
    \end{tabular}
    \caption{Computational complexity in terms of the number of PDE solves, each state PDE with cost $C_1$ and each linearized PDE with cost $C_2$. $N_s$: \# samples to compute optimality criterion, $k_s$: \# swapping loops, $r_s$: \# sensors to select, $d_s$: \# candidate sensors, $N_{nt}$: \# Newton iterations, $N_{cg}$: \# CG iterations per Newton iteration, $r$: \# eigenpairs, $N_t$: \# training samples, $r_t = \min(r_\bsm, r_F)$: the smaller number of the dimensions of the input projection $r_\bsm$ and the output projection $r_F$, $N_\bsm$ and $N_F$: \# samples used to construct the input and output dimension reduction.}\label{tb:comp}
  \end{center}
  \end{table}

\section{Error analysis}

This section provides detailed error analysis for the approximations of the MAP point, the eigenvalues of the approximate posterior covariance, and the optimality criteria under assumptions of the accuracy of the neural surrogate and the dimension reduction, and the smoothness of the objective function in computing the MAP point. 
We validate these assumptions in the numerical results presented in Section \ref{sec:numerical}. 

\label{sec:ErrorAnalysis}
\subsection{Approximation error of the MAP point}
\label{sec:ErrorAnalysis-map}

We consider the parameter $\bsm$ in its representation \eqref{eq:meta} as a function of the white noise $\bseta$. By slight abuse of notation, we write the forward map as $F_\xi(\bseta) = F_\xi(\bsm(\bseta))$. Let the objective function in \eqref{eq:MAP} for the high-fidelity MAP point be rewritten as 
\begin{equation}\label{eq:objMAP}
    L(\bseta) = \frac{1}{2}||\bsy - F_\xi(\bseta)||^2_{\Gamma_\text{noise}^{-1}}+\frac{1}{2}||\bseta||^2
\end{equation}
for any design $\xi$ and data $\bsy$. Let $\hat{F}_\xi$ denote the neural network approximation of the forward map $F_\xi$ given in \eqref{eq:Fnn}. Let $\hat{L}$ denote the corresponding objective function in \eqref{eq:rMAP}, which can be equivalently written as a function of $\bseta$ as 
\begin{equation}\label{eq:hatobjMAP}
    \hat{L}(\bseta) = \frac{1}{2}||\bsy- \hat{F}_\xi(\bseta)||^2_{\Gamma_\text{noise}^{-1}}+\frac{1}{2}||\Phi_\bsm^T \bseta||^2,
\end{equation} 
where $\Phi_\bsm = (\bsvarphi^{(1)}, \dots, \bsvarphi^{(r)})$ with $\bsvarphi^{(i)}$ as the eigenvectors of \eqref{eq:eigenias}, which satisfies $\Phi_\bsm^T \Phi_\bsm = I_r$. Moreover, we denote the MAP point obtained from the high-fidelity optimization problem \eqref{eq:MAP} as $\bsm^* = \bsm(\bseta^*)$ and the MAP point obtained from the projected optimization problem \eqref{eq:rMAP} as $\hat{\bsm}^* = \bsm(\hat{\bseta}^*)$, with the MAP points $\bseta^*$ of \eqref{eq:objMAP} and $\hat{\bseta}^*$ of \eqref{eq:hatobjMAP}, which satisfy the stationary conditions 
\begin{equation}\label{eq:staionaryM}
  0 = \nabla_\bseta L(\bseta^*) =  - (\bsy - F_\xi(\bseta^*))^T\Gamma_\text{noise}^{-1}\nabla_\bseta F_\xi(\bseta^*) + \bseta^*
\end{equation}
and 
\begin{equation}\label{eq:staionaryMhat}
  0 = \nabla_\bseta \hat{L}(\hat{\bseta}^*) = - (\bsy - \hat{F}_\xi(\hat{\bseta}^*))^T\Gamma_\text{noise}^{-1}\nabla_\bseta \hat{F}_\xi(\hat{\bseta}^*) + P_r \hat{\bseta}^*,
\end{equation}
respectively, where $P_r = \Phi_\bsm \Phi_\bsm^T$ is the projection matrix. To bound the approximation error of the MAP point obtained in \eqref{eq:rMAP} and the eigenvalues obtained in \eqref{eq:rEig} by the neural network approximation of the forward map,
we make the following assumptions on the objective function $L$ and $\hat{L}$, 
the approximation error of the forward map and its Jacobian, and the dimension reduction error of the MAP point. We use $||\cdot||$ to denote the Euclidean norm for vectors and the Frobenius norm for matrices. 

\begin{assumption}\label{ass:assumptions}
Let $\bseta^*$ and $\hat{\bseta}^*$ be the MAP points of \eqref{eq:objMAP} and \eqref{eq:hatobjMAP},  respectively.
    \begin{itemize}
        \item[A1] We assume that the gradients  $\nabla_\bseta L$ and $\nabla_\bseta \hat{L}$ allow convergent Taylor expansion around $\bseta^*$ and $\hat{\bseta}^*$\label{ass:ass-1}, 
  \begin{equation}\label{eq:Taylor1}
    \nabla_\bseta L(\bseta) = \nabla_\bseta L(\bseta^*) + \nabla_\bseta^2 L(\bseta^*)(\bseta - \bseta^*) + \delta(\bseta) (\bseta - \bseta^*),
  \end{equation}
  for $\bseta$ such that $||\bseta - \bseta^*|| \leq R_\bseta$, and \begin{equation}\label{eq:Taylor2}
    \nabla_\bseta \hat{L}(\bseta) = \nabla_\bseta \hat{L}(\hat{\bseta}^*) +\nabla_\bseta^2 \hat{L}(\hat{\bseta}^*)(\bseta - \hat{\bseta}^*) + \hat{\delta}(\bseta) (\bseta - \hat{\bseta}^*),
  \end{equation}
  for $\bseta$ such that $||\bseta - \hat{\bseta}^*|| \leq R_\bseta$, for some radius $R_\bseta$, where $\delta(\bseta) \in \mathbb{R}^{d_\bseta \times d_\bseta}$ and $\hat{\delta}(\bseta) \in \mathbb{R}^{d_\bseta \times d_\bseta}$ satisfy $\lim_{\bseta \to \bseta^*}||\delta(\bseta)|| = 0$ and $\lim_{\bseta \to \hat{\bseta}^*}||\hat{\delta}(\bseta)|| = 0$.
 In addition, we assume that
 \begin{equation}\label{eq:lowerbound}
     ||(\nabla_\bseta^2 \hat{L}(\hat{\bseta}^*) + \nabla_\bseta^2 L(\bseta^*)) (\bseta^* - \hat{\bseta}^*) + (\hat{\delta}(\bseta^*) + \delta(\hat{\bseta}^*)) (\bseta^* - \hat{\bseta}^*)|| \geq c_h ||\bseta^* - \hat{\bseta}^*||
 \end{equation}
 for some constant $c_h > 0$.    
        \item[A2] Moreover, we assume that the forward map $F_\xi$ and its Jacobian $\nabla_\bseta F_\xi$ are bounded and well approximated by the bounded neural network approximation $\hat{F}_\xi$ and its Jacobian $\nabla_\bseta \hat{F}_\xi$ in the sense that there exist small constants $\varepsilon_1 > 0$ and $\varepsilon_2 > 0$ such that\label{ass:ass-2}
  \begin{equation}\label{eq:approxFJ}
    ||F_\xi(\bseta) - \hat{F}_\xi(\bseta)|| \leq \varepsilon_1 \quad \text{ and } \quad ||\nabla_\bseta F_\xi(\bseta) - \nabla_\bseta \hat{F}_\xi(\bseta)|| \leq \varepsilon_2
  \end{equation}
  for $\bseta$ such that $||\bseta - \bseta^*|| \leq R_\bseta$ and $||\bseta - \hat{\bseta}^*|| \leq R_\bseta$, in particular for $\bseta = \bseta^*$ and $\bseta = \hat{\bseta}^*$.

    \item[A3] Furthermore, we assume that there exists a small constant $\varepsilon_3 > 0$ such that the projection error  satisfies\label{ass:ass-3}
  \begin{equation}
||(I - P_r)\bseta^*|| \leq \varepsilon_3.
  \end{equation}
    \end{itemize}
\end{assumption}

\begin{remark}
    We remark that Assumption A1 is mild and satisfied under the stronger assumption that 
 the Hessian $\nabla_{\bseta}^2 L(\bseta^*)$ and $\nabla_\bseta^2 \hat{L}(\hat{\bseta}^*)$ at the MAP points are positive definite with the smallest eigenvalues $\lambda^*_{\text{min}} \geq c_\lambda > 0$ and $\hat{\lambda}^*_{\text{min}} \geq c_\lambda > 0$ for some constant $c_\lambda$, and $||\hat{\delta}(\bseta^*)|| \leq c_\delta$ and $||\delta(\hat{\bseta}^*)|| \leq c_\delta$ for some $c_\delta < c_\lambda$, which leads to $c_h \geq 2(c_\lambda - c_\delta) > 0$ for $c_h$ in \eqref{eq:lowerbound}. 
This holds for linear forward map $F_\xi(\bseta) = A \, \bseta +\bsb$, in which case $\delta(\bseta) = 0$ and $\hat{\delta}(\bseta) = 0$, and the Hessian $\nabla_\bseta^2 L(\bseta^*) = A^T \Gamma_{\text{noise}}^{-1} A + I$ is positive definite. However, it may not be true for highly nonlinear inverse problems such that the objective function $L$ and $\hat{L}$ become highly nonconvex and allow multiple local minima far away from each other. 
\end{remark}
\begin{remark}
Assumption A2 is reasonable since the neural network approximation $\hat{F}_\xi$ is trained for both the function and its Jacobian in the DINO framework. Assumption A3 holds true for problems with intrinsic low-dimensionality of the input parameter, or the forward map is only sensitive in an intrinsically low-dimensional subspace spanned by $\Phi_\bsm$. In this case, the regularization term $||\bseta||$ in \eqref{eq:objMAP} would push $||(I - P_r)\bseta^*||$ close to zero as $F_\xi(\bseta) = F_\xi(P_r \bseta + (I - P_r) \bseta)$ is not sensitive to $(I - P_r)\bseta$. We will 
numerically investigate the decay of the approximation errors with increasing training data to demonstrate A2 and the dimension reduction error with increasing reduced dimension to demonstrate A3 in detail in Section \ref{sec:numerical}.
\end{remark}

\begin{theorem}
\label{thm:1}
  Under Assumption \ref{ass:assumptions}, if $||\bseta^* - \hat{\bseta}^*|| \leq R_\bseta$, we have the following error bound for the approximation of high-fidelity MAP point by the surrogate MAP point 
  \begin{equation}\label{eq:MAPbound}
    ||\bseta^* - \hat{\bseta}^*|| \leq c_1\varepsilon_1 + c_2\varepsilon_2 + c_3 \varepsilon_3 =: \varepsilon_\bsm,
  \end{equation}
  where $c_1 > 0, c_2>0,$ and $c_3 > 0$ are positive constants independent of $\varepsilon_1, \varepsilon_2, $ and $\varepsilon_3$.
\end{theorem}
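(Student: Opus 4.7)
The plan is to derive the bound by combining a quadratic lower bound on the cross-difference $\nabla\hat L(\bseta^*)-\nabla L(\hat{\bseta}^*)$ (from Assumption A1) with explicit upper bounds on the gradient mismatch $\nabla\hat L(\bseta)-\nabla L(\bseta)$ evaluated at each MAP point (from Assumptions A2 and A3).

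I would first apply the Taylor expansions \eqref{eq:Taylor1} at $\bseta=\hat{\bseta}^*$ and \eqref{eq:Taylor2} at $\bseta=\bseta^*$, and use the stationarity conditions \eqref{eq:staionaryM} and \eqref{eq:staionaryMhat} to obtain
\begin{equation*}
\nabla_\bseta L(\hat{\bseta}^*) = [\nabla_\bseta^2 L(\bseta^*)+\delta(\hat{\bseta}^*)](\hat{\bseta}^*-\bseta^*), \quad
\nabla_\bseta \hat L(\bseta^*) = [\nabla_\bseta^2\hat L(\hat{\bseta}^*)+\hat\delta(\bseta^*)](\bseta^*-\hat{\bseta}^*).
\end{equation*}
Subtracting identifies $\nabla_\bseta\hat L(\bseta^*)-\nabla_\bseta L(\hat{\bseta}^*)$ with the sum-Hessian of \eqref{eq:lowerbound} acting on $\bseta^*-\hat{\bseta}^*$, so applying \eqref{eq:lowerbound} yields $c_h\,\|\bseta^*-\hat{\bseta}^*\|\le\|\nabla_\bseta\hat L(\bseta^*)-\nabla_\bseta L(\hat{\bseta}^*)\|$.

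The key algebraic step is to rewrite the same cross-difference as the sum of two \emph{same-point} gradient mismatches: using $\nabla_\bseta L(\bseta^*)=\nabla_\bseta\hat L(\hat{\bseta}^*)=0$,
\begin{equation*}
\nabla_\bseta\hat L(\bseta^*)-\nabla_\bseta L(\hat{\bseta}^*) = [\nabla_\bseta\hat L(\bseta^*)-\nabla_\bseta L(\bseta^*)] + [\nabla_\bseta\hat L(\hat{\bseta}^*)-\nabla_\bseta L(\hat{\bseta}^*)],
\end{equation*}
the extra contributions $-\nabla_\bseta L(\bseta^*)+\nabla_\bseta\hat L(\hat{\bseta}^*)$ vanishing by stationarity. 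To bound each mismatch I would compute the explicit difference of the two gradients,
\begin{equation*}
\nabla_\bseta\hat L(\bseta)-\nabla_\bseta L(\bseta) = (\nabla_\bseta F_\xi-\nabla_\bseta\hat F_\xi)^T\Gamma_{\text{noise}}^{-1}(\bsy-F_\xi) + \nabla_\bseta\hat F_\xi^T\Gamma_{\text{noise}}^{-1}(\hat F_\xi-F_\xi) - (I-P_r)\bseta,
\end{equation*}
and apply the triangle inequality, Assumption A2, and the boundedness of $F_\xi,\nabla_\bseta\hat F_\xi$ on the ball of radius $R_\bseta$ to get $\|\nabla_\bseta\hat L(\bseta)-\nabla_\bseta L(\bseta)\|\le C_1\varepsilon_1+C_2\varepsilon_2+\|(I-P_r)\bseta\|$. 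Evaluating at $\bseta=\bseta^*$ and invoking Assumption A3 controls the projection term by $\varepsilon_3$, while at $\bseta=\hat{\bseta}^*$ the construction $\hat{\bseta}^*=\Phi_\bsm\,\bsbeta^{\bsy,\xi}_{\text{MAP}}$ inherited from \eqref{eq:rMAP}--\eqref{eq:ias} places $\hat{\bseta}^*$ in $\mathrm{range}(P_r)$, so $(I-P_r)\hat{\bseta}^*=0$. Summing the two bounds and dividing by $c_h$ produces the claim with $c_1,c_2,c_3$ all proportional to $1/c_h$.

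The main obstacle I anticipate is the algebraic identity in the second step: a naive triangle split of $\nabla_\bseta\hat L(\bseta^*)-\nabla_\bseta L(\hat{\bseta}^*)$ through $\nabla_\bseta L(\bseta^*)$ alone leaves a residual $\nabla_\bseta L(\hat{\bseta}^*)$ whose Taylor form is $[\nabla_\bseta^2 L(\bseta^*)+\delta(\hat{\bseta}^*)](\hat{\bseta}^*-\bseta^*)$ with a coefficient that Assumption A1 does not bound from above, so the inequality cannot be closed without this specific symmetric decomposition. A secondary subtlety is that $\hat{\bseta}^*$ can be taken exactly in $\mathrm{range}(P_r)$, which relies on both $\hat F_\xi$ and the regularizer $\tfrac12\|\Phi_\bsm^T\bseta\|^2$ depending on $\bseta$ only through $P_r\bseta$; without this observation, $\|(I-P_r)\hat{\bseta}^*\|$ could only be bounded by $\varepsilon_3+\|\bseta^*-\hat{\bseta}^*\|$, which would degrade the conclusion by forcing the more restrictive requirement $c_h>1$.
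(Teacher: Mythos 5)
Your proposal is correct and follows essentially the same route as the paper's proof: Taylor expansion plus stationarity to express $\nabla_\bseta \hat{L}(\bseta^*)-\nabla_\bseta L(\hat{\bseta}^*)$ via the summed Hessians, the lower bound \eqref{eq:lowerbound} on that side, and Assumptions A2--A3 together with $P_r\hat{\bseta}^*=\hat{\bseta}^*$ to bound the gradient mismatch by $\bar{c}_1\varepsilon_1+\bar{c}_2\varepsilon_2+\varepsilon_3$. Your symmetric split into two same-point mismatches $[\nabla\hat L-\nabla L](\bseta^*)+[\nabla\hat L-\nabla L](\hat{\bseta}^*)$ is just a cleaner reorganization of the paper's add-and-subtract of $F_\xi(\bseta^*)$, $\nabla_\bseta F_\xi(\bseta^*)$, $\hat F_\xi(\hat{\bseta}^*)$, and $\nabla_\bseta\hat F_\xi(\hat{\bseta}^*)$, producing the identical terms and constants.
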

Note that this bound implies the bound for the MAP points in the parameter space measured in $\Gamma_\text{prior}^{-1}$-norm, as $||\bsm^* - \hat{\bsm}^*||_{\Gamma^{-1}_\text{prior}} = ||\bseta^* - \hat{\bseta}^*||$, which appears in the EIG-optimality criteria in \eqref{eq:eig-final-approx}. Moreover, we have $||\bsm^* - \hat{\bsm}^*||_M \leq C_\bsm ||\bsm^* - \hat{\bsm}^*||_{\Gamma^{-1}_\text{prior}}$ for some constant $C_\bsm \leq 1/\kappa$ by the definition of $\Gamma^{-1}_\text{prior}$ in \eqref{eq:GammaInv}.

\begin{proof}
  By A1 in Assumption \ref{ass:assumptions}, and $||\bseta^* - \hat{\bseta}^*|| \leq R_\bseta$, we can evaluate the Taylor expansions \eqref{eq:Taylor1} and \eqref{eq:Taylor2} at $\bseta = \hat{\bseta}^*$ and $\bseta = \bseta^*$, respectively, and obtain 
\begin{equation}\label{eq:gradL}
    \nabla_\bseta L(\hat{\bseta}^*) = \nabla_\bseta^2 L(\bseta^*) (\hat{\bseta}^* - \bseta^*) + \delta(\hat{\bseta}^*)(\hat{\bseta}^* - \bseta^*)
\end{equation}
  and \begin{equation}\label{eq:gradhatL}
    \nabla_\bseta \hat{L}(\bseta^*) = \nabla_\bseta^2 \hat{L}(\hat{\bseta}^*) (\bseta^* - \hat{\bseta}^*) + \hat{\delta}(\bseta^*)(\bseta^* - \hat{\bseta}^*),
  \end{equation}
where we used the stationary conditions \eqref{eq:staionaryM} and \eqref{eq:staionaryMhat}. Subtracting \eqref{eq:gradL} from \eqref{eq:gradhatL}, we obtain
\begin{equation}\label{eq:hessgraddiff}
  \nabla_\bseta \hat{L}(\bseta^*) - \nabla_\bseta L(\hat{\bseta}^*) = (\nabla_\bseta^2 \hat{L}(\hat{\bseta}^*) + \nabla_\bseta^2 L(\bseta^*)) (\bseta^* - \hat{\bseta}^*) + (\delta(\hat{\bseta}^*) + \hat{\delta}(\bseta^*))(\bseta^* - \hat{\bseta}^*).
\end{equation}
By definition of $\hat{L}$ and $L$, we have for the left hand side of the above equation 
\begin{equation}\label{eq:gradLdiff}
  \begin{split}
  \nabla_\bseta \hat{L}(\bseta^*) - \nabla_\bseta L(\hat{\bseta}^*) & =  - (\bsy - \hat{F}_\xi(\bseta^*))^T\Gamma_\text{noise}^{-1}\nabla_\bseta \hat{F}_\xi(\bseta^*) + P_r\bseta^* \\
  & + (\bsy - F_\xi(\hat{\bseta}^*))^T\Gamma_\text{noise}^{-1}\nabla_\bseta F_\xi(\hat{\bseta}^*) - \hat{\bseta}^*.
  \end{split}
\end{equation}  
Let us consider the first term on the right hand side of the above equation. We have
\begin{equation}
  \begin{split}
    & - (\bsy - \hat{F}_\xi(\bseta^*))^T\Gamma_\text{noise}^{-1}\nabla_\bseta \hat{F}_\xi(\bseta^*) \\
  & = - (F(\bseta^*) - \hat{F}_\xi(\bseta^*))^T \Gamma_\text{noise}^{-1}\nabla_\bseta \hat{F}_\xi(\bseta^*) - (\bsy - F_\xi(\bseta^*))^T\Gamma_\text{noise}^{-1}\nabla_\bseta \hat{F}_\xi(\bseta^*)\\
  & = - A^T \Gamma_\text{noise}^{-1}\nabla_\bseta \hat{F}_\xi(\bseta^*) - (\bsy - F_\xi(\bseta^*))^T\Gamma_\text{noise}^{-1} (\nabla_\bseta \hat{F}_\xi(\bseta^*) - \nabla_\bseta F_\xi(\bseta^*)) \\
  & \quad - (\bsy - F_\xi(\bseta^*))^T\Gamma_\text{noise}^{-1}\nabla_\bseta F_\xi(\bseta^*)\\
  & = - A^T \Gamma_\text{noise}^{-1}\nabla_\bseta \hat{F}_\xi(\bseta^*) - (\bsy - F_\xi(\bseta^*))^T\Gamma_\text{noise}^{-1}B - \bseta^*,
  \end{split}
\end{equation}
where we added and subtracted $F_\xi(\bseta^*)$ in the first equality, added and subtracted $\nabla_\bseta F_\xi(\bseta^*)$ in the second equality while also setting $A = F(\bseta^*) - \hat{F}_\xi(\bseta^*)$, and used the stationary condition \eqref{eq:staionaryM} in the third equality while setting $B = \nabla_\bseta \hat{F}_\xi(\bseta^*) - \nabla_\bseta F_\xi(\bseta^*)$. 
Similarly, we have for the third term in the right hand side of \eqref{eq:gradLdiff} by adding and subtracting $\hat{F}_\xi(\hat{\bseta}^*)$ and $\nabla_\bseta \hat{F}_\xi(\hat{\bseta}^*)$ and using the stationary condition \eqref{eq:staionaryMhat} to obtain 
\begin{equation}
  \begin{split}
  & (\bsy - F_\xi(\hat{\bseta}^*))^T\Gamma_\text{noise}^{-1}\nabla_\bseta F_\xi(\hat{\bseta}^*)\\
  & = C^T \Gamma_\text{noise}^{-1}\nabla_\bseta F_\xi(\hat{\bseta}^*) + (\bsy - \hat{F}_\xi(\hat{\bseta}^*))^T\Gamma_\text{noise}^{-1} D + P_r\hat{\bseta}^*,\\
  \end{split}
\end{equation}
where $C = \hat{F}_\xi(\hat{\bseta}^*) - F_\xi(\hat{\bseta}^*)$ and $D = \nabla_\bseta F_\xi(\hat{\bseta}^*)  - \nabla_\bseta \hat{F}_\xi(\hat{\bseta}^*)$. Substituting the above two equations into \eqref{eq:gradLdiff}, we obtain
\begin{equation}\label{eq:gradLdiff2}
  \begin{split}
  & \nabla_\bseta \hat{L}(\bseta^*) - \nabla_\bseta L(\hat{\bseta}^*) \\
  & = - A^T \Gamma_\text{noise}^{-1}\nabla_\bseta \hat{F}_\xi(\bseta^*) - (\bsy - F_\xi(\bseta^*))^T\Gamma_\text{noise}^{-1}B -  (\bseta^* - P_r \bseta^*)\\
  & \quad + C^T \Gamma_\text{noise}^{-1}\nabla_\bseta F_\xi(\hat{\bseta}^*) + (\bsy - \hat{F}_\xi(\hat{\bseta}^*))^T\Gamma_\text{noise}^{-1} D + (P_r \hat{\bseta}^* - \hat{\bseta}^*)\\
  & \leq \bar{c}_1 \varepsilon_1 + \bar{c}_2 \varepsilon_2 + \bar{c}_3\varepsilon_3,
  \end{split}
\end{equation}
where $\varepsilon_1$ and $\varepsilon_2$ are the bounds for the errors of the forward map and its Jacobian by the neural network approximation in \eqref{eq:approxFJ}, and $\varepsilon_3$ is the bound for the approximation error of the projection operator. Note that $ P_r \hat{\bseta}^* - \hat{\bseta}^* = 0$ in \eqref{eq:gradLdiff2} as $\hat{\bseta}^*$ is obtained in the projected subspace so that $P_r(\hat{\bseta}^*) = \hat{\bseta}^*$. The result follows by setting $\bar{c}_3 = 1$ and 
\begin{equation}
  \begin{split}
    \bar{c}_1 & = ||\Gamma_\text{noise}^{-1}||(||\nabla_\bseta \hat{F}_\xi(\bseta^*)|| + ||\nabla_\bseta F_\xi(\hat{\bseta}^*)||), \\
    \bar{c}_2 & = ||\Gamma_\text{noise}^{-1}||(2||\bsy|| + ||F_\xi(\bseta^*)|| + ||\hat{F}_\xi(\hat{\bseta}^*)||),
  \end{split}
\end{equation}
which are bounded by the assumption that the forward map and its Jacobian and their neural network approximations are bounded. For the right hand side of \eqref{eq:hessgraddiff}, we have the lower bound \eqref{eq:lowerbound} by assumption.
To this end, a combination of this lower bound and \eqref{eq:gradLdiff2} concludes by setting $c_i = \bar{c}_i/c_h$ with $i = 1, 2, 3$.
\end{proof}

\subsection{Approximation error of the eigenvalues}
\label{sec:ErrorAnalysis-eig}
We consider the approximation of the eigenvalues of the generalized eigenvalue problem \eqref{eq:disc-gen-eig} by those of the eigenvalue problem \eqref{eq:rEig} with neural network approximation of the forward map. 
\begin{theorem}
\label{thm:2}

Let $\lambda_i$, $i = 1, \dots, r$, denote the leading eigenvalues of the generalized eigenvalue problem \eqref{eq:disc-gen-eig}, and $\hat{\lambda}_i$, $i = 1, \dots, r$, denote those of the eigenvalue problem \eqref{eq:rEig}. Under Assumption \ref{ass:assumptions} and the assumption 
\begin{equation}\label{eq:Lipschitz}
    ||\nabla_\bseta F_\xi (\bseta^*) - \nabla_\bseta F_\xi(\hat{\bseta}^*)|| \leq L_F ||\bseta^* - \hat{\bseta}^*||
\end{equation}
for some Lipschitz constant $L_F$, we have 
\begin{equation}\label{eq:boundeigenvalue}
    |\lambda_i - \hat{\lambda}_i| \leq \hat{c}_1 \varepsilon_1 + \hat{c}_2 \varepsilon_2 + \hat{c}_3 \varepsilon_3 =: \varepsilon_\lambda, \quad i = 1, \dots r, 
\end{equation}
for some constants $\hat{c}_1$, $\hat{c}_2$, and $\hat{c}_3$ independent of $\varepsilon_1$, $\varepsilon_2$, and $\varepsilon_3$. 
\end{theorem}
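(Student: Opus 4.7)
The plan is to reduce the eigenvalue comparison to a matrix perturbation bound via Weyl's inequality, and then split the matrix perturbation into a \emph{point} contribution (change of evaluation from $\bseta^*$ to $\hat{\bseta}^*$) and a \emph{surrogate} contribution (replacing $F_\xi$ by $\hat{F}_\xi$ at a fixed point).

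First, I would recast both eigenvalue problems as standard symmetric eigenvalue problems of the same size in the whitened parameterization. Using $\bsw_i = \Gamma_{\text{prior}}^{1/2}\bsv_i$, the generalized problem \eqref{eq:disc-gen-eig} becomes
\begin{equation*}
  H\bsv_i = \lambda_i \bsv_i, \qquad H := \nabla_\bseta F_\xi(\bseta^*)^T\Gamma_{\text{noise}}^{-1}\nabla_\bseta F_\xi(\bseta^*) \in \bR^{d_\bseta \times d_\bseta}.
\end{equation*}
Using \eqref{eq:red-jac} together with $\Psi_\bsm = \Gamma_{\text{prior}}^{1/2}\Phi_\bsm$ and $\Phi_\bsm^T\Phi_\bsm = I_{r_\bsm}$, the reduced matrix $\hat{H}^{\bsy,\xi}_\text{misfit}(\bsbeta^{\bsy,\xi}_\text{MAP})$ in \eqref{eq:rEigen} has the same nonzero spectrum as its isometric lift
\begin{equation*}
  \hat{H} := \nabla_\bseta\hat{F}_\xi(\hat{\bseta}^*)^T\Gamma_{\text{noise}}^{-1}\nabla_\bseta\hat{F}_\xi(\hat{\bseta}^*) = \Phi_\bsm\, \hat{H}^{\bsy,\xi}_\text{misfit}(\bsbeta^{\bsy,\xi}_\text{MAP})\, \Phi_\bsm^T \in \bR^{d_\bseta \times d_\bseta},
\end{equation*}
so that $\hat{\lambda}_i$ coincides with the $i$-th largest eigenvalue of $\hat{H}$ for $i = 1, \dots, r_\bsm$. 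Weyl's inequality for symmetric matrices then gives $|\lambda_i - \hat{\lambda}_i| \leq \|H - \hat{H}\|$ for each $i$ in the range of interest, where the spectral norm is dominated by the Frobenius norm used in Assumption \ref{ass:assumptions}.

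Next, I would introduce the intermediate matrix
\begin{equation*}
  H' := \nabla_\bseta F_\xi(\hat{\bseta}^*)^T\Gamma_{\text{noise}}^{-1}\nabla_\bseta F_\xi(\hat{\bseta}^*),
\end{equation*}
write $H - \hat{H} = (H - H') + (H' - \hat{H})$, and apply the telescoping identity $A^T C A - B^T C B = (A - B)^T C A + B^T C (A - B)$ to each piece. For the first piece, take $A = \nabla_\bseta F_\xi(\bseta^*)$ and $B = \nabla_\bseta F_\xi(\hat{\bseta}^*)$; the Lipschitz assumption \eqref{eq:Lipschitz}, combined with the MAP error bound $\|\bseta^* - \hat{\bseta}^*\| \leq \varepsilon_\bsm$ from Theorem \ref{thm:1} and the boundedness of $\nabla_\bseta F_\xi$ inherited from A2, yields $\|H - H'\| \leq C_1\, \varepsilon_\bsm$. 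For the second piece, take $A = \nabla_\bseta F_\xi(\hat{\bseta}^*)$ and $B = \nabla_\bseta\hat{F}_\xi(\hat{\bseta}^*)$; the surrogate Jacobian bound \eqref{eq:approxFJ} together with boundedness of both Jacobians yields $\|H' - \hat{H}\| \leq C_2\, \varepsilon_2$. Combining these estimates with Weyl's inequality and substituting $\varepsilon_\bsm = c_1\varepsilon_1 + c_2\varepsilon_2 + c_3\varepsilon_3$ from Theorem \ref{thm:1} produces \eqref{eq:boundeigenvalue} with $\hat{c}_1 = C_1 c_1$, $\hat{c}_2 = C_1 c_2 + C_2$, and $\hat{c}_3 = C_1 c_3$.

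The main subtlety is the dimension mismatch between the $d_\bseta$-dimensional problem for $H$ and the $r_\bsm$-dimensional reduced problem for $\hat{H}^{\bsy,\xi}_\text{misfit}$; this is resolved by the isometric lift via $\Phi_\bsm$, which preserves the nonzero spectrum and simply pads the spectrum of $\hat{H}$ with zeros that are harmless for the leading $i = 1, \dots, r \leq r_\bsm$ eigenvalues targeted in the claim. A secondary minor point is that the constants $C_1, C_2$ require uniform boundedness of $\nabla_\bseta F_\xi$ and $\nabla_\bseta \hat{F}_\xi$ on a neighborhood containing both $\bseta^*$ and $\hat{\bseta}^*$, which follows from A2 together with the bound $\|\bseta^* - \hat{\bseta}^*\| \leq R_\bseta$ already assumed in Theorem \ref{thm:1}.
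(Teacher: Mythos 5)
Your proposal is correct and follows essentially the same route as the paper's proof: recast both problems as symmetric eigenvalue problems for the whitened Gauss--Newton Hessians $H$ and $\hat{H}$, apply Weyl's inequality, and control $\|H-\hat{H}\|$ by splitting the Jacobian perturbation into a MAP-displacement part (Lipschitz assumption plus Theorem~\ref{thm:1}) and a surrogate-error part (A2). The only cosmetic difference is that you telescope at the Hessian level through the intermediate matrix $H'$, whereas the paper applies the triangle inequality directly to the Jacobians and then a single product bound; your extra care with the isometric lift $\Phi_\bsm(\cdot)\Phi_\bsm^T$ resolving the dimension mismatch is a point the paper glosses over.
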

\begin{proof}
    Let $H$ denote the prior-preconditioned Hessian of the misfit evaluated at the MAP point as
    \begin{equation}
        H = (\nabla_\bseta F_\xi(\bseta^*))^T \Gamma_{\text{noise}}^{-1} \nabla_\bseta F_\xi(\bseta^*),
    \end{equation}
then the eigenvalues of $H$, given by the eigenvalue problem   
\begin{equation}\label{eq:eigenH}
    H \bsv_i = \lambda_i \bsv_i, \quad i = 1, \dots, r
\end{equation}
are the same as the eigenvalues of the generalized eigenvalue problem \eqref{eq:disc-gen-eig}, and the eigenvectors satisfy $\bsv_i = \Gamma_{\text{prior}}^{-1/2} \bsw_i$, $i = 1, \dots, r$. Let $\hat{H}$ denote the approximate prior-preconditioned Hessian of the misfit evaluated at the approximate MAP point as
\begin{equation}
    \hat{H} = (\nabla_\bseta \hat{F}_\xi(\hat{\bseta}^*))^T \Gamma_{\text{noise}}^{-1} \nabla_\bseta \hat{F}_\xi(\hat{\bseta}^*),
\end{equation}
then the eigenvalues of $\hat{H}$, given by the eigenvalue problem   
\begin{equation}\label{eq:eigenHhat}
    \hat{H} \hat{\bsv}_i = \hat{\lambda}_i \hat{\bsv}_i, \quad i = 1, \dots, r
\end{equation}
are the same as the eigenvalues of the projected eigenvalue problem \eqref{eq:rEig}, and the eigenvectors $\hat{\bsv}_i = \Gamma_{\text{prior}}^{-1/2} 
 \Psi_\bsm \bsu_i$. This results from approximating the forward map with its approximate Jacobian $\nabla_\bseta \hat{F}_\xi $ given as in \eqref{eq:red-jac}. Let us first bound the Jacobians as 
 \begin{equation}
     ||\nabla_\bseta F_\xi(\bseta^*) - \nabla_\bseta \hat{F}_\xi(\hat{\bseta}^*)|| \leq ||\nabla_\bseta F_\xi(\bseta^*) - \nabla_\bseta F_\xi(\hat{\bseta}^*)|| + ||\nabla_\bseta F_\xi(\hat{\bseta}^*) - \nabla_\bseta \hat{F}_\xi(\hat{\bseta}^*)||
 \end{equation}
by a triangle inequality, which leads to the bound 
\begin{equation}\label{eq:JacBound}
     ||\nabla_\bseta F_\xi(\bseta^*) - \nabla_\bseta \hat{F}_\xi(\hat{\bseta}^*)|| \leq L_F ||\bseta^* - \hat{\bseta}^*|| + \varepsilon_2 \leq c_1 L_F \varepsilon_1 + (c_2 L_F + 1) \varepsilon_2 + c_3 L_F \varepsilon_3,
 \end{equation}
for which we used the Lipschitz continuity assumption in \eqref{eq:Lipschitz} and assumption \eqref{eq:approxFJ} in the first inequality and the bound \eqref{eq:MAPbound} for the MAP point approximation in the second inequality. Moreover, we have 
\begin{equation}\label{eq:HessBound}
    ||H - \hat{H}|| \leq c_H ||\nabla_\bseta F_\xi(\bseta^*) - \nabla_\bseta \hat{F}_\xi(\hat{\bseta}^*)||,
\end{equation}
with $c_H = ||\Gamma_{\text{prior}}|| \, ||\Gamma^{-1}_{\text{noise}}|| (||\nabla_\bseta F_\xi(\bseta^*)|| + ||\nabla_\bseta \hat{F}_\xi(\hat{\bseta}^*)||) $, 
for which we added and subtracted a term $ (\nabla_\bseta \hat{F}_\xi(\hat{\bseta}^*))^T \Gamma_{\text{noise}}^{-1} \nabla_\bseta {F}_\xi({\bseta}^*)$ in $H - \hat{H}$. By Weyl's theorem \cite[Ch.\ 5.7]{bai2000templates} for the perturbation of Hermitian matrices, which both $H$ and $\hat{H}$ belong to, we have 
\begin{equation}
    |\lambda_i - \hat{\lambda}_i| \leq ||H - \hat{H}||_{\text{op}}, \quad i = 1, \dots, r,
\end{equation}
which concludes by using the bounds \eqref{eq:JacBound} and \eqref{eq:HessBound}, and the fact that the operator norm is smaller than the Frobenius norm, i.e., 
$||H - \hat{H}||_{\text{op}} \leq ||H - \hat{H}||$,
with constants 
\begin{equation}\label{eq:hatc}
    \hat{c}_1 = c_1 L_F c_H, \quad \hat{c}_2 = (c_2 L_F + 1) c_H, \quad \hat{c}_3 = c_3 L_F c_H.
\end{equation}
\end{proof}

\subsection{Approximation error of the optimality criteria}\label{sec:error-Opt}

\begin{theorem}

    Under Assumption \ref{ass:assumptions} and the assumptions in Theorems \ref{thm:2}, we have the approximation errors for the A-optimality in \eqref{eq:a-final-approx} as 
    \begin{equation}
        \left| \sum_{i=1}^r \frac{\lambda_i}{1+\lambda_i} - \sum_{i=1}^r \frac{\hat{\lambda}_i}{1+\hat{\lambda}_i}\right| \leq r \varepsilon_\lambda,
    \end{equation}
with the rank $r$ and the bound $\varepsilon_\lambda$ in \eqref{eq:boundeigenvalue}, for the D-optimality in \eqref{eq:d-final-approx} as 
\begin{equation}
\left|\sum_{i=1}^r\log(1+\lambda_i) - \sum_{i=1}^r\log(1+\hat{\lambda}_i)\right| \leq r \varepsilon_\lambda,
\end{equation}
and for the EIG-optimality in \eqref{eq:eig-final-approx} as 
\begin{equation}
    \left|\mathcal{E} - \hat{\mathcal{E}} \right| \leq 2r \varepsilon_\lambda + c_\bsm \varepsilon_\bsm, 
\end{equation}
with $c_\bsm$ defined in \eqref{eq:cMAP}, and the information gain 
\begin{equation}\label{eq:cE}
    \mathcal{E} = \sum_{i=1}^r\log (1 + \lambda_i) - \sum_{i=1}^r \frac{\lambda_i}{1+\lambda_i}  + ||\bsm^{\bsy, \xi}_{\text{MAP}}-\bsm_{\text{prior}}||^2_{\Gamma_{\text{prior}}^{-1}}.
\end{equation}
$\hat{\mathcal{E}}$ is the approximation of  \eqref{eq:cE} with the approximate eigenvalues and MAP point.
\end{theorem}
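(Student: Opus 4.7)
The plan is to treat the three optimality criteria in parallel by observing that the first two are sums of one-dimensional Lipschitz functions applied to the eigenvalues $\lambda_i,\hat{\lambda}_i$, which are nonnegative since the misfit Hessians in \eqref{eq:eigenH} and \eqref{eq:eigenHhat} are positive semi-definite Gauss--Newton forms. The third (EIG) splits into the first two plus a quadratic-norm term at the MAP point, which I would handle separately using Theorem \ref{thm:1}. Throughout, I would invoke Theorem \ref{thm:2} in the form $|\lambda_i-\hat{\lambda}_i|\leq \varepsilon_\lambda$ for each $i=1,\dots,r$, and finish via a triangle inequality.

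For the A-optimality, I would consider $f(x)=x/(1+x)$ on $[0,\infty)$. Its derivative $f'(x)=1/(1+x)^2$ is bounded by $1$, so $f$ is $1$-Lipschitz on the nonnegative reals. A termwise triangle inequality then yields
\begin{equation*}
\left|\sum_{i=1}^r \frac{\lambda_i}{1+\lambda_i}-\sum_{i=1}^r \frac{\hat{\lambda}_i}{1+\hat{\lambda}_i}\right|\leq \sum_{i=1}^r |f(\lambda_i)-f(\hat{\lambda}_i)|\leq \sum_{i=1}^r |\lambda_i-\hat{\lambda}_i|\leq r\varepsilon_\lambda.
\end{equation*}
For the D-optimality, the identical argument goes through with $g(x)=\log(1+x)$, whose derivative $g'(x)=1/(1+x)$ is likewise bounded by $1$ on $[0,\infty)$, giving $|g(\lambda_i)-g(\hat{\lambda}_i)|\leq |\lambda_i-\hat{\lambda}_i|$ and summing to $r\varepsilon_\lambda$.

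For the EIG-optimality I would split $\mathcal{E}-\hat{\mathcal{E}}$ into three pieces according to the definition \eqref{eq:cE}. The first two, the log-determinant and trace parts, are bounded by $r\varepsilon_\lambda$ each by the A- and D-arguments, contributing $2r\varepsilon_\lambda$. For the third piece, I use the identity $\|\bsm^{\bsy,\xi}_\text{MAP}-\bsm_\text{prior}\|^2_{\Gamma_\text{prior}^{-1}}=\|\bseta^*\|^2$ from the change of variables \eqref{eq:meta}, and similarly $\|\hat{\bsm}^{\bsy,\xi}_\text{MAP}-\bsm_\text{prior}\|^2_{\Gamma_\text{prior}^{-1}}=\|\hat{\bseta}^*\|^2$. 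Then the difference of squares factors as
\begin{equation*}
\bigl|\|\bseta^*\|^2-\|\hat{\bseta}^*\|^2\bigr|=\bigl|\|\bseta^*\|+\|\hat{\bseta}^*\|\bigr|\cdot\bigl|\|\bseta^*\|-\|\hat{\bseta}^*\|\bigr|\leq (\|\bseta^*\|+\|\hat{\bseta}^*\|)\,\|\bseta^*-\hat{\bseta}^*\|,
\end{equation*}
where the inequality uses the reverse triangle inequality. Setting
\begin{equation}\label{eq:cMAP}
c_\bsm:=\|\bseta^*\|+\|\hat{\bseta}^*\|
\end{equation}
and applying Theorem \ref{thm:1} to bound $\|\bseta^*-\hat{\bseta}^*\|\leq \varepsilon_\bsm$ delivers the third contribution $c_\bsm\varepsilon_\bsm$, and combining all three pieces yields $|\mathcal{E}-\hat{\mathcal{E}}|\leq 2r\varepsilon_\lambda+c_\bsm\varepsilon_\bsm$.

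The routine parts are the Lipschitz estimates and the triangle inequalities; the only subtle points are to verify that the eigenvalues genuinely lie in $[0,\infty)$ (immediate from the Gauss--Newton Hessian being PSD and the positive-definite noise covariance) and that $c_\bsm$ is finite, which follows from the boundedness assumptions on the forward maps and the coercivity of the prior-regularized objectives in \eqref{eq:objMAP}--\eqref{eq:hatobjMAP} that implicitly bound the MAP points. I do not anticipate a significant technical obstacle beyond bookkeeping, since the heavy lifting has been accomplished in Theorems \ref{thm:1} and \ref{thm:2}.
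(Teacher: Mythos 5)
Your proof is correct and follows essentially the same route as the paper's: termwise bounds on the eigenvalue functionals via Theorem \ref{thm:2}, followed by a difference-of-squares factorization with the reverse triangle inequality for the MAP-norm term, with your $c_\bsm=\|\bseta^*\|+\|\hat{\bseta}^*\|$ coinciding with \eqref{eq:cMAP} through the change of variables \eqref{eq:meta}. The only (cosmetic) difference is that you bound the A- and D-terms by the $1$-Lipschitz property of $x/(1+x)$ and $\log(1+x)$ on $[0,\infty)$ rather than by the paper's explicit algebraic identities; if anything, your mean-value argument for the log term is the cleaner one, since the paper's appeal to $\log(1+x)\le|x|$ requires the extra care it supplies about the sign and size of $(\lambda_i-\hat{\lambda}_i)/(1+\hat{\lambda}_i)$.
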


\begin{proof}

For the A-optimality, we have 
\begin{equation}
    \left| \sum_{i=1}^r \frac{\lambda_i}{1+\lambda_i} - \sum_{i=1}^r \frac{\hat{\lambda}_i}{1+\hat{\lambda}_i}\right| = \left| \sum_{i=1}^r \frac{\lambda_i - \hat{\lambda}_i}{(1+\lambda_i)(1+\hat{\lambda}_i)} \right| \leq \sum_{i=1}^r  \frac{|\lambda_i - \hat{\lambda}_i|}{(1+\lambda_i)(1+\hat{\lambda}_i)}  \leq r \varepsilon_r,
\end{equation}
where we used $\lambda_i > 0, \hat{\lambda}_i > 0$, $i = 1, \dots, r$, and the bound \eqref{eq:boundeigenvalue}.

For the D-optimality, we have 
\begin{equation}
    \left|\sum_{i=1}^r\log(1+\lambda_i) - \sum_{i=1}^r\log(1+\hat{\lambda}_i)\right| = \left| \sum_{i=1}^r\log\left(1+ \frac{\lambda_i-\hat{\lambda}_i}{1+\hat{\lambda}_i}\right) \right| \leq \sum_{i=1}^r \frac{|\lambda_i - \hat{\lambda}_i|}{1+\hat{\lambda}_i} \leq r \varepsilon_\lambda,
\end{equation}
where we used $\log(1+x) \leq |x|$ for $x > -1$, $|\lambda_i - \hat{\lambda}_i| < (1+\hat{\lambda}_i)$, and $\hat{\lambda}_i > 0$, $i = 1, \dots, r$, and the bound \eqref{eq:boundeigenvalue}.

For the EIG-optimality, we have for the last term 
\begin{equation}
    \left|||\bsm^* - \bsm_{\text{prior}}||_{\Gamma_{\text{prior}}^{-1}}^2  - ||\hat{\bsm}^* - \bsm_{\text{prior}}||_{\Gamma_{\text{prior}}^{-1}}^2\right| \leq c_\bsm ||\bsm^* - \hat{\bsm}^*||_{\Gamma^{-1}_{\text{prior}}} \leq c_\bsm \varepsilon_\bsm
\end{equation}
where we used the bound \eqref{eq:MAPbound} with 
\begin{equation}\label{eq:cMAP}
c_\bsm = ||\bsm^* - \bsm_{\text{prior}}||_{\Gamma_{\text{prior}}^{-1}}  + ||\hat{\bsm}^* - \bsm_{\text{prior}}||_{\Gamma_{\text{prior}}^{-1}},    
\end{equation}
which concludes using triangular inequality and the bound for the A-optimality and D-optimality. 

\end{proof}

\section{Numerical result}
\label{sec:numerical}
This section demonstrates our proposed computational framework's accuracy, scalability, and efficiency for two Bayesian OED problems. To evaluate the accuracy of neural network surrogates, we consider the approximation of the PtO map, its Jacobian, the MAP point, the eigenvalues, and three different optimality criteria for Bayesian OED. We also verify that the assumptions outlined in Assumption~\ref{ass:assumptions} align with our numerical results, validating the theoretical foundation of our approach. For scalability, we illustrate how accuracy is preserved with increasing parameter dimensions while quantifying the computational savings in terms of required PDE solves. To demonstrate efficiency, we compare the computational cost of the neural network surrogates with that of the high-fidelity approximation. This comparison includes both the cost of offline training and the surrogate's online evaluation in selecting optimal sensors, providing a comprehensive assessment of the framework's computational advantages.

\subsection{Test problems}
We consider two test problems, one of a linear diffusion problem and the other of a nonlinear convection-diffusion-reaction problem with a nonlinear reaction term, both with infinite-dimensional parameter fields and leading to nonlinear Bayesian inverse problems. We present the two examples in this subsection and the numerical results side by side in the following subsections for easy comparison.

\subsubsection{Linear diffusion problem}
For the first test problem, we consider Bayesian OED for optimal sensor placement to infer a permeability field in pressure-driven Darcy flow in a physical domain $D = (0,1)^2$, which is governed by the following diffusion equation prescribed with suitable boundary conditions
\begin{equation}\label{eq:diffusion} 
\begin{split}
 -\nabla \cdot ( e^m \nabla u) &= 0 \text{ in } D, \\
 u &= 1 \text{ on } \partial D_\text{top}, \\
 u &= 0 \text{ on } \partial D_\text{bottom}, \\
 e^m\nabla u \cdot \mathbf{n} &= 0 \text{ on } \partial D_\text{sides},    
\end{split}   
\end{equation}
where the state variable $u$ represents the pressure field of the Darcy flow driven by the pressure difference with Dirichlet boundary conditions $u = 1$ on the top boundary $\partial D_\text{top}$ and $u = 0$ on the bottom boundary $\partial D_\text{bottom}$. A homogeneous Neumann boundary condition is assumed on the two sides $\partial D_\text{sides}$, where $\mathbf{n}$ denotes the unit-length outward normal for the side boundaries. $e^m$ represents a positive permeability field with log-normal distribution, i.e.,
the model parameter $m \sim \mathcal{N}(0, \mathcal{C})$ is assumed be to be Gaussian random field with Mat\'ern covariance $\mathcal{C} = \cA^{-2}=(-\gamma\Delta + \kappa I)^{-2}$, where we set $\gamma = 0.1$, and $\kappa=0.5$. We use a finite element method (FEM) with piecewise linear polynomials to approximate the parameter and pressure fields discretized using a uniform mesh of size $64\times 64$. The top of Figure \ref{fig:poisson_setup} shows a random sample of the parameter, the corresponding state as a solution of the diffusion equation, and its pointwise observation with Gaussian additive noise following $\cN(0,0.1)$ at the $50$ candidate sensor locations. We consider 50 candidate sensors in the lower part of the physical domain and select 5 sensors from the 50 candidates.

\begin{figure}[!htb]
  \centering      
  \begin{subfigure}{0.32\textwidth}
        \includegraphics[width=\linewidth]{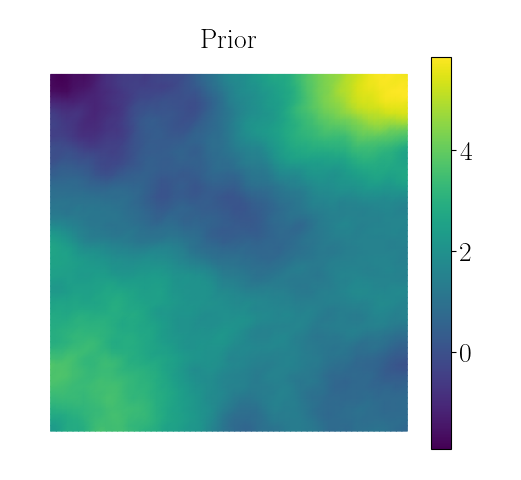}
    \end{subfigure}
    \hfill
    \begin{subfigure}{0.32\textwidth}
        \includegraphics[width=\linewidth]{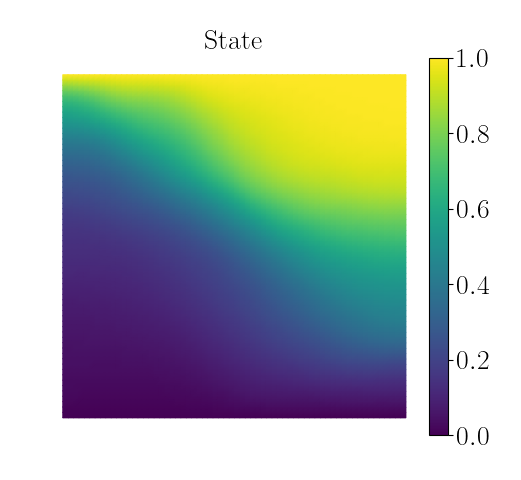}
    \end{subfigure}
    \hfill
    \begin{subfigure}{0.32\textwidth}
        \includegraphics[width=\linewidth]{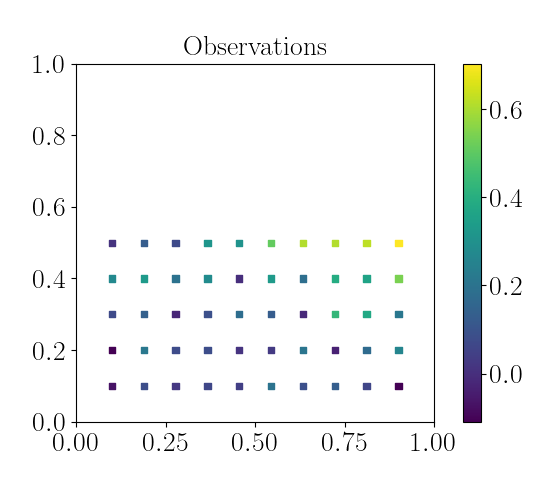}
    \end{subfigure}

    \vskip -0.3cm

    \begin{subfigure}{0.32\textwidth}
      \includegraphics[width=\linewidth]{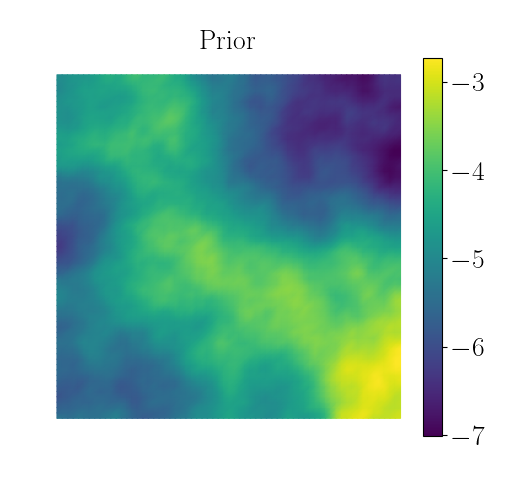}
    \end{subfigure}
    \hfill
    \begin{subfigure}{0.32\textwidth}
        \includegraphics[width=\linewidth]{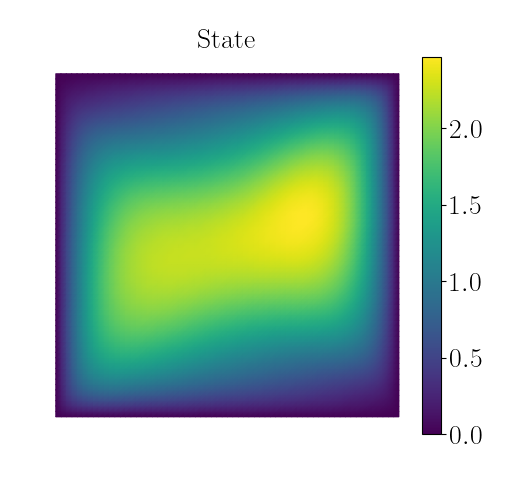}
    \end{subfigure}
    \hfill
    \begin{subfigure}{0.32\textwidth}
        \includegraphics[width=\linewidth]{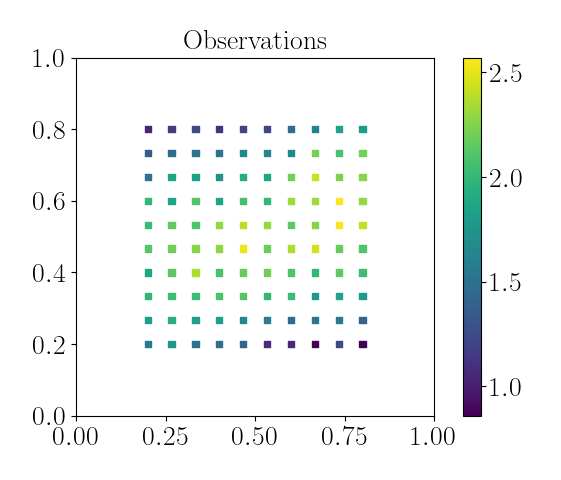}
    \end{subfigure}  
  \caption{A random prior sample $m \sim \cN(0, \cC)$ (left), the high-fidelity solution $u(m)$ by finite element approximation (middle), and the observation data $\bsy $ at all the candidate observation points (right). Top: diffusion problem. Bottom:  CDR problem.}\label{fig:poisson_setup}
\end{figure}

\subsubsection{Nonlinear convection-diffusion-reaction problem}

For the second test problem, we consider Bayesian OED to infer a reaction coefficient field in mass transfer problem in the domain  $D = (0,1)^d$, $d = 2, 3$, which is governed by the nonlinear (semilinear) convection-diffusion-reaction (CDR) equation
\begin{equation}\label{eq:cdr_equation}
\begin{split}
-\nu \Delta u + \bsv \cdot \nabla u + e^m u^3 = f \text{ in } D&, \\
u = 0 \text{ on } \partial D,&  
\end{split}
\end{equation}
where a homogeneous Dirichlet boundary condition is prescribed on the whole boundary $\partial D$. The source term is given as a Gaussian bump $f(\bsx) = \max(0.5, \exp(-25||\bsx-\bsx_s||_2^2))$ at $\bsx_s = (0.7, 0.7)$ (2D) and $\bsx_s = (0.7, 0.7, 0.7)$ (3D). 
The velocity field $\bsv$ in the convection term is obtained as the solution of the following steady-state Navier--Stokes equations 
\begin{equation}
    \begin{split}
- \nu \Delta \bsv  + \bsv \cdot \nabla \bsv + \nabla p = 0 \text{ in } D&,\\
\nabla\cdot \bsv = 0 \text{ in } D&, \\
\bsv = \bsg \text{ on } \partial D&,
    \end{split}
\end{equation}
where $\bsg = (0, 1)$ (2D) and $\bsg = (0, 1, 0)$ (3D) on left wall, $\bsg=(0, -1)$ (2D) and $\bsg=(0, -1, 0)$ (3D) on the right wall, and zero elsewhere. We set the viscosity $\nu = 0.01$. We use a finite element method to solve the equations with piecewise quadratic polynomials for the approximation of the velocity field $\bsv$ and piecewise linear polynomials for the approximation of the pressure field $p$, the parameter field $m$, and the concentration field $u$. The finite element solution $u$ at a random parameter sample $m$ with the same Gaussian prior distribution as in the last example, and its corresponding observation data $\bsy$ at 100 candidate observation locations are shown at the bottom plots of Figure \ref{fig:poisson_setup}. We seek to choose 10 sensors out of the 100 candidates for the Bayesian OED problem.



\subsection{Dimension reduction}
We reduce the dimension for the high-dimensional input parameters in both application problems and the output observables in the second problem since the output dimension $d_s = 50$ is relatively small in the first problem. We first use a uniform mesh of size $64\times 64$, which leads to a discretization of the infinite-dimensional input parameter $m$ to a $4,225$-dimensional parameter vector $\bsm$ by finite element approximation as in \eqref{eq:mFEM}. We compute the eigenpairs for the KLE projection \eqref{eq:kle} and the DIS projection \eqref{eq:ias} by solving the eigenvalue problem for $\Gamma_{\text{prior}}$ and the generalized eigenvalue problem \eqref{eq:gIAS}, respectively. We use $1,024$ random samples to approximate the expectation in \eqref{eq:gIAS}. The eigenvalues and eigenvectors of KLE and DIS are shown in Figure \ref{fig:Poisson-eigenvalues} and \ref{fig:poisson-combined}. From Figure \ref{fig:Poisson-eigenvalues}, we observe that the DIS eigenvalues are smaller and also decay faster than the KLE eigenvalues for both problems. From the second and third rows of Figure \ref{fig:poisson-combined}, we see that the DIS basis functions are different for the two problems, and the first DIS basis functions expose the features of the solutions of the two PDE problems as shown in Figure \ref{fig:poisson_setup}, while the KLE basis functions are the same for the two problems as shown in the first row.

\begin{figure}[htbp]
    \centering
    \includegraphics[width=0.4\linewidth]{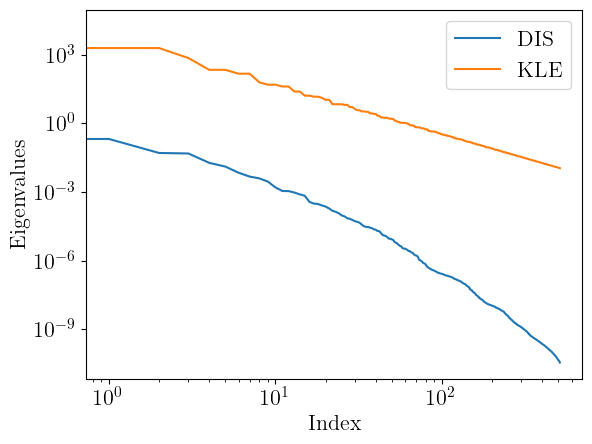}
    \includegraphics[width=0.4\linewidth]{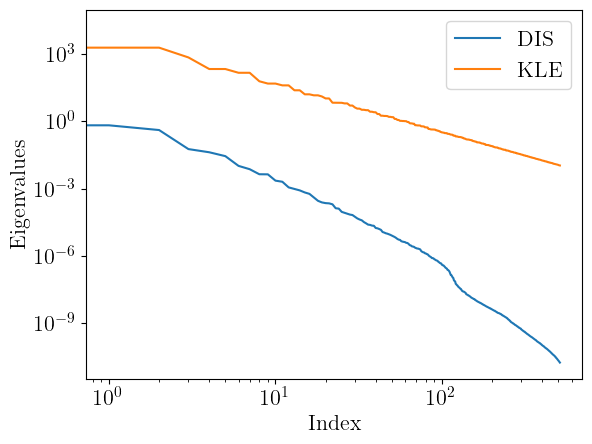}
    \caption{Decay of the eigenvalues of the prior covariance $\Gamma_\text{prior}$ used in the KLE approximation of the parameter \eqref{eq:kle} and the eigenvalues of the generalized eigenvalue problem \eqref{eq:gIAS} used in DIS approximation of the parameter \eqref{eq:ias}. Left: for the diffusion problem. Right: for the  CDR problem.}
    \label{fig:Poisson-eigenvalues}
\end{figure}

\begin{figure}[!htbp]
    \centering
    \begin{subfigure}[b]{\textwidth}
        \centering
        \includegraphics[width=0.9\textwidth]{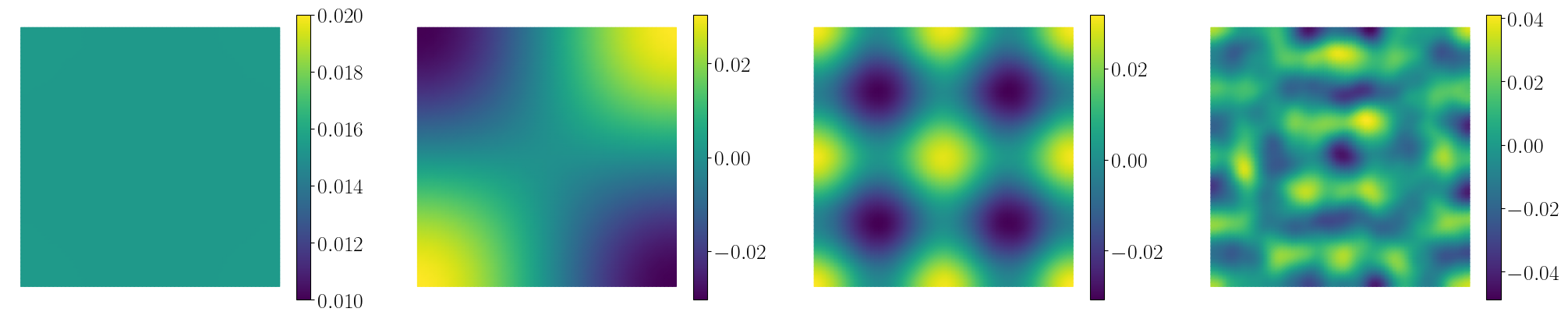}
    \end{subfigure}

    
    \begin{subfigure}[b]{\textwidth}
        \centering
        \includegraphics[width=0.9\textwidth]{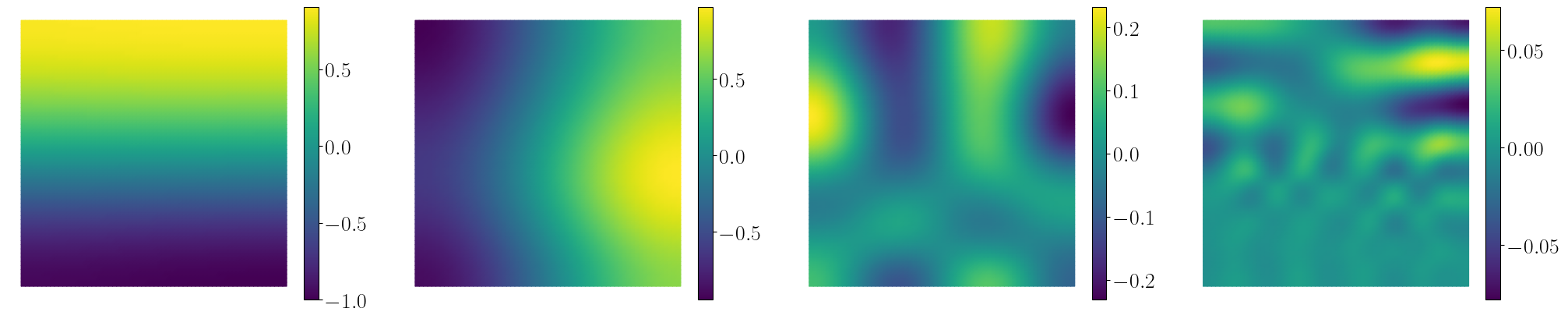}
    \end{subfigure}
    

  
  \begin{subfigure}[b]{\textwidth}
      \centering
      \includegraphics[width=0.9\textwidth]{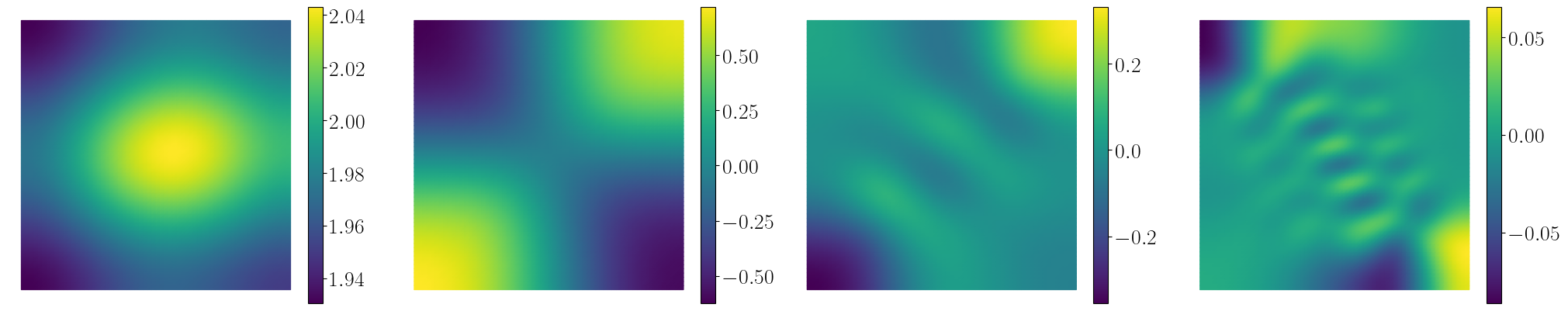}
  \end{subfigure}
    \caption{KLE bases 1, 4, 16, 64 (top). DIS bases 1, 4, 16, 64 for the diffusion (middle) and CDR (bottom) problems.}
    \label{fig:poisson-combined}
\end{figure}


\begin{figure}[!ht]
  \centering      
    
    \begin{subfigure}{0.32\textwidth}
      \includegraphics[width=\linewidth]{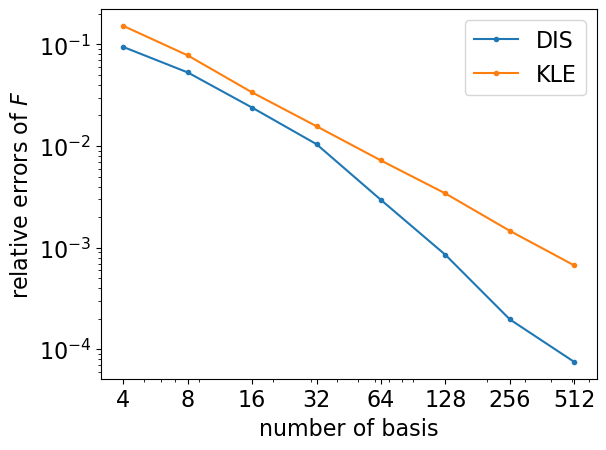}
    \end{subfigure}
    \hfill
    \begin{subfigure}{0.32\textwidth}
        \includegraphics[width=\linewidth]{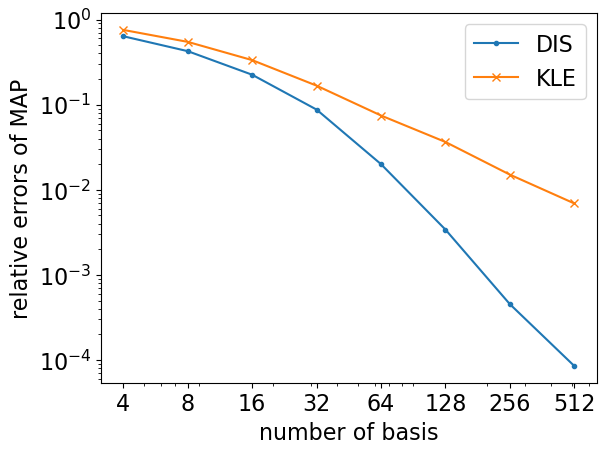}
    \end{subfigure}
    \hfill
     \begin{subfigure}{0.32\textwidth}
        \includegraphics[width=\linewidth]{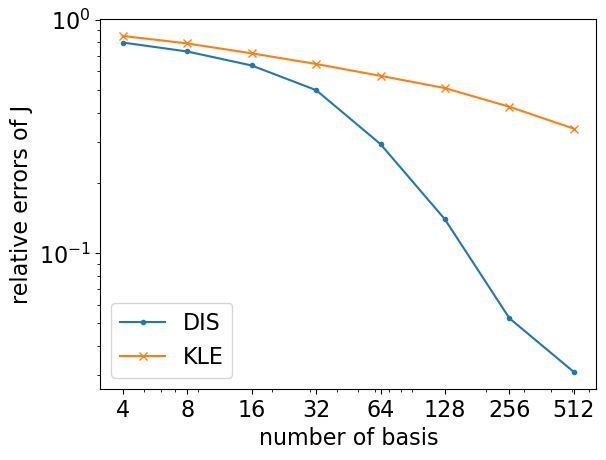}
    \end{subfigure}

    \begin{subfigure}{0.32\textwidth}
      \includegraphics[width=\linewidth]{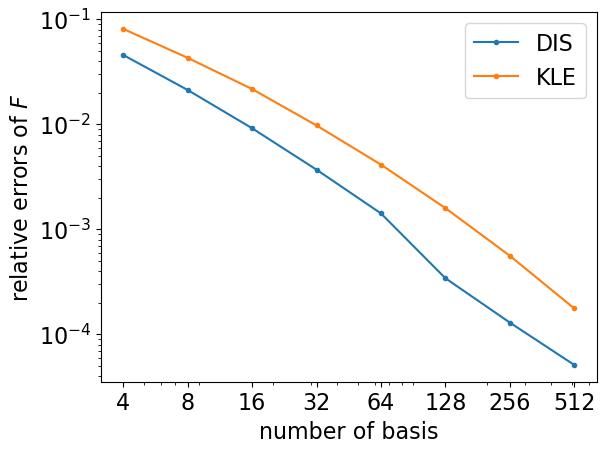}
    \end{subfigure}
    \hfill
    \begin{subfigure}{0.32\textwidth}
      \includegraphics[width=\linewidth]{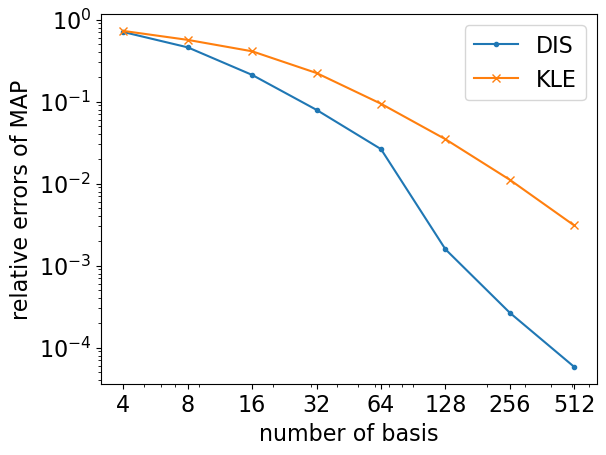}
    \end{subfigure}
    \hfill 
    \begin{subfigure}{0.32\textwidth}
      \includegraphics[width=\linewidth]{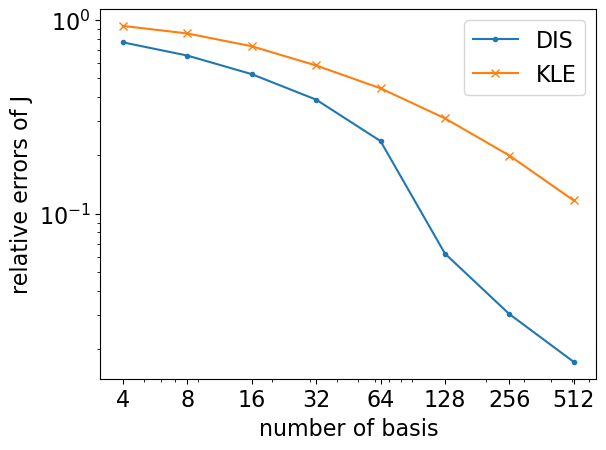}
    \end{subfigure}

  \begin{subfigure}{0.32\textwidth}
      \includegraphics[width=\linewidth]{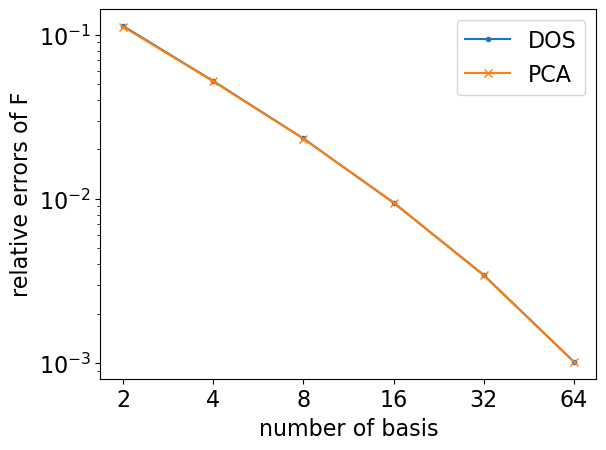}
  \end{subfigure}
  \hfill
  \begin{subfigure}{0.32\textwidth}
      \includegraphics[width=\linewidth]{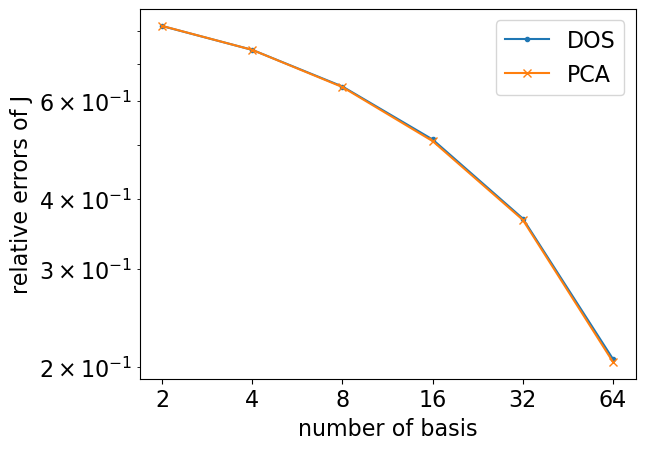}
  \end{subfigure}

  \caption{Relative errors of the input projection (top and middle rows for the two 2D examples, respectively) and output projection (bottom row for the  CDR problem) for the observables (left), the MAP point (middle), and the Jacobian (right). We use DIS \eqref{eq:ias} and KLE \eqref{eq:kle} basis for input projection, and DOS \eqref{eq:oas} and PCA \eqref{eq:pca} basis for output projection. 
  }\label{fig:poisson_recon}
\end{figure}

The relative errors of different input and output projection methods are shown in Figure \ref{fig:poisson_recon}, 
Specifically, we compute the mean of the relative input projection errors for the observables and the Jacobian as 
\begin{equation}\label{eq:errorFJ}
  \bE_{\bsm} \left[ \frac{||F(\bsm) - F(P_r(\bsm))||_2}{||F(\bsm)||_2}\right] \text{ and } \bE_{\bsm} \left[ \frac{||J(\bsm) - J(P_r(\bsm))||_F}{||J(\bsm)||_F}\right]
\end{equation}
where $P_r$ denotes a linear projector of the parameter vector $\bsm$ by either KLE in \eqref{eq:kle} or DIS in \eqref{eq:ias}, $||\cdot||_2$ and $||\cdot||_F$ denote the Euclidean norm for the observable vector and the Frobenius norm for the Jacobian matrix, respectively. More explicitly, we evaluate the projected Jacobian matrix as 
\begin{equation}
  J(P_r(\bsm)) = J (\bsm) \Psi^{\text{KLE}}_\bsm (\Psi^{\text{KLE}}_\bsm)^T \text{ or } J(P_r(\bsm)) = J (\bsm) \Psi^{\text{DIS}}_\bsm (\Psi^{\text{DIS}}_\bsm)^T \Gamma_{\text{prior}}^{-1}
\end{equation}
for the KLE basis $\Psi^{\text{KLE}}_\bsm$ or the DIS basis $\Psi^{\text{DIS}}_\bsm$. The expectation in \eqref{eq:errorFJ} is evaluated by SAA using 100 
samples. We also compute the relative input projection error for the MAP point in \eqref{eq:MAP} as
\begin{equation}
  \bE_{\bsy} \left[ \frac{||\bsm_{\text{MAP}}^{\bsy, \xi} - P_r(\bsm_{\text{MAP}}^{\bsy, \xi})||_{\Gamma_{\text{prior}}^{-1}}}{||\bsm_{\text{MAP}}^{\bsy, \xi}||_{\Gamma_{\text{prior}}^{-1}}}\right]\label{eq:MAP-re}
\end{equation}
which is consistent with A3 in Assumption \ref{ass:assumptions}.
The expectation is evaluated by SAA using 100 samples of $\bsy = F(\bsm) + \bsepsilon$ by drawing random samples $\bsm$ and $\bsepsilon$. From the first and second rows of Figure \ref{fig:poisson_recon}, we can observe that using the DIS basis yields smaller projection errors for all three quantities. Moreover, the DIS basis is particularly advantageous for the approximation of Jacobian, which plays a key role in evaluating the optimality criteria. On the other hand, as shown in the third row of Figure \ref{fig:poisson_recon}, the output projection used in the CDR problem is less sensitive to the projection basis of PCA in \eqref{eq:pca} and DOS in \eqref{eq:oas}, as measured by the average (using  100 
samples) of the relative projection error for the observables and the Jacobian as 
\begin{equation}\label{eq:errorFJout}
  \bE_{\bsm} \left[ \frac{||F(\bsm) - F_r(\bsm)||_2}{||F(\bsm)||_2}\right] \text{ and } \bE_{\bsm} \left[ \frac{||J(\bsm) - \Psi_F \Psi_F^T J(\bsm)||_F}{||J(\bsm)||_F}\right],
\end{equation}
where the projected observables $F_r$ are defined in \eqref{eq:pca} or \eqref{eq:oas} with corresponding PCA basis $\Psi_F = \Psi^{\text{PCA}}_F$ or DOS basis $\Psi_F = \Psi^{\text{DOS}}_F$. For the following computations, we use 128
DIS basis functions for input projection in both problems and 30 
PCA basis functions for output projection in the second problem, leading to less than $1\%$ error in the input and output dimension reduction for the PtO map.

Note that 128 DIS basis functions yield smaller than $1\%$ projection error for the MAP point. This low projection error aligns with A3 in Assumption~\ref{ass:assumptions}, which posits that the projection error for the MAP point is negligible. This result further validates our choice of the DIS basis for dimension reduction in the parameter space, especially in approximating the Jacobian and the MAP point.

\subsection{Accuracy and scalability of DINO approximations}

We construct the DINO surrogates as presented in Section \ref{sec:DINN} using PyTorch \cite{paszke2017automatic}. Specifically, we construct the encoder by the DIS projection with the reduced dimension $r_\bsm = 128$ as in \eqref{eq:ias}. For the CDR problem, we also construct the decoder by the PCA projection with the reduced dimension $r_F = 30$ as in \eqref{eq:pca}. We connect the encoder layer with one linear layer, three ResNet layers, and another linear layer before the decoder layer. The two linear layers map the input and output dimensions to the same dimension as in the ResNet layers, for which we choose 100. We use sigmoid as the activation function inside the ResNet layers and tanh for all other layers. This specific architecture is among the best (with the smallest generalization errors) of several ones we tested with different numbers of layers, dimensions, and activation functions.
We use an Adam optimizer \cite{kingma2014adam} with a learning rate of $0.001$. For the diffusion model, we train the neural network with Jacobian for 100 epochs and without Jacobian for 200 epochs. For the CDR model, we train the neural network with Jacobian for 200 epochs and without Jacobian for 300 epochs. We select these numbers of epochs based on when the validation error stops decreasing. 
We use the libraries FEniCS \cite{alnaes2015fenics}, hIPPYlib \cite{villa2021hippylib}, and hIPPYflow \cite{o2021hippyflow} to generate the high-fidelity data, using a mesh of size 64$\times$64 unless otherwise stated in scalability test, and project the data to get $(\bsbeta_\bsm, \bsbeta_F, J_r)$ used in the loss function \eqref{eq:loss} for training and testing.


\subsubsection{The accuracy of the neural network approximations}

\begin{figure}[!htb]
  \centering
  \includegraphics[width=0.4\textwidth]{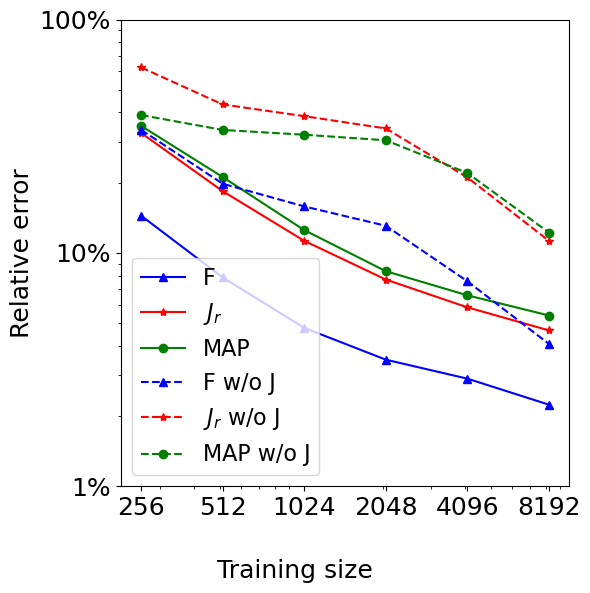}
  \includegraphics[width=0.4\linewidth]{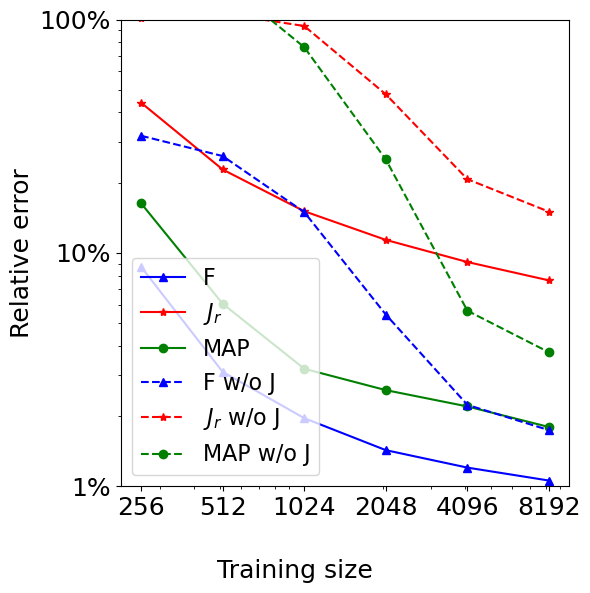}
  \caption{Mean of relative errors with increasing training size for the neural network approximations of the PtO map $F$, the reduced Jacobian ${J_r}$, and the MAP point $\bsm_\text{MAP}$. The solid and dashed line results are for the neural networks trained with and without the Jacobian. Diffusion (left) and CDR (right) problems.} \label{fig:perf-dino-nn}
\end{figure}

We train the neural networks {\color{black}ten} times, starting from different initializations with different training sizes. Then, we compute for each case the neural network approximations of the PtO map $F$, the reduced Jacobian $J_r$, and the MAP points by solving the optimization problem \eqref{eq:rMAP}. The relative errors of the PtO map and the reduced Jacobian are defined as 
\begin{equation}\label{eq:errorFJNN}
  \bE_{\bsm} \left[ \frac{||F(\bsm) - F_{\text{NN}}(\bsm)||_2}{||F(\bsm)||_2}\right] \text{ and } \bE_{\bsm} \left[ \frac{||J_r(\bsm) - \nabla_\bsbeta \Phi_\bstheta(\bsbeta_\bsm)||_F}{||J_r(\bsm)||_F}\right],
\end{equation}
averaged over 1024 test samples, and the relative error of the MAP point in $L_2$-norm is defined as
\begin{equation}
  \bE_{\bsy} \left[ \frac{||\bsm_{\text{MAP}}^{\bsy, \xi} - \bsm_r(\bsbeta_{\text{MAP}}^{\bsy, \xi})||_M}{||\bsm_{\text{MAP}}^{\bsy, \xi}||_M}\right]\label{eq:MAP-NN-L2},
\end{equation}
averaged over 200 samples, with $\bsm_r = \bsm_{\text{prior}} + \Psi^{\text{DIS}}_\bsm \bsbeta_{\text{MAP}}^{\bsy, \xi}$, $||\bsm||_M = \sqrt{\bsm^T M \bsm}$ with the mass matrix $M$ defined in \eqref{eq:mass}. We also consider the error in $\Gamma_{\text{prior}}^{-1}$-norm given by 
\begin{equation}
  \bE_{\bsy} \left[ \frac{||\bsm_{\text{MAP}}^{\bsy, \xi} - \bsm_r(\bsbeta_{\text{MAP}}^{\bsy, \xi})||_{\Gamma_\text{prior}^{-1}}}{||\bsm_{\text{MAP}}^{\bsy, \xi}||_{\Gamma_\text{prior}^{-1}}}\right]\label{eq:MAP-NN},
\end{equation}
as reported in Table \ref{tab:metrics_models}.
We use PyTorch's LBFGS optimizer with 100 maximum iterations to compute the MAP points using the neural network surrogate (DINO) trained with the Jacobian data. For the neural network surrogate trained without Jacobian data, we have to use an additional 300 iterations of Pytorch's Adam optimizer with a 0.01 learning rate before LBFGS to obtain a convergent computation of the MAP points. We report the mean of the relative errors averaged over the {\color{black}ten} trials of the neural network training in Figure \ref{fig:perf-dino-nn}. Increasing training size makes the neural network approximations more accurate for all three quantities. Moreover, the neural networks trained with the Jacobian information lead to much more accurate approximations than those trained without Jacobian, especially when the training size is relatively small. See Table \ref{tab:metrics_models} for the relative approximation errors (mean and standard deviation) at the training size of a relatively small size of 1,024 and a larger size of 8,192. 

As we increase the number of samples, the errors of the proposed surrogate for the PtO map $F$ and $J_r$ decrease on the test samples randomly drawn from the prior distribution, ensuring that A2 in Assumption \ref{ass:assumptions} holds. Under this assumption, the MAP point estimation error are bounded by the surrogate errors up to some constants as demonstrated in Theorem \ref{thm:1}, as demonstrated by the results in Table \ref{tab:metrics_models}.

\begin{table}[!htb]
  \footnotesize
  \begin{center}
  \begin{tabular}{|c|c|c|c|c|}
  \hline
  \multicolumn{5}{|c|}{Linear diffusion problem} \\ \hline
  Training size & \multicolumn{2}{c|}{1,024} & \multicolumn{2}{c|}{8,192} \\ \hline
  Loss function & with J & w/o J & with J & w/o J \\ \hline
  $F$ & 4.77\% (0.4\%) & 17.2\% (0.42\%) & 2.23\% (0.34\%) & 5.82\% (0.26\%) \\ \hline
  $J_r$ & 11.26\% (0.23\%) & 40.32\% (0.42\%) & 4.65\% (0.1\%) & 15.4\% (0.28\%) \\ \hline
  MAP ($L_2$) & 12.52\% (0.65\%) & 34.48\% (1.32\%) & 5.39\% (1.39\%) & 17.23\% (0.7\%) \\ \hline
  MAP ($\Gamma_{\text{prior}}^{-1}$) & 26.27\% (0.6\%) & 63.13\% (0.2\%) & 12.06\% (2.02\%) & 26.52\% (1.58\%) \\ \hline
  \multicolumn{5}{|c|}{Nonlinear convection-diffusion-reaction problem} \\ \hline
  Training size & \multicolumn{2}{c|}{1,024} & \multicolumn{2}{c|}{8,192} \\ \hline
  Loss function & with J & w/o J & with J & w/o J \\ \hline
  $F$ & 1.9\% (0.1\%) & 15.94\% (0.93\%) & 1.02\% (0.1\%) & 1.90\% (0.06\%) \\ \hline
  $J_r$ & 15.13\% (0.21\%) & 86.24\% (0.1\%) & 7.7\% (0.21\%) & 17.69\% (0.21\%) \\ \hline
  MAP ($L_2$) & 3.22\% (0.14\%) & 132.1\% (84.73\%) & 1.89\% (0.36\%) & 4.38\% (0.22\%) \\ \hline
  MAP ($\Gamma_{\text{prior}}^{-1}$) & 10.1\% (0.12\%) & 124.7\% (32.97\%) & 9.80\% (0.14\%) & 11.78\% (0.12\%) \\ \hline
  \end{tabular}
  \caption{Mean (and standard deviation) of the relative approximation errors of the neural network surrogates trained with 1,024 and 8,192 training samples for the loss function with Jacobian (DINO) and without (w/o) Jacobian for the two problems.}
  \label{tab:metrics_models}
  \end{center}
\end{table}

\begin{figure}[!htb]
 \centering
  \begin{subfigure}[b]{\textwidth}
\includegraphics[width=\textwidth]{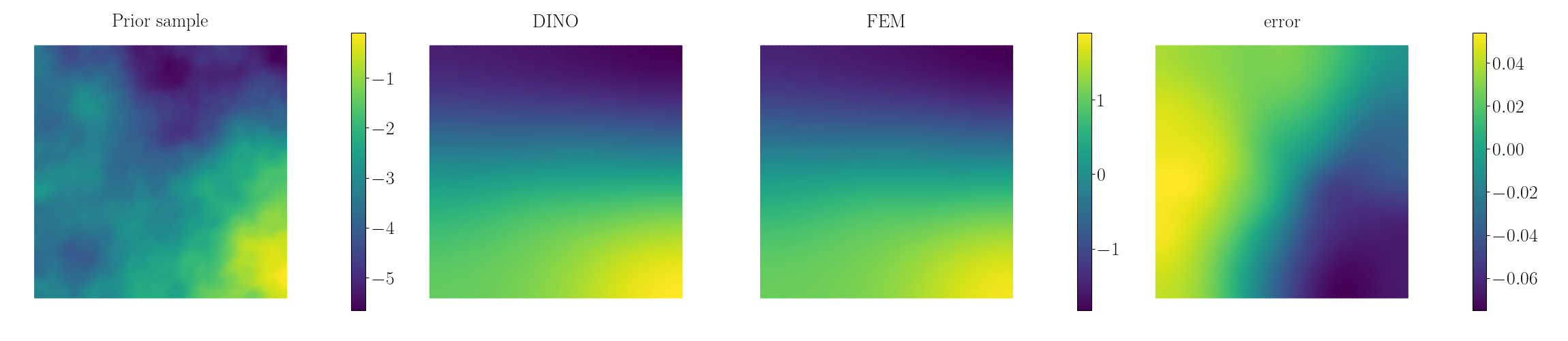}
  \end{subfigure} \\
    \begin{subfigure}[b]{\textwidth}
    \includegraphics[width=\textwidth]{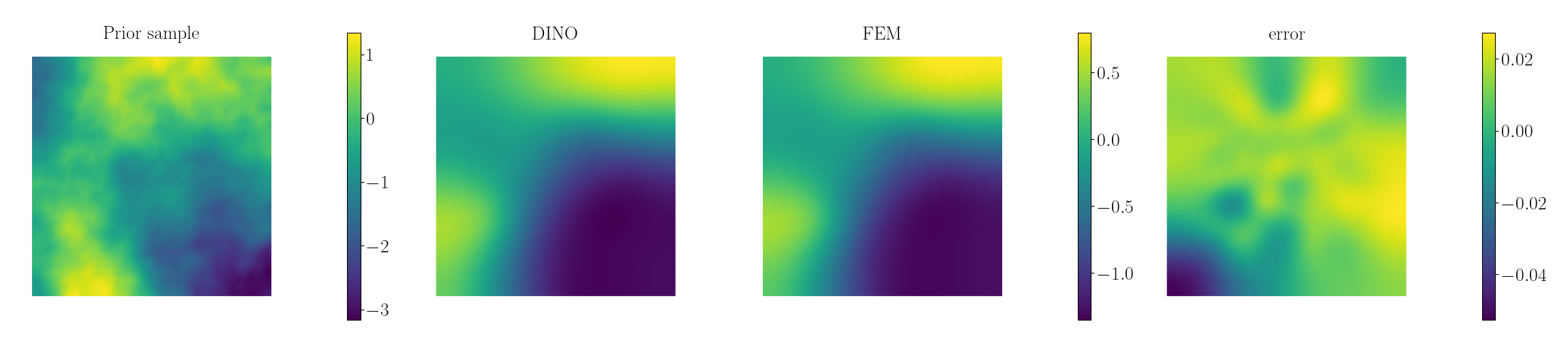}
  \end{subfigure} \\
\hspace{0.2cm}
  \begin{subfigure}[b]{\textwidth}
    \includegraphics[width=0.99\textwidth]{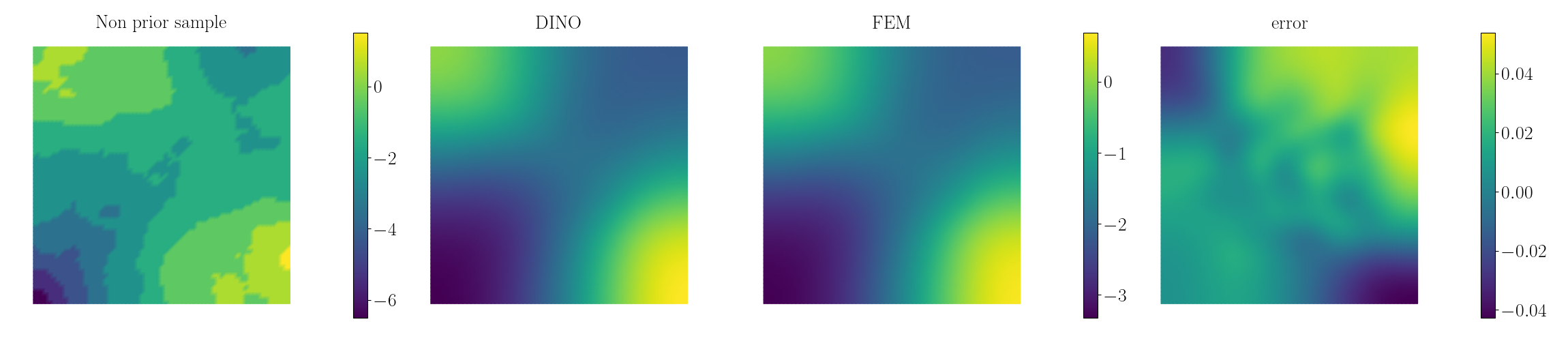}
  \end{subfigure}
    \caption{
    Prior samples (left), MAP points computed by the DINO surrogate and FEM (middle), and the errors of DINO approximation compared to FEM (right) for the diffusion problem (top) and the CDR problem (middle). Bottom: The comparison as above rows for an out-of-distribution sample (piecewise constant) for the CDR problem.
    } \label{fig:MAP}
\end{figure}

To further illustrate the accuracy of the MAP point, we take a random prior sample from the test data set, see left of Figure \ref{fig:MAP}, solve the PDE, and generate synthetic observation data at all candidate sensors to compute the MAP point using FEM and DINO trained with 8,192 samples. The MAP points computed by DINO and FEM, as shown in the middle of Figure \ref{fig:MAP}, are very close to each other with pointwise errors about two orders of magnitude smaller than the MAP point, see right of Figure \ref{fig:MAP}. We can also observe that MAP points look similar but not very close to the prior samples because of the intrinsic ill-posedness of the inverse problems. Note that in solving the Bayesian OED problems, we only need to consider random samples drawn from the prior distribution. Hence, it is sufficient for DINO to be accurate for the prior samples. We also test DINO with an out-of-(prior)distribution sample (piecewise constant binned from a prior sample) in the CDR problem. See the bottom of Figure \ref{fig:MAP}. The MAP point computed by DINO is still very close to the MAP point computed by FEM, which demonstrates the robustness of DINO for this problem. However, we do not expect this to hold for more general problems or more general out-of-distribution samples, especially those far from the prior distribution. 

To show the high accuracy of the reduced Jacobian by DINO in solving generalized eigenvalue problems, we plot the generalized eigenvalues of \eqref{eq:disc-gen-eig} by FEM and of \eqref{eq:rEig} by DINO at the MAP points computed from three random test samples, as shown in Figure \ref{fig:64-eig-possion}. The dominant eigenvalues—the first ten in the leading three orders of magnitude—are almost indistinguishable for the two methods in both the linear Poisson and nonlinear CDR problems, which is a demonstration of Theorem \ref{thm:2} for the bound of the approximation errors of the eigenvalues.


\begin{figure}[!htb]
    \centering
    \includegraphics[width=0.45\textwidth]{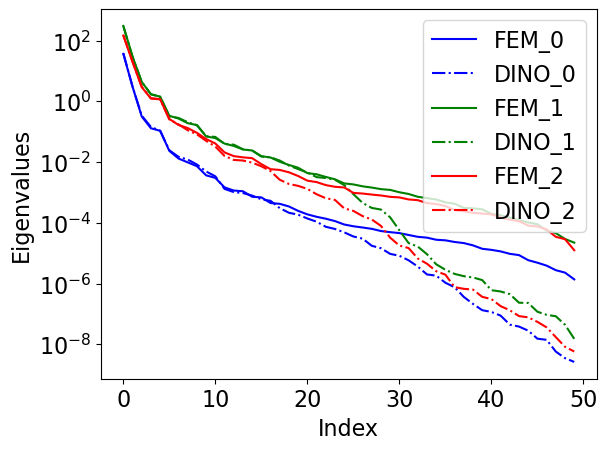} 
    \quad
    \includegraphics[width=0.45\textwidth]{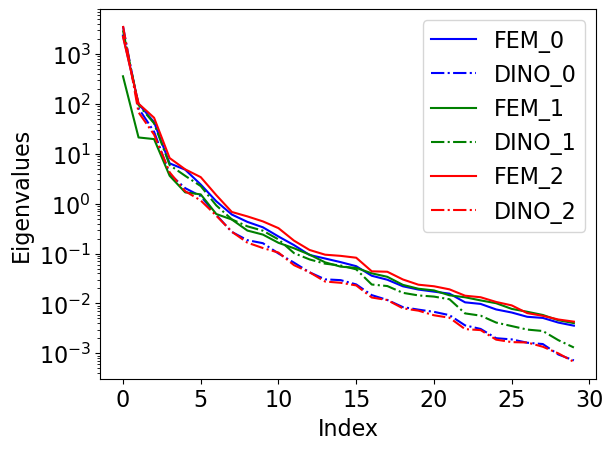} 
    \caption{The generalized eigenvalues of \eqref{eq:disc-gen-eig} computed using FEM and the eigenvalues of \eqref{eq:rEig} computed using DINO at three MAP points (computed from three random prior samples) for the diffusion problem (left) and the CDR problem (right).} \label{fig:64-eig-possion}
\end{figure}

\subsubsection{The scalability of the neural network approximations}

To check the scalability of the neural network approximations by DINO, we increase the parameter dimensions by refining the mesh. Thanks to the dimension reduction using the same reduced dimensions, the neural network size does not change, which leads to the same training cost for different mesh sizes. Figure \ref{fig:mesh-p} indicates that the accuracy of the DINO approximations for increasing mesh sizes of 32 $\times$ 32, 64 $\times$ 64, and 128 $\times$ 128 remains the same for the PtO map $F$, the reduced Jacobian $J_r$, and the MAP point, which implies the scalability (dimension independence) of the DINO approximations with respect to increasing parameter dimensions. This scalability also holds for the approximations of the generalized eigenvalues with the use of the DINO approximations of the reduced Jacobian as shown in Figure \ref{fig:128-32-eig-possion}. These results also confirm that with the scalable neural network approximations of the PtO map and the Jacobian, Theorem \ref{thm:1} and Theorem \ref{thm:2} still hold for different mesh sizes or parameter dimensions on the approximations of the MAP points and the eigenvalues.

\begin{figure}[!htb]
  \centering
    \includegraphics[width=0.49\textwidth]{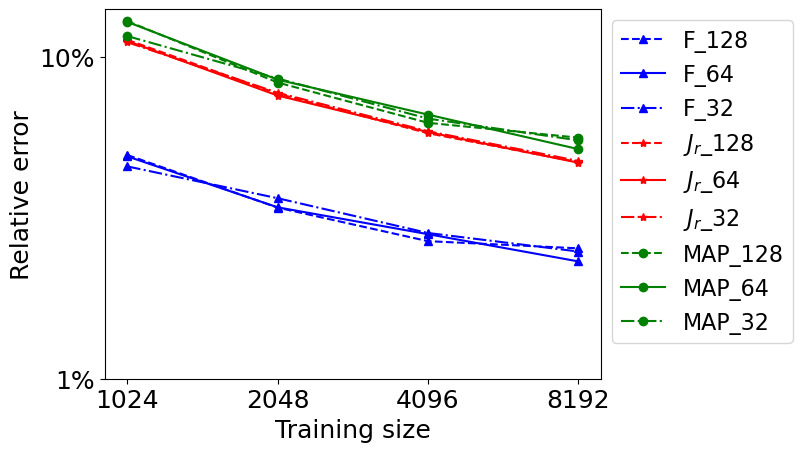}
    \includegraphics[width=0.49\textwidth]{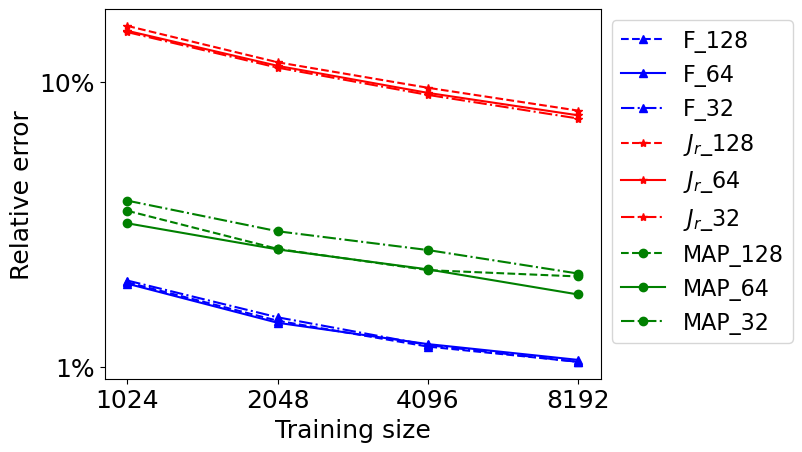}
    \caption{Mean of relative errors of the PtO map $F$, the reduced Jacobian ${J_r}$, and the MAP point with increasing training sizes and mesh sizes 32 $\times$ 32, 64 $\times$ 64, and 128 $\times$ 128 for the diffusion problem (left) and the CDR problem (right).} \label{fig:mesh-p}
\end{figure}

\begin{figure}[!htb]
    \centering
    \includegraphics[width=0.45\textwidth]{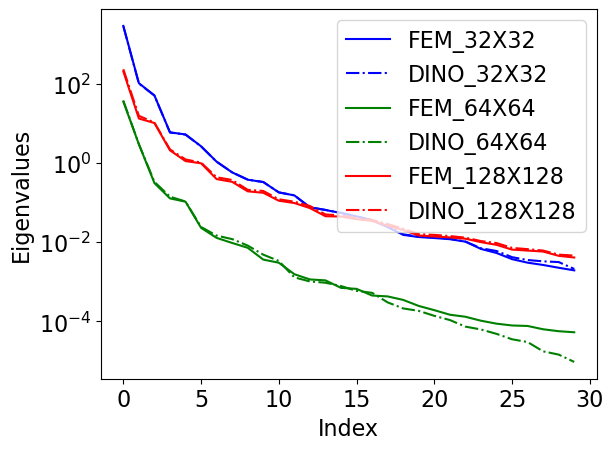} 
    \quad
    \includegraphics[width=0.45\textwidth]{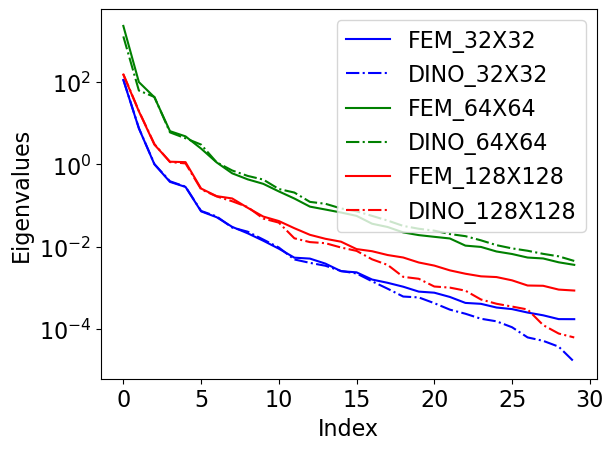}
    \caption{The generalized eigenvalues of \eqref{eq:disc-gen-eig} by FEM and the eigenvalues of \eqref{eq:rEig} by DINO with increasing parameter dimensions with mesh sizes (32 $\times$ 32, 64 $\times$ 64, and 128 $\times$ 128) for the diffusion problem (left) and the CDR problem (right).} \label{fig:128-32-eig-possion}
\end{figure}

\subsection{Accuracy verification and solution of Bayesian OED}


For the Bayesian OED problem, we select 5 sensor locations from 50 candidates for the diffusion problem and 10 sensor locations from 100 candidates for the CDR problem, as shown in Figure \ref{fig:poisson_setup}. 
\begin{figure}[!htb]
  \centering
  \includegraphics[width=0.45\textwidth]{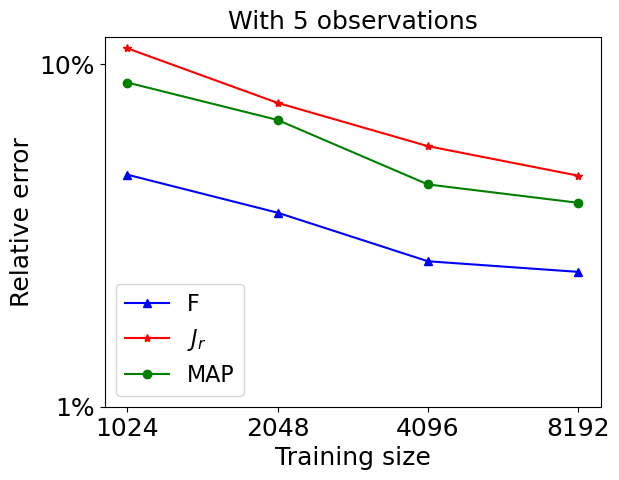} 
  \quad 
  \includegraphics[width=0.45\textwidth]{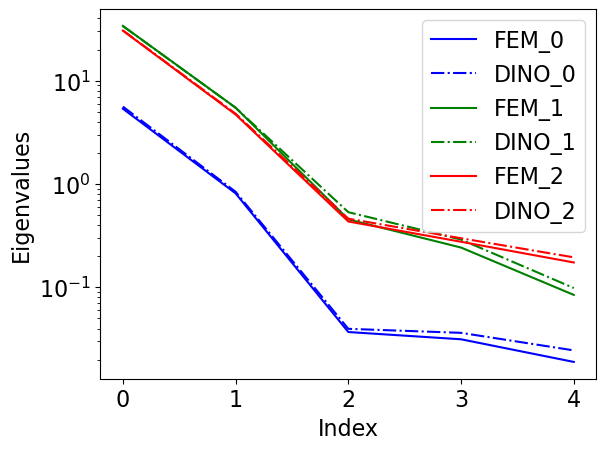} \\ 
  \includegraphics[width=0.45\textwidth]{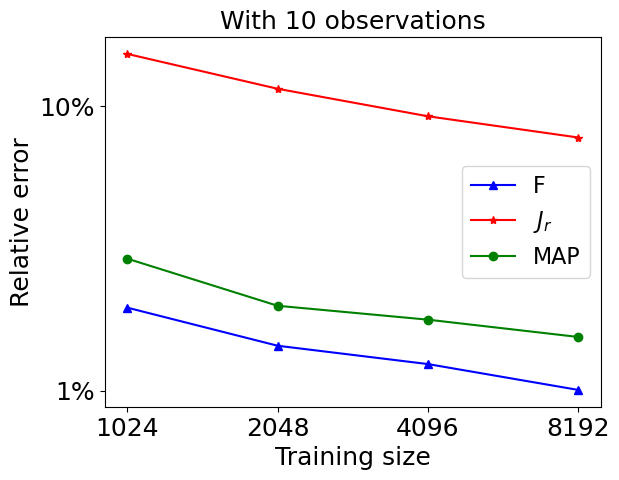} 
  \quad 
  \includegraphics[width=0.45\textwidth]{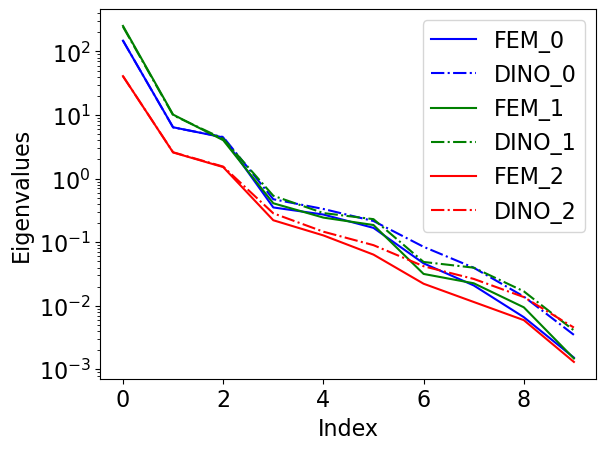}
  \caption{Left: mean of relative errors for the PtO map $F_\xi$, the reduced Jacobian $J_r$, and the MAP point with 5 and 10 observations for the diffusion problem (top) and CDR problem (bottom). Right: the eigenvalues by DINO and FEM.} \label{fig:5obs-pef} 
\end{figure}
To check that the DINO surrogates constructed for the full PtO map $F$ at all candidate observations preserve the approximation accuracy for the selected number of observations, we plot the relative errors of the DINO approximation for the PtO map $F_\xi$ at a random design $\xi$, its reduced Jacobian, the MAP point, and the generalized eigenvalues in Figure \ref{fig:5obs-pef}, which confirms the preserved accuracy for all quantities. Only 5 and 10 generalized eigenvalues exist for the 5 and 10 selected observations. 


To this end, we have constructed and verified the accuracy and scalability of the DINO surrogates to approximate the quantities involved in computing the optimality criteria of the Bayesian OED. We compute these criteria, namely the A-optimality in \eqref{eq:a-final-approx}, the D-optimality in \eqref{eq:d-final-approx}, and the EIG-optimality in \eqref{eq:eig-final-approx} using the eigenvalues from the projected eigenvalue decomposition and the MAP point from the projected optimization. We plot the optimality criteria computed using DINO and FEM approximations at 200 random samples ($\bsm, \bsy$) and report both the $R^2$ scores and the mean and standard deviation of the relative errors in Figure \ref{fig:poisson_cdr_sp_critet}. The results reveal a very high correlation between the DINO and FEM computation of the optimality criteria with $R^2$ score mostly larger than 0.99 and with minor relative errors. On the other hand, the relative errors for the three optimality criteria obtained by the neural network trained without the Jacobian (NN without J) are much larger than those obtained by DINO. This demonstrates the enhanced accuracy of the proposed method by incorporating derivative information in DINO. We use 2048 training samples for both DINO and NN without J for the plots in Figure \ref{fig:poisson_cdr_sp_critet}. We also include the relative errors of approximate optimality criteria with the training sizes 1024, 4096, and 8192 in Table \ref{tab:metrics_mean_std_error}. We observe that as the neural network approximations become more accurate with increasing training sizes, the approximation of the optimality criteria also becomes more accurate. Moreover, DINO leads to much more accurate approximations  than the neural network trained without Jacobian, especially when the training size is relatively small (e.g., at 1024), which demonstrates the theoretical result in Theorem \ref{thm:2}.



\begin{figure}[!htb]
  \centering      
  \begin{subfigure}{0.33\textwidth}
        \includegraphics[width=\linewidth]{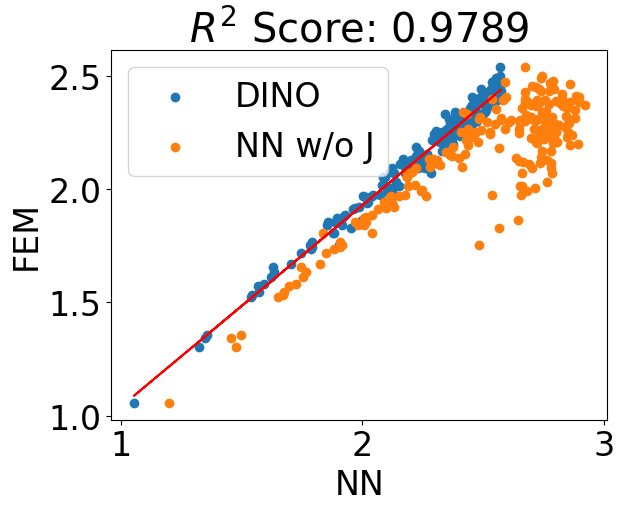}
        \caption{$4.84\% (2.27\%), 16.20\% (8.71\%)$}\label{fig:5obs-trace_poisson}%
    \end{subfigure}
    \hfill
    \begin{subfigure}{0.3\textwidth}
        \includegraphics[width=\linewidth]{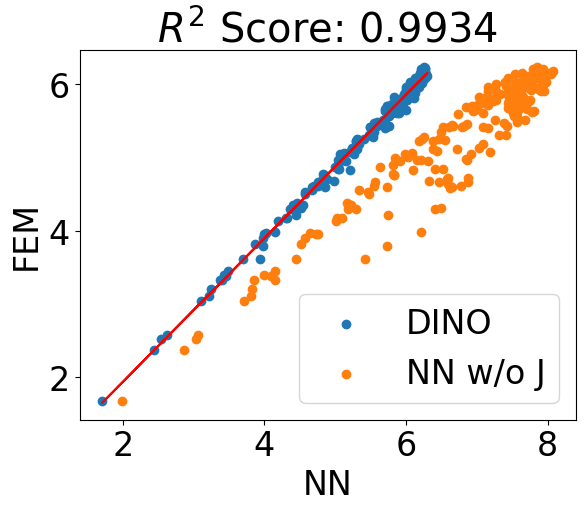}
        \caption{$2.45\% (1.42\%), 27.77\% (7.68\%)$}\label{fig:5obs-det_poisson}
    \end{subfigure}
    \hfill
    \begin{subfigure}{0.3\textwidth}
        \includegraphics[width=\linewidth]{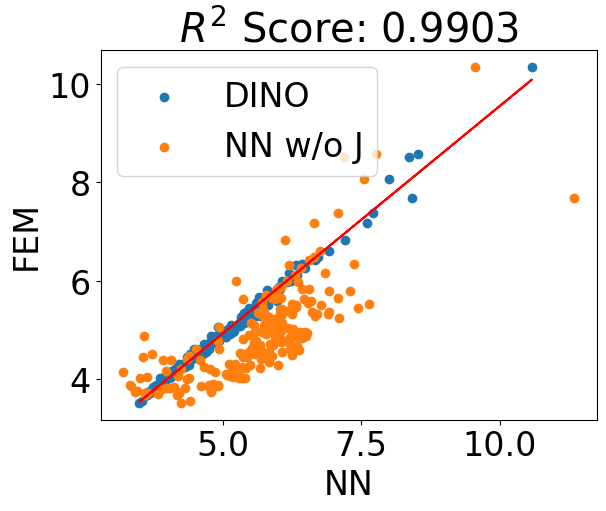}
        \caption{$1.15\% (2.25\%), 14.79\% (14.00\%)$}\label{fig:5obs-eig_poisson} 
    \end{subfigure} \\
    \begin{subfigure}{0.30\textwidth}
        \includegraphics[width=\linewidth]{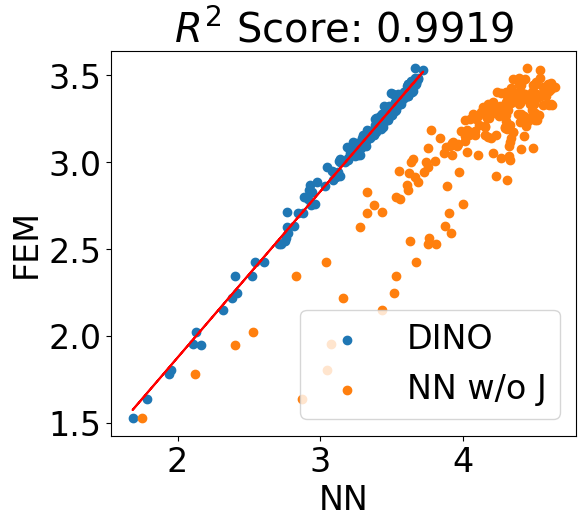}
        \caption{$5.89\% (1.35\%), 32.25\% (8.54\%)$}\label{fig:10obs-trace_cdr}
    \end{subfigure}
    \hfill
    \begin{subfigure}{0.34\textwidth}
        \includegraphics[width=\linewidth]{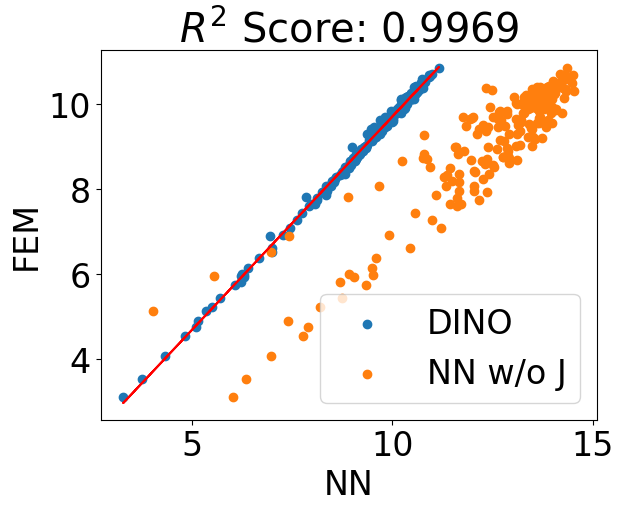}
        \caption{$3.43\% (1.32\%), 37.46\% (11.98\%)$}\label{fig:10obs-det_cdr}
    \end{subfigure}
    \hfill
    \begin{subfigure}{0.3\textwidth}
        \includegraphics[width=\linewidth]{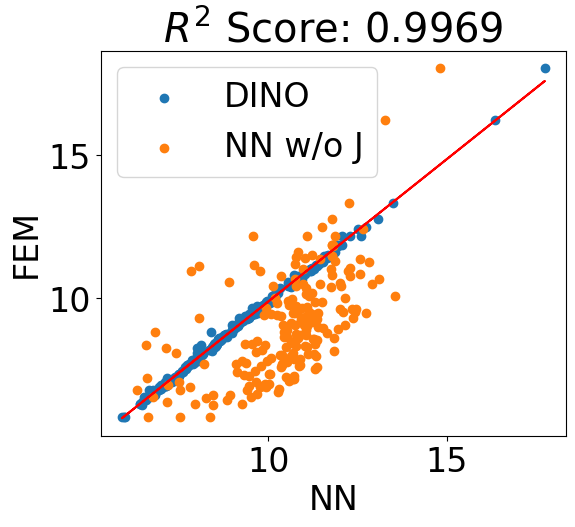}
        \caption{$1.07\% (1.11\%), 17.82\% (16.62\%)$}\label{fig:10obs-eig_cdr}
    \end{subfigure}
  \caption{$R^2$ score (for correlation) of DINO vs FEM evaluations of the optimality criteria, mean and standard deviation of the relative errors obtained by DINO and the neural network trained without Jacobian information (NN w/o J) compared to FEM in the evaluation of the trace (left), determinant (middle), and information gain (right) at 200 random samples. We choose 5 sensors in the diffusion problem (top) and 10 sensors in the CDR problem (bottom).
  }\label{fig:poisson_cdr_sp_critet}
\end{figure}

\begin{table}[!htb]
  \footnotesize
  \begin{center}
  \begin{tabular}{|c|c|c|c|c|c|c|}
  \hline
  \multicolumn{7}{|c|}{Linear diffusion problem} \\ \hline
  Training size & \multicolumn{2}{c|}{1,024} & \multicolumn{2}{c|}{4,096} & \multicolumn{2}{c|}{8,192} \\ \hline
  Optimality & with J & w/o J & with J & w/o J & with J & w/o J \\ \hline
  A \eqref{eq:a-final-approx-W} & 1.41\% (1.12\%) & 3.08\% (3.26\%) & 1.49\% (0.98\%) & 1.43\% (1.21\%) & 1.46\% (1.14\%) & 1.14\% (1.03\%) \\ 
  \hline
  A \eqref{eq:a-final-approx} & 4.52\% (3.03\%) & 16.71\% (10.53\%) & 5.20\% (2.09\%) & 15.81\% (6.34\%) & 5.16\% (2.08\%) & 9.70\% (4.27\%) \\ 
  \hline
  D & 3.95\% (1.58\%) & 27.73\% (9.48\%) & 2.07\% (1.25\%) & 19.64\% (5.14\%) & 1.59\% (1.23\%) & 12.36\% (2.12\%) \\ \hline
  EIG & 2.26\% (2.04\%) & 16.97\% (9.92\%) & 1.33\% (1.08\%) & 13.18\% (7.57\%) & 0.98\% (0.78\%) & 6.76\% (3.83\%) \\ \hline
  \multicolumn{7}{|c|}{Nonlinear convection-diffusion-reaction problem} \\ \hline
  Training size & \multicolumn{2}{c|}{1,024} & \multicolumn{2}{c|}{4,096}& \multicolumn{2}{c|}{8,192} \\ \hline
  Optimality& with J & w/o J  & with J & w/o J & with J & w/o J \\ \hline
  A \eqref{eq:a-final-approx-W} & 0.55\% (0.47\%) & 1.76\% (1.95\%) & 0.41\% (0.21\%) & 0.45\% (0.65\%)& 0.47\% (0.31\%) & 0.42\% (0.26\%) \\ \hline
  A \eqref{eq:a-final-approx} & 6.97\% (2.04\%) & 60.52\% (20.44\%) & 5.69\% (1.18\%) & 11.12\% (2.91\%)& 5.40\% (0.95\%) & 7.54\% (1.72\%) \\ \hline
  D & 4.73\% (2.96\%) & 81.7\% (39.94\%) & 3.33\% (1.17\%) & 10.48\% (3.4\%) & 2.65\% (0.91\%) & 4.98\% (1.54\%)\\ \hline
  EIG & 1.81\% (1.18\%) & 48.61\% (28.84\%) & 1.31\% (0.73\%) & 4.93\% (3.1\%)& 1.0\% (0.54\%) & 2.14\% (1.29\%) \\ \hline
  \end{tabular}
  \caption{Mean (and standard deviation) of the relative errors of optimality criteria using the neural network surrogates trained with 1,024, 4,096, and 8,192 samples with Jacobian (DINO) and without (w/o) Jacobian for the two problems.}
  \label{tab:metrics_mean_std_error}
  \end{center}
  \end{table}
  


\begin{figure}[!htb]
  \centering
  
  \begin{subfigure}[b]{0.45\textwidth}
    \includegraphics[width=\textwidth]{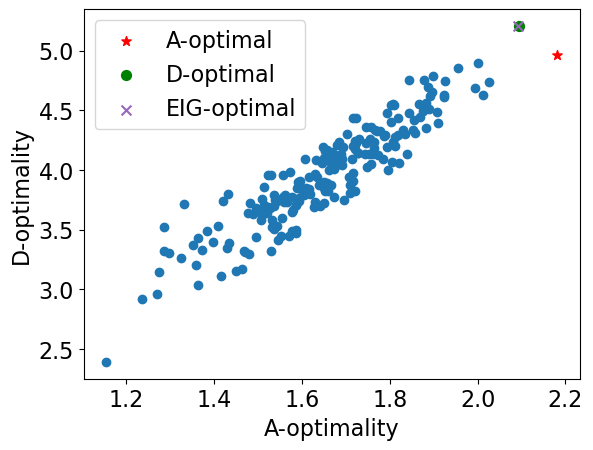}
  \end{subfigure}
  \quad
  \begin{subfigure}[b]{0.45\textwidth}
    \includegraphics[width=\textwidth]{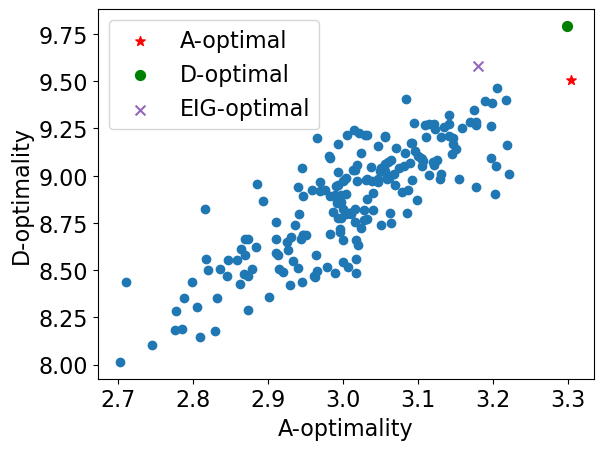}
  \end{subfigure}

  \begin{subfigure}[b]{0.45\textwidth}
    \includegraphics[width=\textwidth]{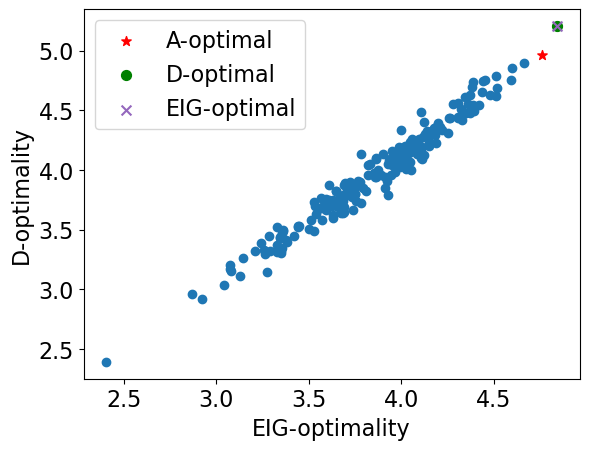}
  \end{subfigure}
  \quad
  \begin{subfigure}[b]{0.45\textwidth}
    \includegraphics[width=\textwidth]{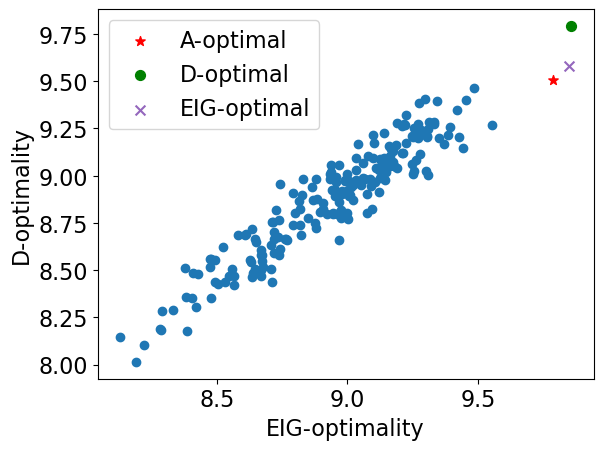}
  \end{subfigure}

  \caption{FEM optimality values at the optimal sensors selected using DINO by the swapping greedy algorithm \ref{alg:buildtree} with different optimality criteria compared to those at 100 random designs. Diffusion (left) and CDR (right) problems.}
  \label{fig:swapping_greedy_perf}
\end{figure}

We run the swapping greedy algorithm (defined by Algorithm \ref{alg:buildtree}) with setting the stopping criteria $k_{\text{max}} = 3$ and $\varepsilon_{\text{min}} = 0.01$, and use DINO to find the optimal sensor locations according to the three optimality criteria of A-optimality, D-optimality, and EIG. To check the optimality of the sensors selected by the algorithm using DINO surrogates, we compute the three optimality criteria using the high-fidelity FEM at the selected sensors compared to 200 randomly selected sensors. The results are shown in Figure \ref{fig:swapping_greedy_perf}, from which we can see that optimality criteria achieve their largest values at the sensors selected using DINO surrogates according to the corresponding optimality criteria, which demonstrates the effectivity of the DINO surrogates. Moreover, larger optimality criteria correspond to smaller uncertainty in the model parameter estimation. The uncertainty indicated by the optimality criteria is effectively reduced by the optimal design from random design. 

The corresponding sensors selected according to different optimality criteria are shown in Figure \ref{fig:poisson-5obs-location} for the two test problems with 5 and 10 sensors, respectively. We can see that different optimality criteria can lead to slightly different sensors. The uncertainty is reduced effectively from the prior as observed in the pointwise prior and posterior variance in Figure \ref{fig:UD-Prior-Post}. Moreover, we can also see that the uncertainty reduction by the (A-)optimal sensor locations is greater than that by a random choice of sensor locations, consistent with the smaller trace values.



\begin{figure}[!htb]
    \centering
    \includegraphics[width=0.31\textwidth]{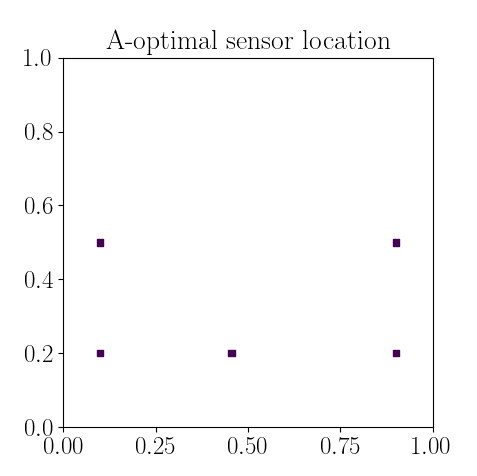} 
    \includegraphics[width=0.32\textwidth]{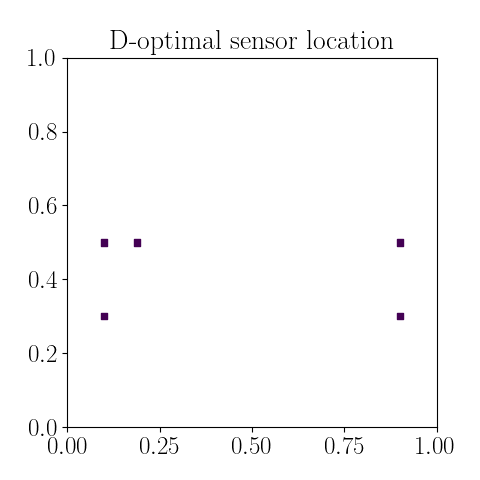} 
    \includegraphics[width=0.32\textwidth]{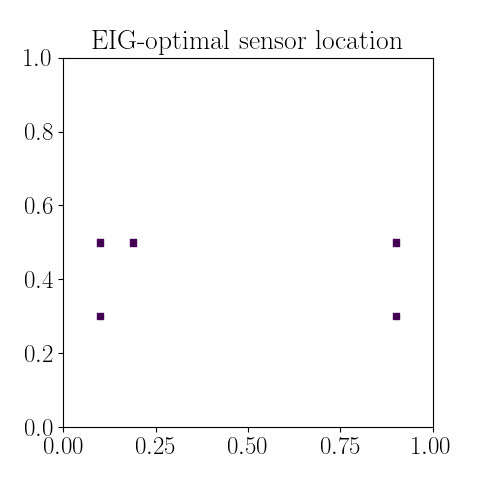} 
    \\
    \includegraphics[width=0.32\textwidth]{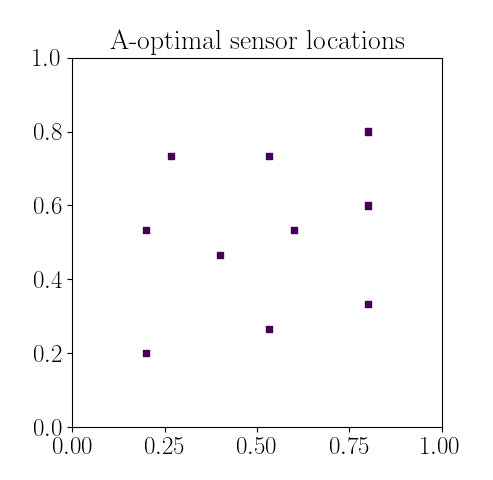} 
    \includegraphics[width=0.32\textwidth]{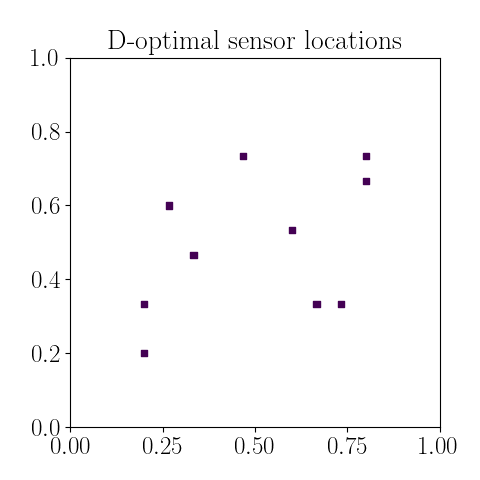} 
    \includegraphics[width=0.32\textwidth]{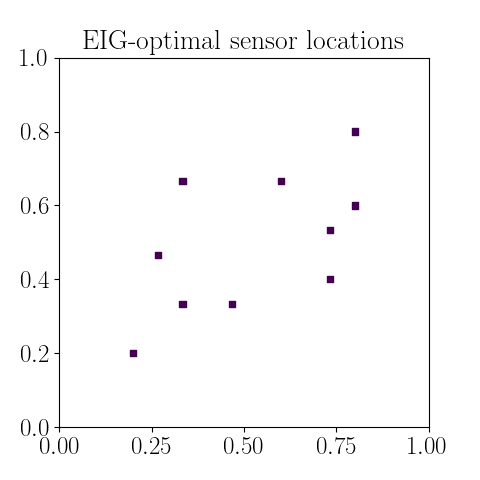} 
    \caption{Optimal sensor locations selected by the swapping greedy algorithm \ref{alg:buildtree} using the A-/D-/EIG-optimality criteria. 5 sensors selected for the diffusion problem (top) and 10 sensors selected for the CDR problem (bottom). 
    }
    \label{fig:poisson-5obs-location}
\end{figure}

\begin{figure}[!htb]
 \centering
 \begin{subfigure}[b]{\textwidth}
 \centering
\includegraphics[width=0.95\textwidth]{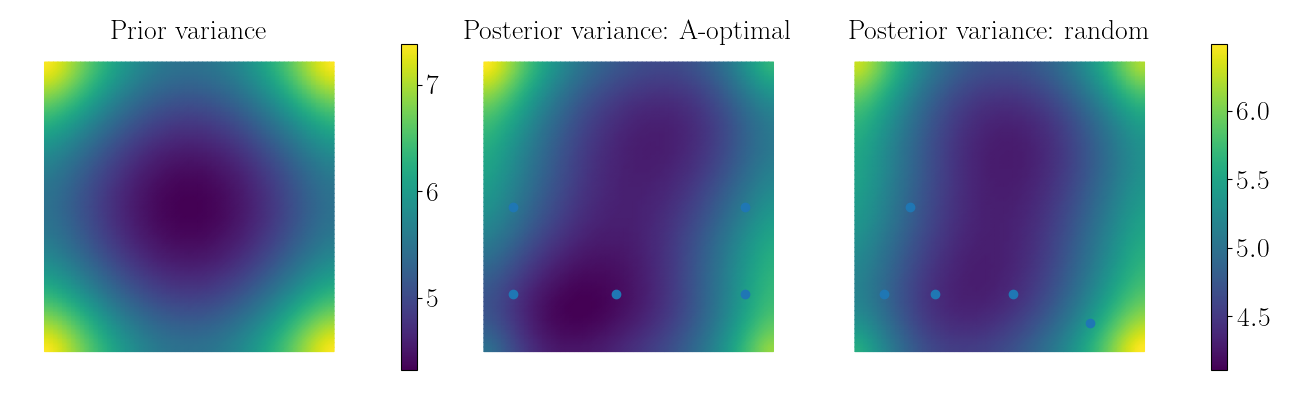}
  \end{subfigure} \\
  \begin{subfigure}[b]{\textwidth}
  \centering
\includegraphics[width=0.95\textwidth]{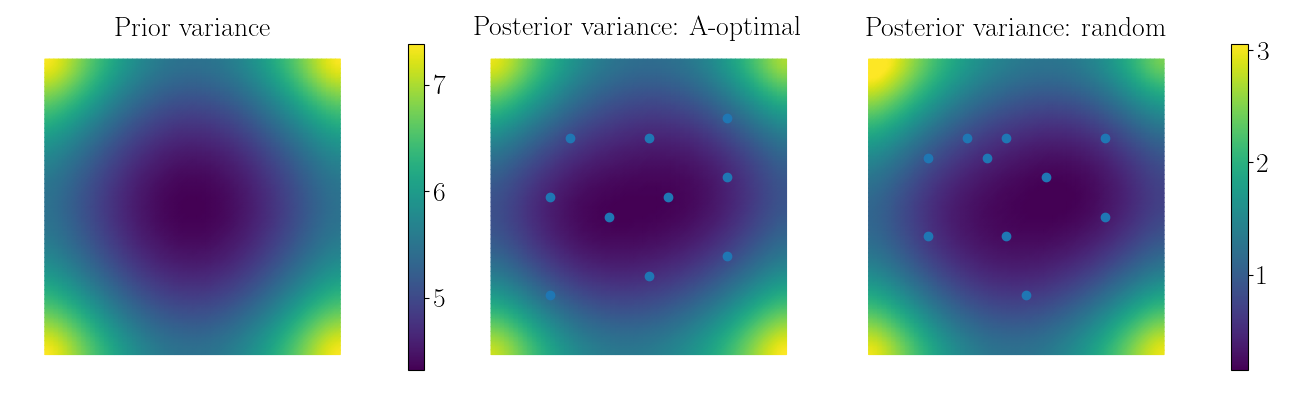}
  \end{subfigure} \\
    \caption{
    Pointwise prior variance (left), posterior variance with A-optimal (middle), and random sensor locations (right) for the diffusion (top) and the CDR problem (bottom). The trace of the posterior variance for A-optimal and random sensor placements are 20,176 and 20,599 for the diffusion problem, and 4,076 and 4,458 for the CDR problem, respectively.} \label{fig:UD-Prior-Post}
\end{figure}

\subsection{Application to a 3D test problem}
We further apply the proposed method to solve a Bayesian OED problem governed by a 3D CDR equation \eqref{eq:cdr_equation} in the physical domain $D = (0,1)^3$. 
We consider a selection of 8 sensor locations out of 512 candidates equidistantly distributed in the domain to infer the model parameter following Gaussian prior $m \sim \cN(0,\cC)$ with Mat\'ern covariance operator $\cC = \cA^{-2}$, where $\cA=-\gamma\Delta + \kappa I$ with $\gamma = 0.4$ and $\kappa = 1.0$. 

For the FEM approximation of both the state and the parameter fields, we use piecewise linear polynomials with a uniform mesh of size 32 $\times$ 32 $\times$ 32, leading to 35,937 dimensional discrete parameters. The FEM solution of the 3D forward problem is computationally much more expensive than that of the 2D forward problem. For the DINO surrogate, we use the same ResNet architecture as that for the 2D diffusion problem with a larger reduced input dimension of 256 and reduced output dimension of 128 for the 512 outputs.

We train the DINO with increasing numbers of training samples, which leads to increasing accuracy of the DINO approximation of the PtO map $F$, the reduced Jacobian ${J_r}$, and MAP point $\bsm_\text{MAP}$, 
as shown in Figure \ref{fig:3d-perf} (left). This figure shows much higher accuracy for all three quantities predicted by the neural networks trained with the Jacobian information than those trained without it. A comparison of the MAP point computed by DINO vs FEM is shown in Figure \ref{fig:cdr_3d}, from which we can observe an accurate approximation by DINO. The eigenvalues at three different MAP points computed by DINO are very accurate compared to FEM, as shown in Figure \ref{fig:3d-perf} (right). 


\begin{figure}[!htb]
  \centering
    \begin{subfigure}{0.48\textwidth}
        \includegraphics[width=\linewidth]{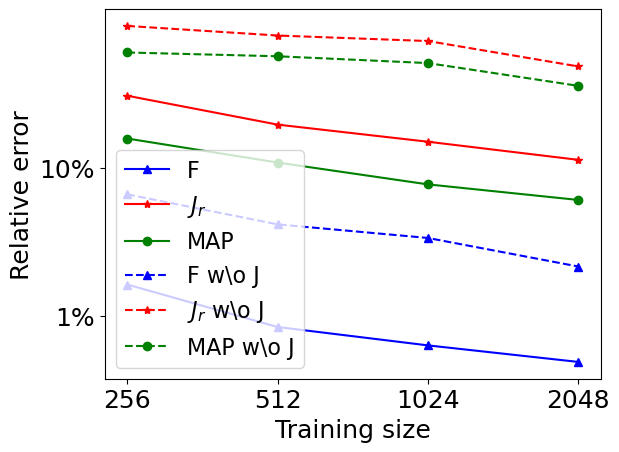}
    \end{subfigure}
    \hfill
    \begin{subfigure}{0.48\textwidth}
        \includegraphics[width=\linewidth]{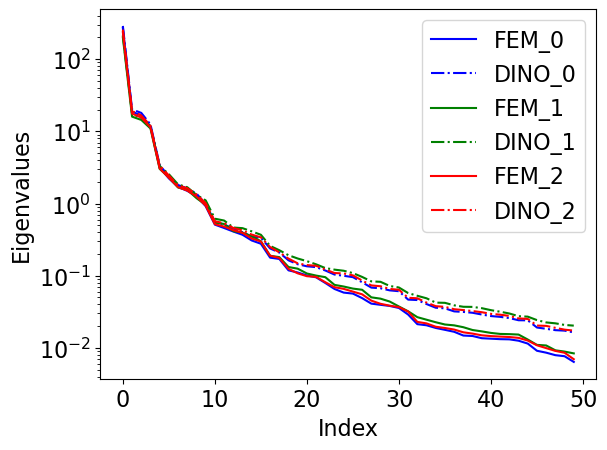}
    \end{subfigure}
    \caption{Mean of relative errors with increasing training size for the DINO approximations of the PtO map $F$, the reduced Jacobian ${J_r}$, and the MAP point $\bsm_\text{MAP}$ (left). Comparison of the generalized eigenvalues by DINO and FEM (right).} \label{fig:3d-perf}
\end{figure}

 
\begin{figure}[!htb]
  \centering      
  \begin{subfigure}{0.3\textwidth}
        \includegraphics[width=\linewidth]{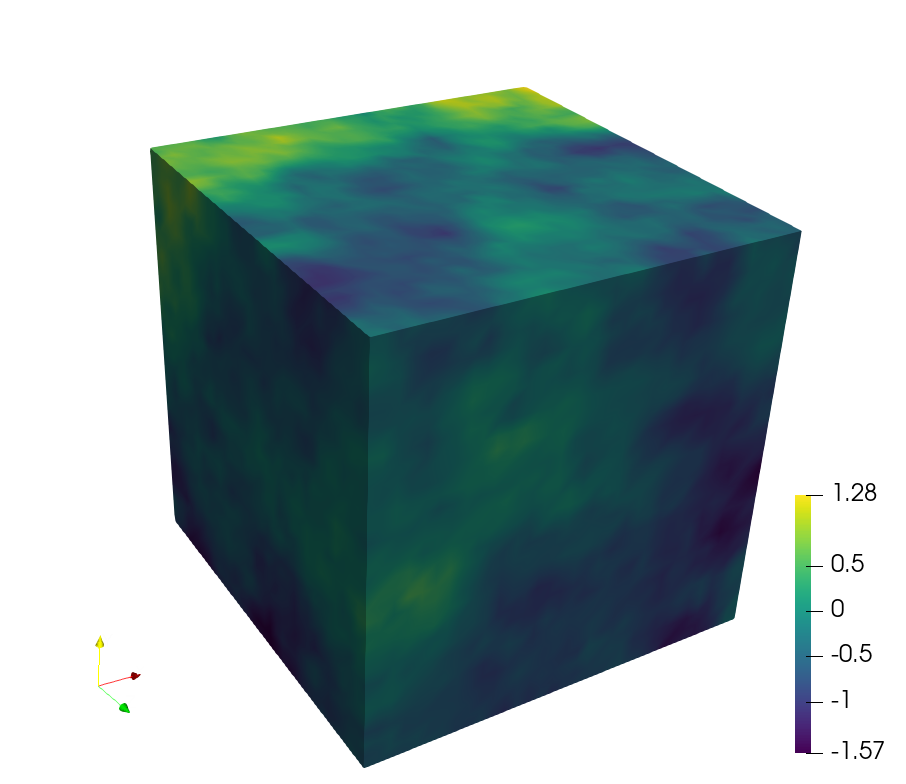}
    \end{subfigure}
    \hfill
    \begin{subfigure}{0.3\textwidth}
        \includegraphics[width=\linewidth]{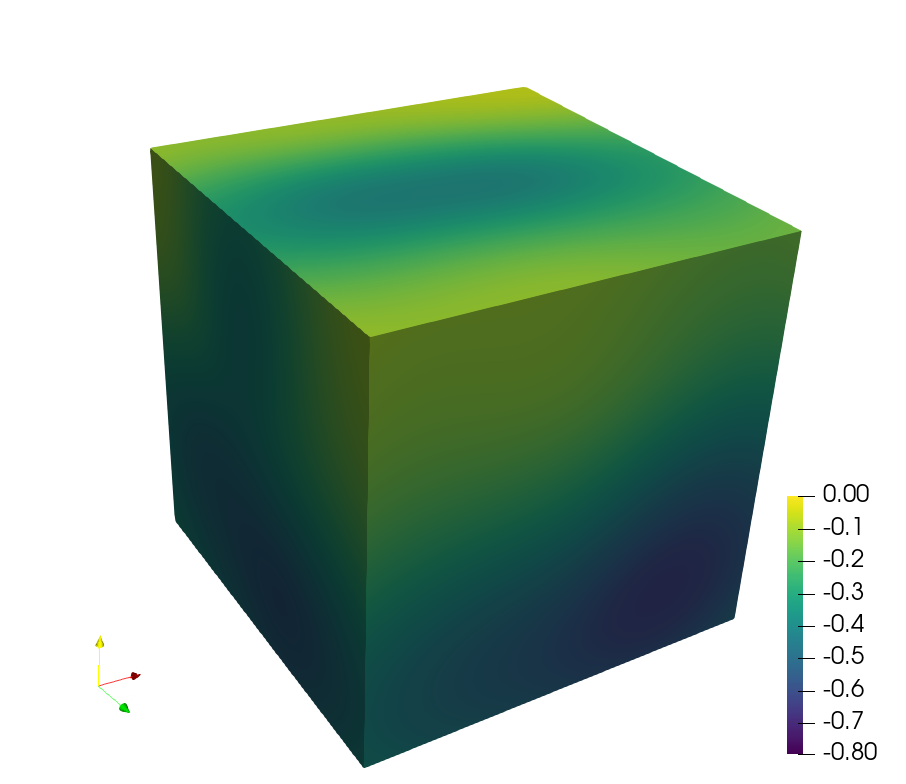}
    \end{subfigure}
    \hfill
    \begin{subfigure}{0.3\textwidth}
        \includegraphics[width=\linewidth]{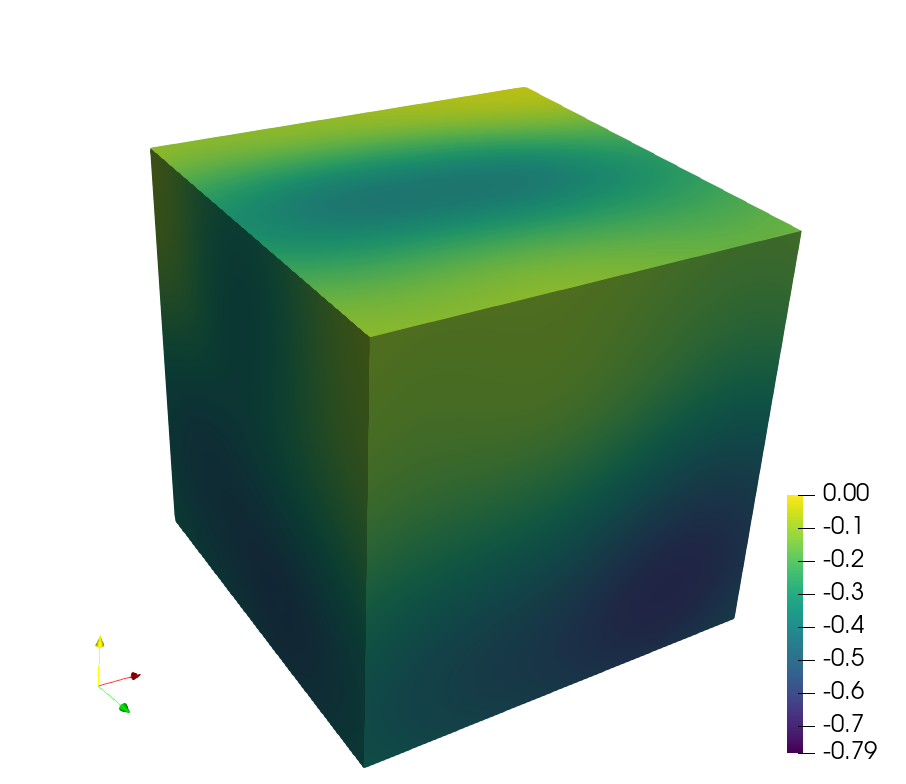}
    \end{subfigure}
  \caption{Prior sample (left), MAP points computed by DINO (middle) and FEM (right).}
  \label{fig:cdr_3d}
\end{figure}

We use the trained DINO surrogate to compute the three optimality criteria, i.e., the expected A/D-optimality and information gain, with 128 random samples for evaluating the expectation of the optimality criteria. We apply the swapping greedy algorithm to optimize the sensor locations, with the eight sensors selected using different optimality criteria shown in Figure \ref{fig:3d-poisson-location}, and the optimality values at the selected sensors shown in Figure \ref{fig:swapping_greedy_perf_3d}. We observe that the optimality values at the selected sensors are optimal with respect to the corresponding optimality criteria and much larger than those at the randomly selected sensors, which demonstrates the effectiveness of the swapping greedy algorithm for the 3D Bayesian OED problem. 

\begin{figure}[!htb]
    \centering
    \includegraphics[width=0.32\textwidth]{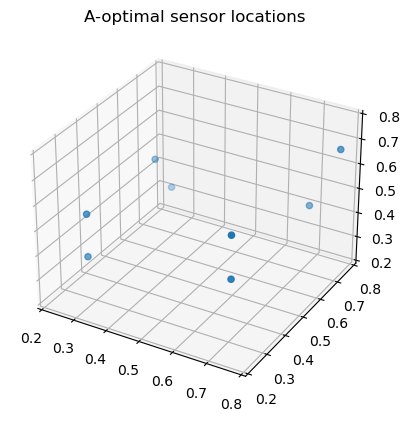} 
    \includegraphics[width=0.32\textwidth]{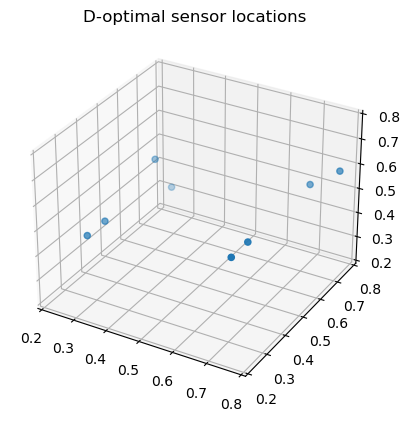} 
    \includegraphics[width=0.32\textwidth]{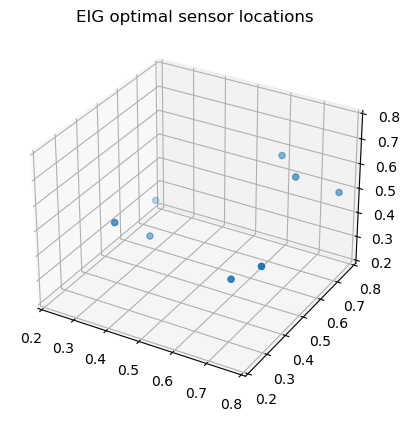} 
    \label{fig:sensor-locations-3d}
\caption{Eight optimal sensor locations selected using DINO by the swapping greedy algorithm \ref{alg:buildtree} with the optimality criteria of A-optimality (left), D-optimality (middle), and EIG-optimality (right).} \label{fig:3d-poisson-location}
\end{figure}


\begin{figure}[!htb]
    \centering
    \includegraphics[width=0.45\textwidth]{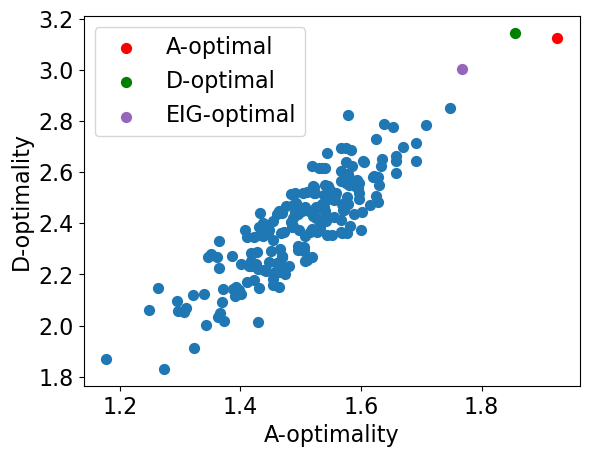} 
    \includegraphics[width=0.45\textwidth]{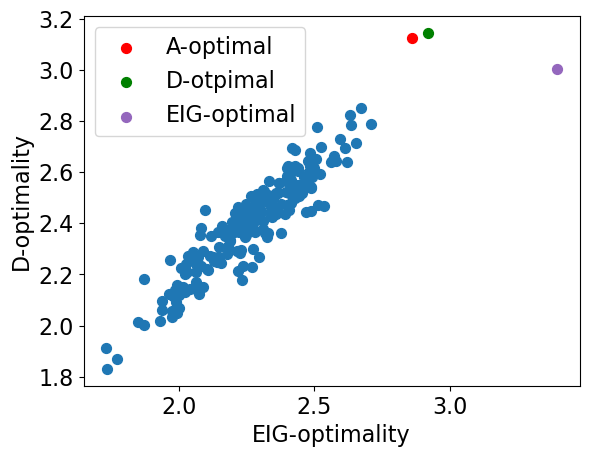}
    \caption{Optimality values at the optimal sensors selected by the swapping greedy algorithm \ref{alg:buildtree} using different optimality criteria compared to those at 200 random designs. The optimality values are efficiently computed using DINO.} \label{fig:swapping_greedy_perf_3d}
\end{figure}

\subsection{Efficiency of the computational framework}


In this section, we demonstrate the efficiency of our proposed approach by reporting the computational cost of generating the training data, training the DINO surrogates, and their use in solving Bayesian OED problems. We compare these costs to those of the high-fidelity computations. We use the CPU of AMD EPYC 7543 with 1 TB memory for high-fidelity computation and neural network evaluation. For the neural network surrogate training, we use the GPU of NVIDIA RTX A6000 with 48 GB memory.



In Table \ref{tb:computing times}, we report the computational complexity analysis numbers for the two (2D and 3D) CDR problems corresponding in Section \ref{sec:complexity}. In particular, for the computation of the MAP point, we use an inexact Newton-CG optimizer from the hIPPYlib library for the high-fidelity approximation (averaged 5 times) and an LBFGS optimizer from PyTorch for the neural network approximation. 

\begin{table}[!htb]
  \begin{center}
    \begin{tabular}{|c|c|c|c|c|c|c|c|c|c|c|c|}
      \hline
      Problem & $N_s$ & $k_s$ & $r_s$ & $d_s$ & $N_{nt}$ & $N_{cg}$ & $d_\bsm$ & $r_\bsm$ & $r_F$ & $N_t$ & $N_b$\\ \hline
      2D CDR &  128 & 2 & 10 & 100 & 5.6 & 4.6 & 16,641 & 128 & 30 & 8,192 & 100  \\ \hline
      3D CDR & 128 & 2 & 8 & 512 &  4.2 & 2.2 & 35,937 & 256 & 128 & 2,048 & 100 \\ \hline
    \end{tabular}
    \caption{
    $N_s$: \# SAA samples, $k_s$: \# swapping loops, $r_s$: \# sensors to select, $d_s$: \# candidate sensors, $N_{nt}$: averaged \# Newton iterations, $N_{cg}$: averaged \# CG iterations per Newton iteration, $d_\bsm$: \# discretized parameters, $r_\bsm$: \# input project bases, $r_F$: \# output projection bases, $N_t$: \# training samples, $N_b$: \# BFGS iterations.}\label{tb:computing times}
  \end{center}
  \end{table}
  
 In Table \ref{tab:ct_nn_fem_1}, we report the computational time (averaged over 5 times) by FEM using FEniCS v.s.\ by DINO using PyTorch for the computation of the PtO map, the MAP point solving the optimization problem \eqref{eq:MAP} v.s.\ \eqref{eq:rMAP}, and the eigenvalue decomposition \eqref{eq:disc-gen-eig} v.s.\ \eqref{eq:rEig}. A speedup from hundreds for the 2D problem to thousands for the 3D problem can be observed. As the evaluation of the optimality criteria is dominated by one MAP optimization and one eigenvalue decomposition for each sample, the combined speedup of the DINO compared to FEM is about $(5.1+0.7)/(0.04+0.007) = 123$ and $(128.7+10.0)/(0.04+0.02) = 2,312$ for the 2D and 3D problems, respectively.
 

\begin{table}[!htb]
  \centering
  \begin{tabular}{|c|c|c|c|}
  \hline
  2D-CDR & PtO & MAP & Eigenpairs \\ \hline
  FEM & 0.8 & 5.1 & 0.7 \\ \hline
  DINO & 0.002 & 0.04 & 0.007 \\ \hline
  Speedup & 400 & 127 & 100 \\ \hline
  \end{tabular}
  \begin{tabular}{|c|c|c|c|}
    \hline
    3D-CDR & PtO &  MAP & Eigenpairs \\ \hline
    FEM & 25.5
     & 128.7 & 10.0\\ \hline
    DINO & 0.003 & 0.04 & 0.02 \\ \hline
    Speedup & 8,500 & 3,218 & 500 \\ \hline
  \end{tabular}
  \caption{Time (in seconds) and speedup by DINO v.s.\ FEM for the computation of the PtO map, the MAP point, and the eigenvalue decomposition, averaged over 5 times.}
  \label{tab:ct_nn_fem_1}
\end{table}

  
%

In Table \ref{tab:ct_nn_fem_2}, we report the offline computational time in (1) computing the (DIS and PCA) projection bases for input and output dimension reduction, (2) computing the training data of the PtO maps and (3) their Jacobians, and (4) training the neural networks. We note that the cost of computing the additional reduced Jacobian data is smaller than that of the PtO map data, as the reduced Jacobian computation is amortized by a direct solver using LU factorization (0.04 seconds for 2D CDR and 2.38 seconds for 3D CDR) and its repeated use in solving $r_t = \min(r_\bsm, r_F)$ linearized PDEs (each takes $0.0008 \ll 0.04$ seconds for 2D CDR and $0.02 \ll 2.38$ seconds for 3D CDR). The training of the neural networks takes much less time (145 seconds for 2D CDR and 413 seconds for 3D CDR in GPU) than that of generating the training data. 

\begin{table}[!htb]
  \centering
  \begin{tabular}{|c|c|c|c|c||c|}
    \hline
    Problem & Bases & PtO & Jacobian & Total & Train (GPU) \\ \hline
    2D-CDR & 437  & 6,554 & 492 & 7,482 & 145 \\ \hline
    3D-CDR & 31,241 & 52,224 & 10,076 & 93,541 & 413  \\ \hline
  \end{tabular}
  \caption{Offline time (in seconds) in computing the input and output projection bases, PtO maps, Jacobians, and training (GPU) the neural networks, averaged over 5 times.}
  \label{tab:ct_nn_fem_2}
\end{table}


Solving the Bayesian OED problem constrained by the 2D CDR equation takes $2.1$ CPU hours to generate the projection bases, the training data of the PtO map, and its Jacobian, and an additional $0.04$ GPU hours for training the neural networks. Then, it takes $3.8 $CPU hours using DINO to run the swapping optimization algorithm with $293,120$ evaluations of the optimality criteria. In contrast, it would take $472.2$ CPU hours using the high-fidelity model to solve the Bayesian OED problem. The speedup in CPU hours by the DINO surrogate (including both offline and online time) over the high-fidelity evaluation is $80.0 (472.2/(3.8+2.1))$. In the case of the 3D CDR, the speedup in CPU hours by DINO surrogate over the high-fidelity model with $1,552,896$ evaluations of the optimality criteria is $1148.3 (59,829.6/(26.1 + 26.0))$. We remark that to further speed up the computation, we have implemented both the data generation and the optimality criteria evaluation in parallel, using $128$ CPU processors.

\section{Conclusion}
\label{sec:conclusion}
In this work, we develop an accurate, scalable, and efficient computational framework based on DINO to solve Bayesian OED problems constrained by PDEs with infinite-dimensional parameter input. We consider the three optimality criteria of A-/D-optimality and EIG, which all require the computation of the Jacobian of the PtO map with Laplace and low-rank approximations of the posterior. We propose to use derivative-informed dimension reduction of the parameter and DINO surrogate of the PtO map to achieve high accuracy and scalability of the approximations for both the PtO map and especially its Jacobian. We derive efficient formulations for computing the MAP point and eigenvalue decomposition in the reduced dimension, which leads to efficient evaluation of the optimality criteria. We provide detailed error analysis for the approximations of the MAP point, the eigenvalues, and the optimality criteria under suitable assumptions of the PtO map and its Jacobian approximations and dimension reduction. To solve the optimization problem of the Bayesian OED, we present a modified swapping greedy algorithm initialized by a greedy algorithm, which achieves fast convergence. 

With two numerical examples of linear diffusion and nonlinear CDR problems in both two- and three-dimensional physical domains, we demonstrate the accuracy, scalability, and efficiency of the proposed method and verify the assumptions and theoretical results. Specifically, we demonstrate that DINO provides more accurate approximations of the PtO map, the reduced Jacobian, and the MAP point than other neural networks trained without Jacobian. This results in significantly higher approximation accuracy for all optimality criteria in Bayesian OED. We demonstrate that the accuracy of DINO approximations with proper dimension reduction is scalable with respect to increasing parameter dimensions for the same number of training samples or the same number of PDE solves. 
We also demonstrated that the proposed method achieved high efficiency with a speedup of 80$\times$ in 2D and 1148$\times$ in 3D, including both the offline data generation and online evaluation time, compared to the high-fidelity approximations for the CDR problems.

This work uses Laplace and low-rank approximations of the posterior distribution for Bayesian OED. It is limited to the condition that the Laplace approximation is a good approximation of the posterior, and the decay of the eigenvalues of the Hessian is fast. In future work, we plan to address these limitations by developing the DINO surrogates in the context of variational approximation of the posterior distribution that can be highly non-Gaussian, and by using nonlinear dimension reduction and hierarchical approximation of the Hessian. We also plan to extend the computational framework to solve sequential Bayesian OED problems with efficient construction of time-dependent DINO surrogates.

\section*{Acknowledgments}
This work is partially funded by National Science Foundation under grants \#2245674, \#2325631, \#2245111, and \#2233032. The first author acknowledges travel support from the Society for Industrial and Applied Mathematics. We thank helpful discussions with Thomas O'Leary-Roseberry, Dingcheng Luo, and Grant Bruer.

\bibliographystyle{elsarticle-num}
\bibliography{references,peng}

\end{document}